\theoremstyle{plain}
\newtheorem{theorem}{Theorem}
\newtheorem{lemma}[theorem]{Lemma}
\newtheorem{proposition}[theorem]{Proposition}
\newtheorem{corollary}[theorem]{Corollary}
\theoremstyle{definition}
\newtheorem{definition}{Definition}
\newtheorem*{example}{Example}
\newtheorem{assumption}{Assumption}
\title{Existence and Calculation of Optimal Monetary Equilibria on Overlapping Generations Economies}
\author[1,2]{Leandro Lyra Braga Dognini\thanks{I am grateful to Aloisio Araujo, Yves Balasko, Luciano Irineu de Castro, Juan Pablo Gama Torres, Wilfredo Maldonado, Jaime Orrillo, Susan Schommer, and the participants of the Mathematical Economics Seminar at the Institute for Pure and Applied Mathematics (IMPA), the 20th Annual SAET Conference and the 47th Meeting of the Brazilian Econometric Society for all insightful comments on this manuscript. The paper has also benefited greatly from the comments of anonymous referees.}}
\affil[1]{\small Department of Economics, Rio de Janeiro State University}
\affil[2]{\small Legislative Advisory, Federal Senate of Brazil}
\date{May 6, 2026}  
\begin{document}
\maketitle
\begin{abstract}
\noindent A well-known feature of overlapping generations economies is that the First Welfare Theorem fails and equilibrium may be inefficient. The \textcite{Cass_1972} criterion furnishes a necessary and sufficient condition for efficiency, but it does not address the existence of efficient equilibria, and \textcite{CassOkunoZilcha_1979} provide nonexistence examples. A closely related question (known as the \textcite{Hahn_1965} problem) deals with the existence of monetary equilibria. In this paper, I provide sufficient conditions for the existence of optimal monetary equilibria on consumption-loan, non-stationary overlapping generations economies without durable, dividend-paying assets, cash-in-advance constraints, wealth-transfer mechanisms, or transaction costs. Essentially, the economy must be prone to savings. Furthermore, I develop an algorithm to find these optimal monetary equilibria as the limit of nested compact sets. These compact sets are the result of a backward calculation through equilibrium equations departing from the set of optimal monetary equilibria of well-behaved tail economies.
\end{abstract}

\textbf{Keywords}: Economic dynamics, transition path, equilibrium calculation algorithm, computable general equilibrium, consumption-loan economy.

\textbf{JEL}: D51, D58.

\section{Introduction}\label{sec1}

\textsc{Overlapping generations economies} are the workhorse for several applications of general equilibrium theory. Although the existence of equilibrium is ensured under fairly general conditions (e.g., \textcite{BalaskoCassShell_1980, Wilson_1981, OkunoZilcha_1981}), the ``double infinity of households and dated commodities'' \parencite[p. 1002]{Shell_1971} in these economies results in a striking difference from finite general equilibrium economies (e.g., \textcite{ArrowDebreu_1954}).

In overlapping generations economies, not only can equilibrium be indeterminate (e.g., \textcite{KehoeLevine_1984_1,KehoeLevine_1984_2,KehoeLevine_1985,KehoeLevine_1990,MullerWoodford_1988,KehoeLevineMas-ColellZame_1989,Burke_1990, KehoeLevineMas-ColellWoodford_1991}), but the First Welfare Theorem is also not valid; therefore, equilibrium can be inefficient (e.g., \textcite{Samuelson_1958, CassYaari_1966, Shell_1971}). This paper focuses on this second feature of these economies.

Following \textcite{Cass_1972}, the literature has derived necessary and sufficient conditions to characterize efficient equilibria (e.g., \textcite{BalaskoShell_1980,OkunoZilcha_1980, Benveniste_1986,GeanakoplosPolemarchakis_1991,RaoAiyagari_1992}) that are mainly based on the asymptotic behavior of equilibrium price sequences, in the form of
\begin{eqnarray}\label{eqEfficientEq}
    \sum^{+\infty}_{t=1}\frac{1}{\Vert p_{t}\Vert}=+\infty.
\end{eqnarray}

However, these results do not address the matter of the existence of a Pareto optimal equilibrium (i.e., if there is any equilibrium that satisfies (\ref{eqEfficientEq})), and nonexistence examples were provided by \textcite{CassOkunoZilcha_1979}.

Furthermore, in the seminal paper by \textcite{Samuelson_1958}, a Pareto optimal equilibrium not only exists, but it is achieved through the ``social contrivance of money.'' This indicates that the existence of a Pareto optimal equilibrium is closely related to a second fundamental question concerning overlapping generations economies and, more broadly, monetary theory: the existence of a monetary equilibrium (i.e., an equilibrium in which the price of money is positive). 

This matter goes back to \textcite{Hahn_1965} and \textcite{Clower_1967} critique of the Walras-Hicks-Patinkin (e.g., \textcite{Patinkin_1965}) tradition of placing \textit{money-in-the-utility-function}\footnote{See \textcite[pp. 5-8]{OstroyStarr_1990} for a complete discussion on this critique.}, and the question of the existence of a monetary equilibrium became known as the \textit{Hahn problem} (e.g., \textcite[p. 1486]{Bewley_1983}). Several solutions to this problem are found in the literature. For example, in finite horizon economies, cash-in-advance restrictions (e.g., \textcite{Clower_1967}), transaction costs (e.g., \textcite{Kurz_1974, OstroyStarr_1974}), and taxes (e.g., \textcite{Starr_1974, Starr_2003}) are all forms that ensure the existence of a monetary equilibrium.  

Infinite horizon economies are also a natural way to tackle the Hahn problem, as they rule out the zero-price case stemming from the ``terminal scrap value of money'' \parencite[p. 269]{Hayashi_1976}. Then, one possibility is to think of economies with infinite-lived households\footnote{An important property of economies with a finite number of infinite-lived households is that the First Welfare Theorem holds \parencite[pp. 104-106]{Wilson_1981} and, therefore, the matter of existence of an efficient equilibrium is straightforwardly answered once existence itself is established. This is also true in overlapping generations economies with bequests (e.g., \textcite{Barro_1974, RaoAiyagari_1989}) and in economies that possess both finite and infinite-lived households (e.g., \textcite{MullerWoodford_1988}).} (e.g., \textcite{Friedman_1969,Bewley_1972,Bewley_1980,Bewley_1983, GrandmontYounes_1972, GrandmontYounes_1973, Lucas_1980, LucasStokey_1987, BenhabibBull_1983, Levine_1989, HuShmaya_2019}) and another is to deal with overlapping generations economies. 

Therefore, there are papers in the literature on overlapping generations economies\footnote{This paper focuses only on consumption-loan deterministic models. Therefore, results regarding stochastic or production economies are not addressed.} that address these two existence questions (i.e., the existence of a Pareto optimal equilibrium and of a monetary equilibrium), either jointly or separately.

When dealing with the existence of efficient equilibria, the literature has mainly relied on the assumptions of stationary (e.g., \textcite{OkunoZilcha_1983, OkunoZilcha_1983_2,BenvenisteCass_1986,Shitovitz_1988,DucGhiglino_1998}) or nearly stationary economies (e.g., \textcite{Burke_1995}), or on the possibility of wealth transfers (e.g., \textcite{OkunoZilcha_1980,BalaskoShell_1981_1,Burke_1987}). A comprehensive discussion on this topic was written by \textcite[pp. 247-250]{Burke_1995}.

For non-stationary economies, a way of dealing with this possible lack of efficient equilibria is to assume the existence of a durable, dividend-paying asset (e.g., land), as it rules out inefficiency and reestablishes the First Welfare Theorem (e.g., \textcite[p. 773]{Mas-ColellWhinstonGreen_1995} and \textcite{ImrohorogluJoines_1999}).

When dealing with the existence of monetary equilibria, the literature has also relied on the assumptions of stationary (e.g., \textcite{Lucas_1972,Grandmont_1973,Hayashi_1976, RaoAiyagari_1987, Esteban_1994}) or nearly stationary economies (e.g., \textcite{Burke_1999}). 

For non-stationary economies, \textcite{OkunoZilcha_1982} have proven the existence of ``approximately monetary equilibria'' when there is no efficient non-monetary equilibrium. Furthermore, \textcite{Okuno_1978} and \textcite{Santos_1990} have proven the existence of monetary equilibria under cash-in-advance constraints.

Considering this overview, the main contribution of this paper is to provide sufficient conditions for the existence of an efficient monetary equilibrium in a consumption-loan, non-stationary overlapping generations economy with heterogeneous households and without durable, dividend-paying assets, cash-in-advance constraints, wealth-transfer mechanisms, or transaction costs (see Theorems \ref{theoExistenceOfParetoOptJME} and \ref{theoClassModelExistence}). The fundamental assumption is that the economy must have a natural propensity to savings, thus being a \textit{prone-to-savings economy} (see Definition \ref{defProneSavings}).

The second contribution is to show that, in consumption-loan overlapping generations economies representing \textit{long-standing economic activity}, it is adequate to explicitly model the younger period of life of generation $G_{0}$ at $t=0$ and to define the equilibrium set by: (i) imposing market clearing for all periods $t\geq1$; and (ii) limiting the aggregate demand of generation $G_{0}$ at $t=0$ according to the bounds on demographic and productivity dynamics\footnote{A related contribution is to demonstrate that, in overlapping generations economies representing long-standing economic activity, the appropriate definition of Pareto optimality to evaluate equilibrium allocations is one for which aggregate resources are only \textit{implicitly} defined by the respective allocation and, not, \textit{explicitly} carried out in the definition itself.}.

This explicit modeling of the younger period of life of generation $G_{0}$ is a fundamental departure from classical overlapping generations economies, for which the inception of the economy is assumed to occur in the first period $t=1$ (e.g., \textcite{BalaskoShell_1980, OkunoZilcha_1980, KehoeLevine_1985, RaoAiyagari_1992}). In particular, this modeling perspective allows one to endogenize the saving or borrowing decisions of households from generation $G_{0}$ without recourse to a \textit{regressum ad infinitum} setup (e.g., \textcite{Gale_1973}) and to obtain the previously mentioned existence results.

The third contribution is to highlight that efficiency conditions written in the form of (\ref{eqEfficientEq}) can be naturally related to the dynamics of real savings per capita, and in prone-to-savings economies, nonvanishing real savings per capita fully characterize efficient equilibria (see Proposition \ref{propNecAndSuffConditionParetoOpt}).

The last contribution of this paper is related to the actual calculation of Pareto optimal monetary equilibria in non-stationary overlapping generations models (and, particularly, to the calculation of efficient monetary transition paths) based on successive approximations of the reference economy by a sequence of economies that coincide with it up to a finite but arbitrarily large time period. This sequence of auxiliary economies is called a \textit{finitely replicating sequence}. One can, therefore, construct finitely replicating sequences by taking arbitrarily large sections of the reference economy and then properly choosing well-behaved \textit{tail economies}.

In this setting, Theorem \ref{theoAlgorithmFinal} provides a backward calculation algorithm to identify Pareto optimal monetary equilibria through nested compact sets obtained from a finitely replicating sequence. The algorithm is based on the fact that, by departing from the well-known set of Pareto optimal equilibria of the tail economies, we can use equilibrium equations to ``move'' backward and correctly approach at least one efficient monetary equilibrium of the reference economy. Hence, this last contribution is directly related to the literature on computational methods and overlapping generations economies (e.g., \textcite{Auerbach_1987,ImrohorogluJoines_1999, DeNardi_1999})

After this \hyperref[sec1]{Introduction}, the outline of the paper is as follows. \hyperref[sec2]{Section 2} builds the overlapping generations economy and \hyperref[sec3]{Section 3} states the main definitions and the existence theorems. \hyperref[sec4]{Section 4} then focuses on the backward calculation algorithm. Finally, a brief discussion of the results and the motivation of this paper is presented in the \hyperref[sec5]{Conclusion} and all the proofs are stated in the \hyperref[appn]{Appendix}.

\section{The overlapping generations economy}\label{sec2}

The economy $\mathcal{E}$ is a consumption-loan overlapping generations one with discrete time periods $t\in\mathbb{N}_{0}$ and $L_{t}\in\mathbb{N}$ perishable commodities in each period. Households live for two periods, the one they are born into and the next, are indexed by $h\in \mathbb{N}$ and are gathered in generations $G_{t}=\{h\in\mathbb{N}\mid h \textrm{ is born in period }t\}$, $t\geq0$, according to their period of birth. Generation $G_{t}$ has a finite number of households given by $H_{t}\in\mathbb{N}$ and $\alpha_{t}=H_{t+1}/H_{t}$ represents the demographic growth rate.

Household $h\in G_{t}$ is defined through a consumption set $X^{h}=\mathbb{R}^{L_{t}+L_{t+1}}_{+}$, a nonzero endowment $e^{h}=(e^{h}_{t},e^{h}_{t+1})\in\mathbb{R}^{L_{t}+L_{t+1}}_{+}$ and a continuous, non-decreasing, semi-strictly quasiconcave\footnote{Utility function $u(\cdot)$ is \textit{semi-strictly quasiconcave} (see \textcite[p. 87, Definition 2]{ArrowHahn_1971}), if $u(\tilde{c})\geq u(c)$ implies $u(\alpha\tilde{c}+(1-\alpha) c)\geq u(c)$, $0\leq \alpha \leq 1$, and if $u(\tilde{c})>u(c)$ implies $u(\alpha\tilde{c}+(1-\alpha) c)>u(c)$, $0< \alpha \leq1$. In particular, if $u(\cdot)$ is semi-strictly quasiconcave, then it is quasiconcave.} utility function $u^{h}:\mathbb{R}^{L_{t}+L_{t+1}}_{+}\rightarrow\mathbb{R}$ without local maxima. The average endowment of each generation is written, when convenient (e.g., Proposition \ref{propSufficientConditionProneSavings}), as $e^{t}=(e^{t}_{t},e^{t}_{t+1})=\sum_{h\in G_{t}}e^{h}/H_{t}$, $t\geq0$, and $\sum_{t\geq0}e^{t}\in\mathbb{R}^{\infty}_{++}$. I proceed with the following assumption.

\begin{assumption}[Bounded demographic growth rates and endowments]\label{assBoundsPopAndEndownments}
There are $0<\alpha_{\min}<\alpha_{\max}$ and $e_{\max}>0$ such that $\alpha_{t}\in[\alpha_{\min},\alpha_{\max}]$, $t\geq0$, and $\Vert e^{h}\Vert_{\infty}=\max_{1\leq i\leq  L_{t}+L_{t+1}}\vert e^{h}_{i}\vert\leq e_{\max}$, $h\in \mathbb{N}$.
\end{assumption}

Notice that Assumption \ref{assBoundsPopAndEndownments} limits only population growth rates and not the population itself, and the uniform upper bound over endowments can be taken to be arbitrarily large, so that every reasonable demographic and productivity dynamic can be represented\footnote{Although the economy is a consumption-loan one, the endowments can be seen as a measure of households productivity after an inelastic labor supply.}. The equilibrium existence result of \textcite[p. 119, Theorem 5]{ArrowHahn_1971} leads to the next assumption.

\begin{assumption}\label{assResourceRelated}
For $j\geq 0$, all households $h\in \bigcup^{j}_{t=0}G_{t}$ are indirectly resource related\footnote{Following \textcite[117-118]{ArrowHahn_1971}, given $j\geq0$, household $h^{\prime}\in\bigcup^{j}_{t=0}G_{t}$ is \textit{resource related} to household $h^{\prime\prime}\in\bigcup^{j}_{t=0}G_{t}$ if, for every allocation $\{c^{h}\}_{h\in\bigcup^{j}_{t=0}G_{t}}$, $c^{h}\in\mathbb{R}^{L_{t}+L_{t+1}}_{+}$, $h\in G_{t}$, $0\leq t\leq  j$, with $\sum_{0\leq t\leq  j}\sum_{h\in G_{t}}c^{h}\leq \sum_{0\leq t\leq  j}\sum_{h\in G_{t}}e^{h}$, there exists an allocation $\{\tilde{c}^{\, h}\}_{h\in\bigcup^{j}_{t=0}G_{t}}$ and $y\in\mathbb{R}^{\sum^{j+1}_{i=0}L_{i}}_{+}$ such that $\sum_{0\leq t\leq  j}\sum_{h\in G_{t}}\tilde{c}^{\, h}\leq\sum_{0\leq t\leq  j}\sum_{h\in G_{t}}e^{h} +y$,
\begin{eqnarray*}
        u^{h}(\tilde{c}^{\, h})&\geq&u^{h}(c^{h}),\textrm{ for all }h\in\bigcup^{j}_{t=0}G_{t},\\
        u^{h^{\prime\prime}}(\tilde{c}^{\, h^{\prime\prime}})&>&u^{h^{\prime\prime}}(c^{h^{\prime\prime}}),
\end{eqnarray*}
and
\begin{eqnarray*}
    y_{ti}>0\textrm{ only if } e^{h^{\prime}}_{ti}>0,
\end{eqnarray*}
for $i\in\{1,\ldots, L_{t}\}$, $0\leq t\leq j+1$. Household $h^{\prime}$ is \textit{indirectly resource related} to household $h^{\prime\prime}$ if there is a sequence of households $\{h_{m}\}_{0\leq m\leq n}$, $n\geq1$, with $h_{0}=h^{\prime}$, $h_{n}=h^{\prime\prime}$ and $h_{m}$ resource related to $h_{m+1}$, $0\leq m\leq n-1$. 
}.
\end{assumption}

Household $h\in G_{t}$, $t\geq0$, has a Walrasian demand function $x^{h}:\mathbb{R}^{L_{t}+L_{t+1}}_{++}\rightarrow \mathbb{R}^{L_{t}+L_{t+1}}_{+}$, $x^{h}(p_{t},p_{t+1})=(x^{h}_{t}(p_{t},p_{t+1}),x^{h}_{t+1}(p_{t},p_{t+1}))$, $(p_{t},p_{t+1})\in\mathbb{R}^{L_{t}+L_{t+1}}_{++}$, given by 
\begin{eqnarray*}
    x^{h}(p_{t},p_{t+1})=\textrm{argmax}_{c\in\mathbb{R}^{L_{t}+L_{t+1}}_{+}}& u^{h} (c) \\
    \textrm{ s.t. }& (p_{t},p_{t+1})\cdot (c-e^{h})\leq0.
\end{eqnarray*}
The excess demand function in period $t\geq1$, $z_{t}:\mathbb{R}^{L_{t-1}+L_{t}+L_{t+1}}_{++}\rightarrow \mathbb{R}^{L_{t}}$, is 
\begin{eqnarray*}
    z_{t}(p_{t-1},p_{t},p_{t+1})=\sum_{h\in G_{t-1}}(x^{h}_{t}(p_{t-1},p_{t})-e^{h}_{t})+\sum_{h\in G_{t}}(x^{h}_{t}(p_{t},p_{t+1})-e^{h}_{t}),
\end{eqnarray*}
and joint excess demand until period $t\geq1$, $Z_{t}:\mathbb{R}^{\sum^{t+1}_{i=0}L_{i}}_{++}\rightarrow\mathbb{R}^{\sum^{t}_{i=1}L_{i}}$, is
\begin{eqnarray*}
    Z_{t}(p_{0},\ldots,p_{t+1})=(z_{1}(p_{0},p_{1},p_{2}),\ldots,z_{t}(p_{t-1},p_{t},p_{t+1})).
\end{eqnarray*}
It is also convenient to define $\mathcal{Z}_{t}\subseteq \mathbb{R}^{\sum^{t+1}_{i=0}L_{i}}_{++}$, $t\geq1$, as 
\begin{eqnarray*}
    \mathcal{Z}_{t}=\{(p_{0},\ldots,p_{t+1})\in\mathbb{R}^{\sum^{t+1}_{i=0}L_{i}}_{++} \mid p_{01}=1, Z_{t}(p_{0},\ldots,p_{t+1})=0\}.
\end{eqnarray*}
For $c\in\mathbb{R}^{L}$, $L\geq1$, $\Vert c\Vert_{\infty}=\max_{1\leq i\leq L}\vert c_{i}\vert$ and $\Vert c\Vert=\sum^{L}_{i=1}\vert c_{i}\vert$ and, as a topological space, $\mathbb{R}^{\infty}$ is endowed with the product topology. Also, when convenient (e.g., Definition \ref{defParetoOptimalEqSet}), I consider vectors in $\mathbb{R}^{L_{t}+L_{t+1}}$, $t\geq0$, as elements of $\mathbb{R}^{\infty}$, with the appropriate embedding.

Before stating the third assumption, let $\beta=\max\, \{1+\alpha^{-1}_{\min},1+\alpha_{\max}\}$ and $\mathcal{B}_{t}(\sigma)=\{p\in\mathbb{R}^{L_{t}+L_{t+1}}_{++}\mid \sigma \leq p_{i}\leq \sigma^{-1}, 1\leq i \leq L_{t}+L_{t+1}\}$, for $\sigma\in(0,1)$, $t\geq 0$. It follows directly from its definition that the ``box-set'' $\mathcal{B}_{t}(\sigma)\subset \mathbb{R}^{L_{t}+L_{t+1}}_{++}$ is compact. 
\begin{assumption}[Unbounded aggregate demand at border prices]\label{assBoundsPrices}
For all $t\geq0$, there is $\sigma_{t}\in(0,1)$ such that
\begin{eqnarray*}
    \biggr(\frac{p_{t}}{p_{t1}},\frac{p_{t+1}}{p_{t1}}\biggr)\notin \mathcal{B}_{t}(\sigma_{t})\implies \biggr\Vert \sum_{h\in G_{t}} \frac{x^{h}(p_{t},p_{t+1})}{H_{t}}\biggr\Vert_{\infty} > \beta e_{\max},
\end{eqnarray*}
for $(p_{t},p_{t+1})\in \mathbb{R}^{L_{t}+L_{t+1}}_{++}$.
\end{assumption}

Assumption \ref{assBoundsPrices} states that if the prices faced by generation $G_{t}$ move sufficiently close to the border of the corresponding simplex, then its aggregate demand will extrapolate the existing resources on some of the commodities of periods $t$ or $t+1$, considering all possible demographic and productivity trends compatible with Assumption \ref{assBoundsPopAndEndownments}. This kind of boundary behavior of aggregate demand functions is common in the literature (e.g., \textcite[pp. 581-582]{Mas-ColellWhinstonGreen_1995} and \textcite[p. 31]{ArrowHahn_1971}) and does not impose a significant constraint on the model.

It is also convenient to define the following compact sets $\mathcal{K}_{t}\subset\mathbb{R}^{L_{t}}_{++}$, $t\geq0$, and $\mathcal{K}\subset\mathbb{R}^{\infty}_{++}$,
\begin{eqnarray*}
    \mathcal{K}_{0}&=&\{p\in\mathbb{R}^{L_{0}}_{++}\mid \sigma_{0}\leq p_{i}\leq \sigma_{0}^{-1}, 1\leq i\leq L_{0}\}\\
    \mathcal{K}_{t}&=&\biggr\{p\in\mathbb{R}^{L_{t}}_{++}\mid \prod^{t-1}_{j=0}\sigma_{j}\leq p_{i}\leq \biggr(\prod^{t-1}_{j=0}\sigma_{j}\biggr)^{-1}, 1\leq i\leq L_{t}\biggr\},
\end{eqnarray*}
for $t\geq1$, and $\mathcal{K}=\prod^{+\infty}_{t=0}\mathcal{K}_{t}$.

There are still two remarks to be made about the economy $\mathcal{E}$. First, since the number of commodities in each period $L_{t}\in\mathbb{N}$ is arbitrary, utility functions $u^{h}(\cdot)$ are only non-decreasing and semi-strictly quasiconcave, and endowments $e^{h}\in\mathbb{R}^{L_{t}+L_{t+1}}_{+}$ are not necessarily strictly positive, the fact that every household lives for two periods does not impose any real restriction on the model. This is due to an algorithm conceived by \textcite[319-321]{BalaskoCassShell_1980} for relabeling periods and commodities from any overlapping generations economy with finite-lived households in order to pose it as a two-period model. 

The last remark is that, unlike the classical overlapping generations economies (e.g., \textcite{BalaskoShell_1980, OkunoZilcha_1980, KehoeLevine_1985,RaoAiyagari_1992}), I label time periods so that there is no generation that lives for a single period in $t=1$, which is classically referred to as the ``inception of the economy.''

This labeling, along with the definition of the set of equilibria in \hyperref[sec3]{Section 3}, is a distinguishing feature of this model since my aim is to regard the economy $\mathcal{E}$ as the continuation of all possible overlaps that occurred before generation $G_{0}$ was born, due to long-standing economic activity. In particular, it allows us to treat, within a common framework, economies with and without money (e.g., \textcite{BalaskoShell_1981_1}, \textcite{BalaskoShell_1980}), and in the former case, monetary and barter equilibria (e.g., \textcite{CassOkunoZilcha_1979}). This last remark will be extensively discussed in Sections \hyperref[sec3]{3} and \hyperref[sec4]{4}.

\section{The set of equilibria and the existence theorems}\label{sec3}

\begin{definition}\label{defSetEquilibria}
The \textit{set of equilibria} $\mathcal{H}\subset\mathbb{R}^{\infty}_{++}$ of economy $\mathcal{E}$ is
\begin{eqnarray*}
\mathcal{H}=\{p\in\mathbb{R}^{\infty}_{++}\mid p_{01}=1,(p_{0},p_{1})\in\mathcal{B}_{0}(\sigma_{0})\textrm{ and } z_{t}(p_{t-1},p_{t},p_{t+1})=0, t\geq 1\}.
\end{eqnarray*}
\end{definition}

Definition \ref{defSetEquilibria} states that, due to the homogeneity of demand, all price sequences $p=(p_{0},p_{1},\ldots)\in\mathbb{R}^{\infty}_{++}$ in the set of equilibria $\mathcal{H}$ are defined in terms of the first good in period $t=0$. In addition, Definition \ref{defSetEquilibria} imposes market clearing in all periods $t\geq1$. 

However, in period $t=0$, an old generation is ``lacking,'' and Definition \ref{defSetEquilibria} only requires the aggregate demand of generation $G_{0}$ to be bounded according to Assumptions \ref{assBoundsPopAndEndownments} and \ref{assBoundsPrices}. Therefore, there are no market clearing equations to be solved in period $t=0$, but all households of generation $G_{0}$ must have their demands defined through common prices $(p_{0},p_{1})\in\mathbb{R}^{L_{0}+L_{1}}_{++}$ that are compatible with Assumptions \ref{assBoundsPopAndEndownments} and \ref{assBoundsPrices}.

One must pay close attention to this feature of Definition \ref{defSetEquilibria}, as it represents a fundamental departure from the classical equilibrium definition in overlapping generations economies (e.g., \textcite{CassOkunoZilcha_1979}, \textcite{BalaskoShell_1980,BalaskoShell_1981_1}, \textcite{OkunoZilcha_1980}, \textcite{KehoeLevine_1985} and \textcite{RaoAiyagari_1992}).

In the classical model, the economy is indexed by time periods $t\geq1$ and the generation born in $t=0$ is ``present at the inception of the economy (in which case, they live out the balance of their lives during period 1)'' \parencite[p. 283]{BalaskoShell_1980}, with markets clearing in all periods. To illustrate the difference, consider classical economies in which all generations born in or after period $t=1$ are identical to those of $\mathcal{E}$. 

These classical economies can be divided into two types. In the first, there is no fiat or outside money (e.g., \textcite{BalaskoShell_1980}), meaning that there is no durable good that serves only as a store of value. Let $\mathcal{E}^{1}$ be one of these economies. The utility maximization problem of $h\in G^{1}_{0}$ is written as 
\begin{eqnarray*}
    x^{h}_{1}(p_{1})=\textrm{argmax}_{c\in\mathbb{R}^{L_{1}}_{+}}& u^{h} (c) \\
    \textrm{ s.t. }& p_{1}\cdot (c-e^{h}_{1})\leq0,
\end{eqnarray*}
for $p_{1}\in\mathbb{R}^{L_{1}}_{++}$ \parencite[p. 285]{BalaskoShell_1980}, and all other generations $G^{1}_{t}$, $t\geq1$, behave exactly as those from $\mathcal{E}$. If $(p_{1},p_{2},\ldots)\in\mathbb{R}^{\infty}_{++}$ satisfies the market clearing equations in all periods $t\geq1$, then it is an equilibrium of $\mathcal{E}^{1}$.

The second type of classical economies assumes that fiat money exists. Let $\mathcal{E}^{2}$ be one of these economies. In this case, household $h\in G^{2}_{0}$ has an endowment of fiat money $m^{h}\geq0$, with $\sum _{h\in G^{2}_{0}}m^{h}=1$ \parencite[p. 45]{CassOkunoZilcha_1979}. The utility maximization problem of $h\in G^{2}_{0}$ is written as 
\begin{eqnarray*}
    x^{h}_{1}(p_{m},p_{1})=\textrm{argmax}_{c\in\mathbb{R}^{L_{1}}_{+}}& u^{h} (c) \\
    \textrm{ s.t. }& p_{1}\cdot (c-e^{h}_{1})\leq p_{m}m^{h},
\end{eqnarray*}
with $(p_{m},p_{1})\in\mathbb{R}_{+}\times\mathbb{R}^{L_{1}}_{++}$. All other generations $G^{2}_{t}$, $t\geq1$, behave exactly as those from $\mathcal{E}$. If $(p_{m},p_{1},p_{2},\ldots)\in\mathbb{R}_{+}\times\mathbb{R}^{\infty}_{++}$ satisfies the market clearing equations in all periods $t\geq1$, then it is an equilibrium. If $p_{m}=0$, it is called a $\textit{barter equilibrium}$ and, if $p_{m}>0$, it is called a $\textit{monetary equilibrium}$ \parencite[46]{CassOkunoZilcha_1979}\footnote{Notice that a stationary economy $\mathcal{E}^{2}$ with a stationary monetary equilibrium is what \textcite[20]{Gale_1973} termed a \textit{Samuelson model}.}. 

There is an equivalence between the set of equilibria of $\mathcal{E}^{1}$  
and the set of barter equilibria of $\mathcal{E}^{2}$, i.e., by defining 
\begin{eqnarray*}
    \mathcal{H}^{1}=\{(0,p_{1},p_{2},\ldots)\in\mathbb{R}_{+}\times\mathbb{R}^{\infty}_{++}\mid (p_{1},p_{2},\ldots)\textrm{ is an equilibrium of }\mathcal{E}^{1}\}\\
    \mathcal{H}^{2}=\{(p_{m},p_{1},p_{2},\ldots)\in\mathbb{R}_{+}\times\mathbb{R}^{\infty}_{++}\mid (p_{m},p_{1},p_{2},\ldots)\textrm{ is an equilibrium of }\mathcal{E}^{2}\}\\
    \mathcal{H}^{2}_{b}=\{(0,p_{1},p_{2},\ldots)\in\mathbb{R}_{+}\times\mathbb{R}^{\infty}_{++}\mid (0,p_{1},p_{2},\ldots)\textrm{ is a barter equilibrium of }\mathcal{E}^{2}\},
\end{eqnarray*}
we have $\mathcal{H}^{1}=\mathcal{H}^{2}_{b}\subseteq \mathcal{H}^{2}$. Therefore, adding money to the model possibly, though not necessarily, enlarges the set of equilibria\footnote{See \textcite{CassOkunoZilcha_1979} and \textcite{OkunoZilcha_1982} for several counterexamples on conjectures dealing with the existence and optimality of barter and monetary equilibria in classical overlapping generations economies.}.

However, from a modeling perspective, $\mathcal{E}^{1}$ and $\mathcal{E}^{2}$ have some undesirable characteristics. First, unless one is actually modeling the inception of economic activity, what happened in the ``past overlaps'' cannot be fully ignored by the model, since it should ``inherit,'' in a sense that will later be made precise, the consequences of past economic activity.

Second, the economy $\mathcal{E}^{1}$ implicitly assumes that the creation of any form of fiat money is impossible and, therefore, that only barter equilibria exist. This is a strong assumption. On the other hand, the economy $\mathcal{E}^{2}$ allows both barter and monetary equilibria to exist and does not exclude a priori the creation of fiat money. Considering that $\mathcal{H}^{1}\subseteq\mathcal{H}^{2}$ and that monetary equilibria can Pareto dominate barter equilibria (e.g., \textcite{Samuelson_1958}), it is reasonable to assume that the creation of fiat money is a possibility and, in this sense, $\mathcal{E}^{2}$ is a more adequate model.

One could correctly point out that past overlaps are not necessarily ignored by $\mathcal{E}^{2}$, since $m^{h}\geq0$, $h\in G^{2}_{0}$, can be interpreted not as fiat money created, but as fiat money previously acquired; although in this type of model the acquisition process itself is exogenous. Furthermore, this interpretation carries the implicit assumption that all households of generation $G^{2}_{0}$ never consumed, in their previous period of life, more than their endowment could afford (i.e., there are no borrowers in $G^{2}_{0}$). 

Nonetheless, since households are heterogeneous, it is possible that intragenerational trade has occurred in period $t=0$ and some households of $G^{2}_{0}$ have become borrowers of their own cohort (i.e., along with fiat or outside money, generation $G^{2}_{0}$ also carries IOU or inside money). In order to allow this sort of economic activity in the model, one could simply let $m^{h}\in \mathbb{R}$ and interpret it not only as fiat money but also as private debt obligations standing in period $t=1$. Let $\mathcal{E}^{3}$ be one of these economies in which $m^{h}\in\mathbb{R}$, $h\in G^{3}_{0}$.

Definition \ref{defSetEquilibria} closely relates economy $\mathcal{E}$ and this third type due to the following reasoning. Looking back at the economy $\mathcal{E}$, let $(p_{0},p_{1},\ldots)\in\mathcal{H}$. Then, for all $h\in G_{0}$, let $h^{\prime}$ be a household of $G^{3}_{0}$ with $u^{h^{\prime}}:\mathbb{R}^{L_{1}}_{+}\rightarrow\mathbb{R}$ given by $u^{h^{\prime}}(c)=u^{h}(x^{h}_{0}(p_{0},p_{1}),c)$, $e^{h^{\prime}}_{1}=e^{h}_{1}$, and $m^{h^{\prime}}=p_{0}\cdot(e^{h}_{0}-x^{h}_{0}(p_{0},p_{1}))\in\mathbb{R}$. Then, let $\{u^{h^{\prime}}(\cdot),e^{h^{\prime}}_{1},m^{h^{\prime}}\}_{h\in G_{0}}$ define the generation $G^{3}_{0}$ of economy $\mathcal{E}^{3}$, which has all generations born in or after period $t=1$ identical to those of $\mathcal{E}$. The definitions of $x^{h^{\prime}}(\cdot)$, $e^{h^{\prime}}_{1}$ and $m^{h^{\prime}}$ imply that
\begin{eqnarray*}
    x^{h^{\prime}}_{1}(1,p_{1})=x^{h}_{1}(p_{0},p_{1}),
\end{eqnarray*}
and, therefore, market clearing equations in periods $t\geq1$ for $\mathcal{E}$ imply $(1,p_{1},p_{2},\ldots)\in\mathcal{H}^{3}$, with
\begin{eqnarray*}
    \mathcal{H}^{3}=\{(p_{m},p_{1},p_{2},\ldots)\in\mathbb{R}_{+}\times\mathbb{R}^{\infty}_{++}\mid (p_{m},p_{1},p_{2},\ldots)\textrm{ is an equilibrium of }\mathcal{E}^{3}\}.
\end{eqnarray*}
If $m^{h}=0$, $h\in G^{3}_{0}$, then $(1,p_{1},p_{2},\ldots)\in\mathcal{H}^{3}$ is equivalent to an equilibrium of an economy without fiat money $\mathcal{E}^{1}$ or to a barter equilibrium of an economy with fiat money $\mathcal{E}^{2}$. If $m^{h}\geq0$, $h\in G^{3}_{0}$, and $\sum_{h\in G^{3}_{0}}m^{h}>0$, then $(1,p_{1},p_{2},\ldots)\in\mathcal{H}^{3}$ is equivalent to a monetary equilibrium of an economy $\mathcal{E}^{2}$. 

However, several other configurations do not match the classical economies. For example, if $m^{h}<0$, for some $h\in G^{3}_{0}$, and $\sum_{h\in G^{3}_{0}}m^{h}>0$, then $(1,p_{1},p_{2},\ldots)\in\mathcal{H}^{3}$ defines an equilibrium that can be seen as an economy in which fiat money and IOU are simultaneously held by households from generation $G^{3}_{0}$. The fact that $\mathcal{H}$ gathers all these different configurations, which include the classical ones, is the fundamental property of $\mathcal{E}$.

This reasoning also reveals that every equilibrium $(p_{0},p_{1},\ldots)\in\mathcal{H}$ defines an equilibrium of an economy $\mathcal{E}^{3}$, $(1,p_{1},p_{2},\ldots)\in\mathcal{H}^{3}$, with $\mathcal{E}^{3}$ inheriting utility functions $u^{h^{\prime}}(\cdot)$ and the values of fiat money or private debt obligations  $m^{h^{\prime}}\in\mathbb{R}$, $h\in G^{3}_{0}$, from past economic activity (i.e., past overlaps). Although all the past overlaps are not explicitly modeled, Definition \ref{defSetEquilibria} requires that the consumption bundles of generation $G_{0}$ must be derived from competitive prices $(p_{0},p_{1})$ that are compatible with Assumptions \ref{assBoundsPopAndEndownments} and \ref{assBoundsPrices}.

But if economies $\mathcal{E}$ and $\mathcal{E}^{3}$ are related, why not work directly with $\mathcal{E}^{3}$ instead of $\mathcal{E}$? First, because $\mathcal{E}^{3}$ was derived from $\mathcal{E}$ – it is a sort of ``chopped'' part of it. Second, because $\mathcal{E}$ explicitly models the fundamentals that allow households, including those born in period $t=0$, to become borrowers or savers. It can be seen as a way to make the saving or borrowing decisions of all generations endogenous, without recourse to a \textit{regressum ad infinitum} setup (e.g., \textcite{Gale_1973}).

Third, because the fundamental question that is being answered when dealing with the economy $\mathcal{E}$ is: What equilibrium prices are consistent with long-standing economic activity? Or, stated differently, if economic activity started long ago and, therefore, we are not at the inception of the economy, what prices are compatible with equilibrium under perfect foresight?

One could still argue that it is, nevertheless, possible to define an economy $\mathcal{E}^{3}$ by arbitrarily choosing the number of households born in period $t=0$, along with their utility functions, endowments, and fiat money or private debt obligations $m^{h}\in\mathbb{R}$. So why not proceed in this manner and study the set of equilibria of this economy? The main reason is technical. If $m^{h}<0$, for some $h\in G^{3}_{0}$, the budget set becomes empty at certain prices, and the existence of equilibrium is no longer guaranteed. If $m^{h}\geq0$, for all $h\in G^{3}_{0}$, then this is again the first or the second type of classical economies (i.e., $\mathcal{E}^{1}$ or $\mathcal{E}^{2}$).

Why, then, did the classical models not adopt the setup of $\mathcal{E}$ and the definition of the equilibrium set $\mathcal{H}$? I believe the answer to this question involves at least two points. First, the classical models were not looking for results like those in this paper (e.g., Theorem \ref{theoAlgorithmFinal}), in which one needs to take conveniently chosen overlapping generations economies (which will later be called \textit{tail economies} by Definition \ref{defFiniteRep}) and use them to replace all generations born after a given period in a reference economy. For this sort of replacement to be possible, the older generation that is also alive at $t=0$ in this conveniently chosen tail economy cannot be explicitly modeled.

Second, because the Pareto optimal definition for consumption-loan economies commonly assumes some concept of \textit{feasibility} (e.g., \textcite[798-800]{OkunoZilcha_1980}), which presumes fixed aggregate resources for the economy. However, when looking at the allocations implied by $\mathcal{H}$, aggregate resources are only fixed for $t\geq1$, in which case they are given by $\sum_{h\in G_{t-1}}e^{h}_{t}+\sum_{h\in G_{t}}e^{h}_{t}$. 

In order to have fixed aggregate resources in $t=0$ and, consequently, to adopt a concept of feasibility, one could impose a market clearing equation in $t=0$, $\sum_{h\in G_{0}} x^{h}(p_{0},p_{1})=\sum_{h\in G_{0}}e^{h}_{0}$. But then we would obtain precisely the classical model of an economy without fiat money $\mathcal{E}^{1}$ (for the coincidence to be complete, one would also need to relabel periods $t=0$ and $t=1$ as the initial period).

Notice, however, that the concept of Pareto optimality can be stated by looking solely at the allocation (e.g., \textcite[286]{BalaskoShell_1980}), without any explicit reference to matters of feasibility. In this case, aggregate resources are implicitly defined by the allocation itself, allowing us to use the Pareto optimality concept to characterize elements of $\mathcal{H}$. This leads to the following definition.

\begin{definition}\label{defParetoOptimalEqSet}
The \textit{subset of Pareto optimal equilibria} $\mathcal{H}^{PO}\subseteq\mathcal{H}$ is
\begin{eqnarray*}
    \mathcal{H}^{PO}=\{p\in\mathcal{H}\mid \nexists \{y^{h}\}_{h\geq1}, y^{h}\in\mathbb{R}^{L_{t}+L_{t+1}}_{+}, h\in G_{t},t\geq0, \textrm{ s.t. } \sum_{t\geq0}\sum_{h\in G_{t}}(y^{h}-x^{h}(p_{t},p_{t+1}))=0\\
    \textrm{and }u^{h}(y^{h})\geq u^{h}(x^{h}(p_{t},p_{t+1})), h\in G_{t}, t\geq0, \textrm{ with at least one strict inequality}\}.
\end{eqnarray*}
\end{definition}

Definition \ref{defParetoOptimalEqSet} identifies the subset of $\mathcal{H}$ for which there is no possibility of redistribution of aggregate resources leading to a Pareto improvement. Furthermore, due to Definition \ref{defSetEquilibria}, aggregate resources implicitly defined by elements of $\mathcal{H}$ can vary only in period $t=0$.

There is a fundamental remark about the subset of Pareto optimal equilibria $\mathcal{H}^{PO}$. Notice that for $p\in\mathcal{H}$, it is always possible to build another overlapping generations economy $\mathcal{E}_{p}$, with the same households from $\mathcal{E}$, but with endowments given by $e^{h}_{p}=x^{h}(p_{t},p_{t+1})$, $h\in G_{t}$, $t\geq0$. In this economy $\mathcal{E}_{p}$, the new demand of household $h\in G_{t}$, $t\geq0$, is given by
\begin{eqnarray*}
    x^{h}_{p}(q_{t},q_{t+1})=\textrm{argmax}_{c\in\mathbb{R}^{L_{t}+L_{t+1}}_{+}}& u^{h} (c) \\
    \textrm{ s.t. }& (q_{t},q_{t+1})\cdot (c-e^{h}_{p})\leq0.
\end{eqnarray*}
for $(q_{t},q_{t+1})\in\mathbb{R}^{L_{t}+L_{t+1}}_{++}$. We straightforwardly see that $ x^{h}_{p}(p_{t},p_{t+1})=e^{h}_{p}$. In particular, $\sum_{t\geq0}\sum_{h\in G_{t}}x^{h}_{p}(p_{t},p_{t+1})-e^{h}_{p}=0$ and, therefore, $p\in\mathbb{R}^{\infty}_{++}$ is an equilibrium of the economy $\mathcal{E}_{p}$ (actually, an \textit{autarkic} or \textit{no-trade} equilibrium), with market clearing equations satisfied at all periods, including $t=0$ (i.e., $\sum_{h\in G_{0}}x^{h}_{p0}(p_{0},p_{1})-e^{h}_{p0}=0$). 

By relabeling periods $t=0$ and $t=1$ as the initial period, notice that $\mathcal{E}_{p}$ is actually the classical setup of an overlapping generations economy without fiat money $\mathcal{E}^{1}$. Also, $p\in\mathcal{H}^{PO}$ if, and only if, $p$ is a Pareto optimal equilibrium of $\mathcal{E}_{p}$.

Then, if the \textcite{Cass_1972} criterion can be applied to characterize the set of Pareto optimal equilibria of $\mathcal{E}_{p}$, the following equivalence holds
\begin{eqnarray*}
    p\in\mathcal{H}^{PO}\iff \sum^{+\infty}_{t=0}\frac{1}{H_{t}\Vert p_{t}\Vert}=+\infty,
\end{eqnarray*}
and, if this reasoning can be applied for all $p\in\mathcal{H}$, then
\begin{eqnarray}\label{AlternativeDefParetoOptimal}
    \mathcal{H}^{PO}=\biggr\{(p_{0},p_{1},\ldots)\in\mathcal{H}\mid \sum^{+\infty}_{t=0}\frac{1}{H_{t}\Vert p_{t}\Vert}=+\infty\biggr\}.
\end{eqnarray}
This leads to the following assumption.
\begin{assumption}[\textcite{Cass_1972} criterion]\label{assCassCriterion}
    Let $p\in\mathcal{H}$. Then, $p\in\mathcal{H}^{PO}$ if, and only if, 
    \begin{eqnarray*}
         \sum^{+\infty}_{t=0}\frac{1}{H_{t}\Vert p_{t}\Vert}=+\infty.
    \end{eqnarray*}
\end{assumption}

As described in the \hyperref[sec1]{Introduction}, following \textcite{Cass_1972}, \textcite{BalaskoShell_1980,OkunoZilcha_1980,Benveniste_1986,GeanakoplosPolemarchakis_1991} derived conditions under which Assumption \ref{assCassCriterion} can be sustained\footnote{All the existence results in Section \ref{sec3} are, in fact, results on the existence of equilibrium price sequences that satisfy the Cass criterion (see the proof of Theorem \ref{theoClassModelExistence}, where this point is raised). Therefore, Assumption \ref{assCassCriterion} could require only the sufficiency of the Cass criterion for optimality (e.g., when the conditions of Theorem 3A from \textcite[p. 803]{OkunoZilcha_1980} are satisfied), and this would still ensure the existence of \textit{optimal monetary equilibria}. However, in this case, the set in (\ref{AlternativeDefParetoOptimal}) would be a subset of $\mathcal{H}^{PO}$. Furthermore, if this sort of assumption is fully dropped, then the results in Section \ref{sec3} still furnish the existence of \textit{monetary equilibria}.} (e.g., uniform bounds on the Gaussian curvature of the indifference curves \parencite[p. 296]{BalaskoShell_1980} and bounded coefficients of smoothness and strictness \parencite[p. 803]{OkunoZilcha_1980}).

Since $\mathcal{H}$ is seen as the set of equilibria that represents long-standing economic activity, notice that, under Assumption \ref{assCassCriterion}, $\mathcal{H}^{PO}$ can be regarded as the set of equilibria that represents long-standing, efficient economic activity, with $\mathcal{E}$ being just part of a larger classical economy whose inception was in period $t=T$, $T\leq -1$.

This is because, in this larger classical economy, competitive markets under perfect foresight yield a Pareto optimal equilibrium through prices $(p_{-T},\ldots,p_{-1},p)\in \mathbb{R}^{\infty}_{++}$, and $ \sum^{\infty}_{t=-T} (H_{t}\Vert p_{t}\Vert)^{-1}=+\infty$ because of the \textcite{Cass_1972} criterion. Also, due to the equivalence
\begin{eqnarray*}
    \sum^{\infty}_{t=-T} \frac{1}{H_{t}\Vert p_{t}\Vert}=+\infty\iff \sum^{\infty}_{t=0} \frac{1}{H_{t}\Vert p_{t}\Vert}=+\infty,
\end{eqnarray*}
every efficient equilibrium of the larger economy provides an element of $\mathcal{H}^{PO}$. Since no specific form of the larger economy is given, $\mathcal{H}^{PO}$ contains the projection of the sets of efficient equilibria of all possible larger classical economies that satisfy Assumptions \ref{assBoundsPopAndEndownments}--\ref{assCassCriterion}. Notice also that every element of $\mathcal{H}^{PO}$ lies within an efficient equilibrium of some larger classical economy that satisfies Assumptions \ref{assBoundsPopAndEndownments}--\ref{assCassCriterion}.

Before moving on to the results of this section, the following definition of short-sighted equilibrium for the economy $\mathcal{E}$ is needed.
\begin{definition}\label{defSetJSightedEquilibria}
The \textit{set of $j$-sighted equilibria} $\mathcal{V}_{j}\subset\mathbb{R}^{\sum^{j+1}_{i=0}L_{i}}_{++}$, $j\geq1$, of economy $\mathcal{E}$ is
\begin{eqnarray*}
\mathcal{V}_{j}=\{p\in\mathbb{R}^{\sum^{j+1}_{i=0}L_{i}}_{++}\mid p\in\mathcal{Z}_{j}, (p_{0},p_{1})\in\mathcal{B}_{0}(\sigma_{0}),(p_{j}/p_{j1},p_{j+1}/p_{j1})\in\mathcal{B}_{j}(\sigma_{j})\}.
\end{eqnarray*}
\end{definition}

Definition \ref{defSetJSightedEquilibria} focuses only on ``short-term'' market clearing, limited to a finite number of periods. Furthermore, prices $(p_{0},p_{1})$ and $(p_{j},p_{j+1})$ that the generations on the edges face must be compatible with past and future overlaps according to Assumptions \ref{assBoundsPopAndEndownments} and \ref{assBoundsPrices}. For the next lemma, let $\mathcal{V}^{\infty}_{j}=\mathcal{V}_{j}\times\mathbb{R}^{\infty}_{+}$, $j\geq1$, be the natural embedding of $\mathcal{V}_{j}$ into $\mathbb{R}^{\infty}_{+}$.

\begin{lemma}\label{lemmaVjInfinite}
Under Assumptions \ref{assBoundsPopAndEndownments}--\ref{assBoundsPrices}, the set of j-sighted equilibria $\mathcal{V}_{j}\subset\mathbb{R}^{\sum^{j+1}_{i=0}L_{i}}_{++}$, $j\geq1$, is non-empty and compact, $\mathcal{V}_{j}\subseteq\prod^{j+1}_{t=0}\mathcal{K}_{t}$ and $\mathcal{V}^{\infty}_{j+1}\subseteq \mathcal{V}^{\infty}_{j}$.
\end{lemma}

Lemma \ref{lemmaVjInfinite} states that $\{\mathcal{V}^{\infty}_{j}\}_{j\geq1}$ is a nested sequence of non-empty sets. Let $\lim_{j\rightarrow\infty}\mathcal{V}^{\infty}_{j}=\bigcap_{j\geq1}\mathcal{V}^{\infty}_{j}$. The next result clarifies the relation between the sequence $\{\mathcal{V}^{\infty}_{j}\}_{j\geq1}$ and the set of equilibria $\mathcal{H}$.

\begin{theorem}\label{theoEqSetCompact}
Under Assumptions \ref{assBoundsPopAndEndownments}--\ref{assBoundsPrices}, the set of equilibria $\mathcal{H}\subset\mathbb{R}^{\infty}_{++}$ is non-empty and compact, with $\mathcal{H}=\lim_{j\rightarrow\infty}\mathcal{V}^{\infty}_{j}\subseteq\mathcal{K}$. 
\end{theorem}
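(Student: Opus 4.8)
The plan is to prove the set identity $\mathcal{H}=\bigcap_{j\geq1}\mathcal{V}^{\infty}_{j}$ and then read off non-emptiness and compactness from it together with Lemma \ref{lemmaVjInfinite}. The inclusion $\bigcap_{j}\mathcal{V}^{\infty}_{j}\subseteq\mathcal{H}$ is the routine direction. If $p$ lies in the intersection, then membership in any single $\mathcal{V}^{\infty}_{j}$ already gives $p_{01}=1$ and $(p_{0},p_{1})\in\mathcal{K}_{0}\times\mathcal{K}^{0}_{1}$. To obtain market clearing at a fixed period $t\geq1$, I would choose any $j\geq t$: then $p\in\mathcal{V}^{\infty}_{j}$ forces the truncation $(p_{0},\ldots,p_{j+1})$ into $\mathcal{Z}_{j}$, so $Z_{j}=0$ and in particular $z_{t}=0$. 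Letting $t$ range over all $t\geq1$ yields market clearing everywhere, hence $p\in\mathcal{H}$.

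The substantive direction is $\mathcal{H}\subseteq\bigcap_{j}\mathcal{V}^{\infty}_{j}$, where I must show that the market-clearing equations defining $\mathcal{H}$ already force the edge and tail price bounds built into each $\mathcal{V}^{\infty}_{j}$. Fix $p\in\mathcal{H}$ and $t\geq1$. The per-component resources available in period $t$ and in period $t+1$ are bounded by $\beta e_{\max}H_{t}$: indeed $H_{t-1}+H_{t}=H_{t}(1+\alpha^{-1}_{t-1})\leq\beta H_{t}$ and $H_{t}+H_{t+1}=H_{t}(1+\alpha_{t})\leq\beta H_{t}$ by Assumption \ref{assBoundsPopAndEndownments} and the definition of $\beta$. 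Since all demands are non-negative, market clearing at $t$ and at $t+1$ bounds the period-$t$ and period-$(t+1)$ components of $G_{t}$'s demand by the corresponding resources, so $\Vert\sum_{h\in G_{t}}x^{h}(p_{t},p_{t+1})/H_{t}\Vert_{\infty}\leq\beta e_{\max}$. The contrapositive of Assumption \ref{assBoundsPrices} then forces $(p_{t}/p_{t1},p_{t+1}/p_{t1})\in\mathcal{K}^{0}_{t}\times\mathcal{K}^{0}_{t+1}$. Taking $t=j$ gives the edge condition, so the truncation lies in $\mathcal{V}_{j}$.

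It remains to recover the tail condition $p_{t}\in\mathcal{K}_{t}$ for $t\geq j+2$ under the global normalization $p_{01}=1$. I would run a short induction: starting from $p_{1}\in\mathcal{K}^{0}_{1}$ and assuming $p_{t}\in\mathcal{K}^{t-1}_{t}$, the bound $p_{t1}\in[\sigma^{t},\sigma^{-t}]$ combined with $p_{t+1}/p_{t1}\in\mathcal{K}^{0}_{t+1}$ gives $p_{(t+1)i}\in[\sigma^{t+1},\sigma^{-(t+1)}]$, i.e. $p_{t+1}\in\mathcal{K}^{t}_{t+1}\subseteq\mathcal{K}_{t+1}$. This propagates the local per-generation bounds forward into the global bounds $p_{t}\in\mathcal{K}^{t-1}_{t}\subseteq\mathcal{K}_{t}$, so every condition defining $\mathcal{V}^{\infty}_{j}$ holds and $p\in\bigcap_{j}\mathcal{V}^{\infty}_{j}$. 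With the identity established, Lemma \ref{lemmaVjInfinite} presents $\{\mathcal{V}^{\infty}_{j}\}$ as a nested family of non-empty compact subsets of the compact product $\prod^{\infty}_{t=0}\mathcal{K}_{t}$; the finite intersection property gives $\bigcap_{j}\mathcal{V}^{\infty}_{j}\neq\emptyset$, and an intersection of compact sets is compact, so $\mathcal{H}$ inherits both properties.

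I expect the reverse inclusion to be the main obstacle, precisely the step showing that market clearing alone pins down the price bounds. The two delicate points are the resource bookkeeping tying $G_{t}$'s per-capita demand to $\beta e_{\max}$ through the demographic bounds, and the passage from the \emph{local} per-generation normalization implicit in Assumption \ref{assBoundsPrices} to the \emph{global} normalization $p_{01}=1$ used to define $\mathcal{H}$ and $\mathcal{V}^{\infty}_{j}$ — the role played by the geometric induction above. These steps reprise the boundedness reasoning underlying Lemma \ref{lemmaVjInfinite}.
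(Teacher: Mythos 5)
Your proposal is correct and takes essentially the same route as the paper's proof: both establish the identity $\mathcal{H}=\bigcap_{j\geq1}\mathcal{V}^{\infty}_{j}$ by using market clearing plus Assumption \ref{assBoundsPopAndEndownments} to bound generation $G_{t}$'s average demand by $\beta e_{\max}$, invoking the contrapositive of Assumption \ref{assBoundsPrices} to get the local price bounds, propagating them to $p_{t}\in\mathcal{K}_{t}$ by the geometric induction, and then concluding with Cantor's Intersection Theorem. The only difference is presentational: the paper delegates the boundedness and induction steps to the proof of Lemma \ref{lemmaVjInfinite}, whereas you spell them out (in a slightly cleaner form, dropping the non-negative demand of the other generation from the market-clearing identity).
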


In order to better characterize the subset of Pareto optimal equilibria $\mathcal{H}^{PO}\subseteq\mathcal{H}$, yet another definition is needed.

\begin{definition}\label{defRealSavings}
The \textit{real savings} $s^{h}:\mathbb{R}^{L_{t}+L_{t+1}}_{++}\rightarrow \mathbb{R}$ of household $h\in G_{t}$, $t\geq0$, is
    \begin{eqnarray*}
       s^{h}(p_{t},p_{t+1}) = \frac{p_{t}}{\Vert p_{t}\Vert}\cdot (e^{h}_{t}-x^{h}_{t}(p_{t},p_{t+1})).
    \end{eqnarray*}
\end{definition}

Definition \ref{defRealSavings} can be understood through the households budget constraint. Since the utility function of $h\in G_{t}$, $t\geq0$, is locally nonsatiated, the budget constraint holds with equality (i.e., Walras' law) and all resources not consumed when young $p_{t}\cdot (e^{h}_{t}-x^{h}_{t}(p_{t},p_{t+1}))>0$ are destined for old-age consumption $p_{t+1}\cdot (x^{h}_{t+1}(p_{t},p_{t+1})-e^{h}_{t+1})>0$. This description assumes a positive savings level, but we could have zero or negative savings depending on the price levels. 

The next remark is that, to obtain a measure of real savings, a reference basket of commodities at time $t$ must be defined since $p_{t}\cdot (e^{h}_{t}-x^{h}_{t}(p_{t},p_{t+1}))$ is homogeneous of degree one and varies with the price level (i.e., it is a nominal value). A possibility is the basket containing one unit of all existing commodities $(1,\ldots,1)\in\mathbb{R}^{L_{t}}_{++}$, which costs $\sum^{L_{t}}_{i=1}p_{ti}=\Vert p_{t}\Vert$. Dividing $p_{t}\cdot (e^{h}_{t}-x^{h}_{t}(p_{t},p_{t+1}))$ by $\Vert p_{t}\Vert$ thus yields the desired measure of real savings. The next result gives a first relation between real savings and the subset of Pareto optimal equilibria $\mathcal{H}^{PO}$.

\begin{lemma}\label{lemmaFirstParetoOptEquilibria}
Under Assumptions \ref{assBoundsPopAndEndownments}, \ref{assBoundsPrices} and \ref{assCassCriterion}, if $p\in\mathcal{H}/\mathcal{H}^{PO}$, then 
\begin{eqnarray*}
    \lim_{t\rightarrow\infty}\sum_{h\in G_{t}}\frac{s^{h}(p_{t},p_{t+1})}{H_{t}}=0.
\end{eqnarray*}
\end{lemma}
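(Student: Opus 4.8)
The plan is to show that the \emph{nominal} value of aggregate savings, $\|p_{t}\|\sum_{h\in G_{t}}s^{h}(p_{t},p_{t+1})$, is conserved across periods, and then let the \textcite{Cass_1972} criterion annihilate the per-capita quantity in the limit. Write $S_{t}=\sum_{h\in G_{t}}s^{h}(p_{t},p_{t+1})$ for the aggregate real savings of generation $G_{t}$. The whole argument is a chaining of two identities already available for any $p\in\mathcal{H}$: market clearing and the individual budget constraint.

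First I would rewrite $S_{t}$ in nominal terms. By Definition \ref{defRealSavings}, $\|p_{t}\|S_{t}=p_{t}\cdot\sum_{h\in G_{t}}(e^{h}_{t}-x^{h}_{t}(p_{t},p_{t+1}))$. Since $p\in\mathcal{H}$, the market-clearing equation $z_{t}(p_{t-1},p_{t},p_{t+1})=0$ lets me replace the net supply of the young cohort by the net demand of the old cohort, giving $\|p_{t}\|S_{t}=p_{t}\cdot\sum_{h\in G_{t-1}}(x^{h}_{t}(p_{t-1},p_{t})-e^{h}_{t})$. Next I would invoke the budget constraints of the old cohort $G_{t-1}$: because each $u^{h}$ has no local maxima, the budget constraint binds, so $p_{t}\cdot(x^{h}_{t}(p_{t-1},p_{t})-e^{h}_{t})=p_{t-1}\cdot(e^{h}_{t-1}-x^{h}_{t-1}(p_{t-1},p_{t}))$ for each $h\in G_{t-1}$. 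Summing over $h\in G_{t-1}$ and recognizing the right-hand side as $\|p_{t-1}\|S_{t-1}$ yields the conservation law $\|p_{t}\|S_{t}=\|p_{t-1}\|S_{t-1}$, whence $\|p_{t}\|S_{t}=\|p_{0}\|S_{0}$ for every $t\geq0$.

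Finally I would conclude by dividing through by $H_{t}$, obtaining $S_{t}/H_{t}=\|p_{0}\|S_{0}\cdot(H_{t}\|p_{t}\|)^{-1}$. Since $p\in\mathcal{H}/\mathcal{H}^{PO}$, Assumption \ref{assCassCriterion} forces $\sum_{t\geq0}(H_{t}\|p_{t}\|)^{-1}<+\infty$; the general term of a convergent series of nonnegative reals must vanish, so $(H_{t}\|p_{t}\|)^{-1}\to0$. As $\|p_{0}\|S_{0}$ is a fixed finite constant, this gives $\lim_{t\to\infty}S_{t}/H_{t}=0$, which is exactly the claim.

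The step I expect to be the main obstacle is not analytic but one of careful bookkeeping: lining up the time indices so that the period-$t$ market-clearing identity is matched with the period-$(t-1)$ budget constraint of the \emph{same} generation $G_{t-1}$, and making sure non-satiation (no local maxima) is what justifies replacing ``$\leq$'' by ``$=$'' in each household's budget identity. Once the telescoping identity $\|p_{t}\|S_{t}=\|p_{0}\|S_{0}$ is secured, the \textcite{Cass_1972} criterion does the remaining work immediately, so no estimation or limiting argument beyond the divergence/convergence dichotomy of the harmonic-type series is required.
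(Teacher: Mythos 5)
Your proposal is correct and is essentially the paper's own argument: both derive the conservation law for nominal aggregate savings by combining the period-$t$ market-clearing equation with the binding (Walras' law) budget constraints of generation $G_{t-1}$, so that $\Vert p_{t}\Vert S_{t}=\Vert p_{0}\Vert S_{0}$ for all $t\geq0$, and then invoke Assumption \ref{assCassCriterion} to conclude that $p\in\mathcal{H}/\mathcal{H}^{PO}$ makes $\sum_{t\geq0}(H_{t}\Vert p_{t}\Vert)^{-1}$ converge, whence its general term, and therefore $S_{t}/H_{t}$, vanishes. The bookkeeping of time indices you flagged as the delicate point is handled exactly as you describe, and no additional ingredient is needed.
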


Lemma \ref{lemmaFirstParetoOptEquilibria} states that an inefficient equilibrium depresses real savings per capita, leading to an economic scenario in which real interperiod wealth transfers become smaller for subsequent generations, and intraperiod trade or autarky are all that remain in the long run. In order to find conditions under which this depressed real savings dynamic fully characterizes inefficient equilibria, I restrict the analysis to economies in which positive real savings per capita can be considered a natural outcome.

\begin{definition}\label{defProneSavings}
The economy $\mathcal{E}$ is \textit{prone to savings} if there are $\varepsilon>0$, $\delta>0$ such that, for $t\geq0$, $(p_{t},p_{t+1})\in\mathbb{R}^{L_{t}+L_{t+1}}_{++}$ and $(p_{t}/p_{t1},p_{t+1}/p_{t1})\in\mathcal{B}_{t}(\sigma_{t})$,
\begin{eqnarray*}
    \sum_{h\in G_{t}} \frac{s^{h}(p_{t},p_{t+1})}{H_{t}}\leq \delta\implies \frac{\Vert p_{t} \Vert}{\Vert p_{t+1} \Vert} \leq \frac{\alpha_{t}}{1+\varepsilon}.
\end{eqnarray*}
\end{definition}

Definition \ref{defProneSavings} states that the economy $\mathcal{E}$ is prone to savings if, for prices compatible with the bounds on endowments and demographic dynamic, low real savings per capita from generation $G_{t}$, $t\geq0$, imply a low real return rate relative to the population growth rate. Conversely, if the real return rate is sufficiently high compared to the demographic growth rate, then the real savings per capita will also be high. 

This is because the utility maximization problem of household $h\in G_{t}$, $t\geq0$, can be written as
\begin{eqnarray*}
    \max_{(c,s)\in\mathbb{R}^{L_{t}+L_{t+1}}_{+}\times \mathbb{R}} & u^{h}(c)\\
    \textit{s.t.} & \frac{p_{t}}{\Vert p_{t}\Vert}\cdot (c_{t}-e^{h}_{t})+ s=0\\
    & \frac{p_{t+1}}{\Vert p_{t+1}\Vert}\cdot (c_{t+1}-e^{h}_{t+1})\leq\frac{\Vert p_{t}\Vert}{\Vert p_{t+1}\Vert} s= r_{t}s,
\end{eqnarray*}
with $r_{t}=\Vert p_{t}\Vert/\Vert p_{t+1}\Vert$ representing the real return rate faced by generation $G_{t}$. 

We can infer from Definition \ref{defProneSavings} that high demographic growth rates favor the emergence of prone-to-savings economies. We can also guess that households' preferences that place a high value on old-age consumption and skewed endowment distribution toward the younger periods of life are both situations that favor the emergence of prone-to-savings economies, since in these cases larger real return rates tend to increase real savings per capita.

Therefore, a prone-to-savings economy can be thought of as one in which generations have, given the demographic dynamic, a meaningful preference for old age consumption or a sufficiently skewed endowment distribution towards the earlier periods of life. In these economies, there is a natural propensity to save, and the real savings level becomes a signal of whether the terms of trade between younger and older periods of life are unfavorable, since unnaturally low savings could only be achieved through unfavorable real return rates. 

The next result formalizes this intuitive description by providing a sufficient condition for the economy $\mathcal{E}$ to be prone to savings when utility functions display constant elasticity of substitution (CES) and are homogeneous within each generation.

\begin{proposition}\label{propSufficientConditionProneSavings}
Under Assumptions \ref{assBoundsPopAndEndownments} and \ref{assBoundsPrices}, suppose there are $(\lambda_{t},\tilde{\lambda}_{t+1})\in\mathbb{R}^{L_{t}+L_{t+1}}_{+}$, $t\geq0$, with $\Vert \lambda_{t}\Vert+\Vert \tilde{\lambda}_{t+1}\Vert=1$, and $\rho_{t}\in(0,1)$ such that, for all $h\in G_{t}$, 
\begin{eqnarray*}
    u^{h}(c_{t},c_{t+1})=\biggr(\sum^{L_{t}}_{i=1}\lambda_{ti}c^{\, \rho_{t}}_{ti}+\sum^{L_{t+1}}_{i=1}\tilde{\lambda}_{(t+1)i}c^{\, \rho_{t}}_{(t+1)i}\biggr)^{1/\rho_{t}}.
\end{eqnarray*}
Also, let $\gamma_{t}\in[0,1]$, $t\geq0$, be defined as
\begin{eqnarray*}
\gamma_{t}=\frac{\sum^{L_{t}}_{i=1}\lambda^{\eta_{t}}_{ti}}{\sum^{L_{t}}_{i=1}\lambda^{\eta_{t}}_{ti}+ \sigma^{2(\eta_{t}-1)}_{t}\sum^{L_{t+1}}_{i=1}\tilde{\lambda}^{\eta_{t}}_{(t+1)i}},   
\end{eqnarray*}
with $\eta_{t}=1/(1-\rho_{t})>1$. If 
\begin{eqnarray}\label{eqPropProneSavingsSuff}
\inf_{t\geq0}\{(1-\gamma_{t})\min_{1\leq i\leq L_{t}}e^{t}_{ti}-\gamma_{t}\Vert e^{t}_{t+1}\Vert_{\infty}\alpha^{-1}_{t}\}>0,
\end{eqnarray}
then the economy $\mathcal{E}$ is prone to savings.
\end{proposition}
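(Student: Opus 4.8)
The plan is to verify the implication of Definition \ref{defProneSavings} in its contrapositive form: I would exhibit $\varepsilon>0$ and $\delta>0$ such that, for every $p\in\mathcal H$ and every $t\ge 0$, the inequality $r_t:=\|p_t\|/\|p_{t+1}\|>\alpha_t/(1+\varepsilon)$ forces $\tfrac1{H_t}\sum_{h\in G_t}s^h(p_t,p_{t+1})>\delta$. Because each $h\in G_t$ shares the common CES utility, its Walrasian demand has the closed form $x^h_{ti}=\lambda_{ti}^{\eta_t}p_{ti}^{-\eta_t}w^h/(A_t+B_{t+1})$, where $w^h=p_t\cdot e^h_t+p_{t+1}\cdot e^h_{t+1}$ is the household wealth and $A_t=\sum_i\lambda_{ti}^{\eta_t}p_{ti}^{1-\eta_t}$, $B_{t+1}=\sum_i\mu_{(t+1)i}^{\eta_t}p_{(t+1)i}^{1-\eta_t}$ depend only on prices and generational parameters. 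Substituting into Definition \ref{defRealSavings} via $p_t\cdot x^h_t=w^hA_t/(A_t+B_{t+1})$ gives the identity
\[
p_t\cdot(e^h_t-x^h_t)=(p_t\cdot e^h_t)\frac{B_{t+1}}{A_t+B_{t+1}}-(p_{t+1}\cdot e^h_{t+1})\frac{A_t}{A_t+B_{t+1}}.
\]
Since $A_t,B_{t+1}$ are independent of $h$, averaging over $G_t$ simply replaces $e^h$ by the generational average $\bar e_t=\tfrac1{H_t}\sum_h e^h_t$ (which equals the representative endowment $e^t$ under within-generation homogeneity), so it suffices to bound this averaged expression.

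The core estimate is a uniform upper bound on the young-consumption share $A_t/(A_t+B_{t+1})$. By the degree-zero homogeneity of $s^h$ I may normalize so that $(p_t,p_{t+1})\in\mathcal K^0_t\times\mathcal K^0_{t+1}$, whence every price coordinate lies in $[\sigma,\sigma^{-1}]$. Since $\eta_t>1$, the exponent $1-\eta_t$ is negative and $p\mapsto p^{1-\eta_t}$ is monotone, giving $A_t\le\sigma^{-(\eta_t-1)}\sum_i\lambda_{ti}^{\eta_t}$ and $B_{t+1}\ge\sigma^{\eta_t-1}\sum_i\mu_{(t+1)i}^{\eta_t}$. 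Dividing, $B_{t+1}/A_t\ge\sigma^{2(\eta_t-1)}\sum_i\mu_{(t+1)i}^{\eta_t}\big/\sum_i\lambda_{ti}^{\eta_t}$, which is precisely what turns $A_t/(A_t+B_{t+1})=1/(1+B_{t+1}/A_t)\le\gamma_t$, with $\gamma_t$ the quantity in the statement, and hence $B_{t+1}/(A_t+B_{t+1})\ge 1-\gamma_t$.

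Feeding these two bounds into the savings identity (the signs are preserved because $\bar e_t,\bar e_{t+1}\ge 0$) yields
\[
\frac1{H_t}\sum_{h\in G_t}s^h\ge(1-\gamma_t)\frac{p_t\cdot e^t_t}{\|p_t\|}-\gamma_t\frac{p_{t+1}\cdot e^t_{t+1}}{\|p_t\|}.
\]
The first ratio is a convex combination of the coordinates of $e^t_t$, hence $\ge\min_i e^t_{ti}$; the second equals $r_t^{-1}(p_{t+1}\cdot e^t_{t+1})/\|p_{t+1}\|\le r_t^{-1}\|e^t_{t+1}\|_\infty$. If $r_t>\alpha_t/(1+\varepsilon)$, then $r_t^{-1}<(1+\varepsilon)/\alpha_t$ and the lower bound strictly exceeds $\big[(1-\gamma_t)\min_i e^t_{ti}-\gamma_t\|e^t_{t+1}\|_\infty\alpha_t^{-1}\big]-\varepsilon\gamma_t\|e^t_{t+1}\|_\infty\alpha_t^{-1}$. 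The bracket is $\ge C:=\inf_{t}\{(1-\gamma_t)\min_i e^t_{ti}-\gamma_t\|e^t_{t+1}\|_\infty\alpha_t^{-1}\}>0$ by hypothesis (\ref{eqPropProneSavingsSuff}), while the subtracted term is at most $\varepsilon e_{\max}/\alpha_{\min}$ by Assumption \ref{assBoundsPopAndEndownments} together with $\gamma_t<1$. Choosing $\varepsilon\le C\alpha_{\min}/(2e_{\max})$ and $\delta:=C/2$ yields $\tfrac1{H_t}\sum_{h\in G_t}s^h>\delta$, which is exactly the contrapositive of Definition \ref{defProneSavings}.

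I expect the main obstacle to be the core estimate $A_t/(A_t+B_{t+1})\le\gamma_t$: it is here that the precise factor $\sigma^{2(\eta_t-1)}$ and the restriction to $\mathcal K^0_t\times\mathcal K^0_{t+1}$ must be matched against the definition of $\gamma_t$, and it is the monotonicity of $p\mapsto p^{1-\eta_t}$ for $\eta_t>1$ that makes the bound uniform in $t$. The remaining steps — deriving the CES demand and the savings identity, and the final selection of $\varepsilon$ and $\delta$ through the infimum condition and Assumption \ref{assBoundsPopAndEndownments} — are routine.
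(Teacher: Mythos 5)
Your proposal is correct and follows essentially the same route as the paper's proof: express average real savings through the CES expenditure share $\gamma(p_t,p_{t+1})=A_t/(A_t+B_{t+1})$, bound it by $\gamma_t$ using the price box $\mathcal{K}^0_t\times\mathcal{K}^0_{t+1}$ that equilibrium prices occupy (after normalization) together with the monotonicity of $p\mapsto p^{1-\eta_t}$, and then conclude by the contrapositive of Definition \ref{defProneSavings} with uniform $\varepsilon,\delta$ extracted from hypothesis (\ref{eqPropProneSavingsSuff}) and Assumption \ref{assBoundsPopAndEndownments}. The only cosmetic difference is that you construct $\varepsilon$ and $\delta$ explicitly ($\varepsilon\le C\alpha_{\min}/(2e_{\max})$, $\delta=C/2$) where the paper merely asserts their existence.
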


Proposition \ref{propSufficientConditionProneSavings} can be interpreted from three different angles based on (\ref{eqPropProneSavingsSuff}). First, sufficiently skewed aggregate endowment distributions\footnote{The relation between skewed endowment distributions and efficiency of equilibria has also been noticed by \textcite{DucGhiglino_1998}.}, represented by the difference between the terms containing $\min_{1\leq i\leq L_{t}}e^{t}_{ti}$ and $\Vert e^{t}_{t+1}\Vert_{\infty}$, lead towards a prone-to-savings economy. In addition, sufficiently high demographic growth rates $\{\alpha_{t}\}_{t\geq0}$ and meaningfully valued consumption in older periods, represented by small values of $\{\gamma_{t}\}_{t\geq0}$, also lead towards a prone-to-savings economy.

I say ``lead towards'' because this comparative statics ignores the possible effects of changing preferences, endowments, and demographic dynamics on the values of $\sigma_{t}$, $t\geq0$, given by Assumption \ref{assBoundsPrices}. If, however, the value of $\gamma_{t}$ does not depend on $\sigma_{t}$, the comparative statics exercise can be made effortlessly. This happens when $\rho_{t}\rightarrow 0$, $t\geq0$, and therefore all utility functions are log-linear\footnote{The case of overlapping generations economies with log-linear utility functions was extensively studied by \textcite{BalaskoShell_1981_2}.}. Due to its tractability, this case is stated as a corollary of Proposition \ref{propSufficientConditionProneSavings}. 

\begin{corollary}\label{corSuffConditionProneSavings}
Under Assumptions \ref{assBoundsPopAndEndownments} and \ref{assBoundsPrices}, suppose there are $(\lambda_{t},\tilde{\lambda}_{t+1})\in\mathbb{R}^{L_{t}+L_{t+1}}_{+}$, $t\geq0$, with $\Vert \lambda_{t}\Vert+\Vert \tilde{\lambda}_{t+1}\Vert=1$, such that, for all $h\in G_{t}$, $u^{h}(c_{t},c_{t+1})=\sum^{L_{t}}_{i=1}\lambda_{ti}\ln c_{ti}+\sum^{L_{t+1}}_{i=1}\tilde{\lambda}_{(t+1)i}\ln c_{(t+1)i}$. If $\inf_{t\geq0}\{\Vert\tilde{\lambda}_{t+1}\Vert\min_{1\leq i\leq L_{t}}e^{t}_{ti}-\Vert \lambda_{t}\Vert\Vert e^{t}_{t+1}\Vert_{\infty}\alpha^{-1}_{t}\}>0$, then the economy $\mathcal{E}$ is prone to savings.   
\end{corollary}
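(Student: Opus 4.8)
The plan is to bypass any limiting argument from Proposition \ref{propSufficientConditionProneSavings} and verify the defining implication of Definition \ref{defProneSavings} directly, exploiting the closed-form Walrasian demand of log-linear preferences. For $h\in G_t$ with wealth $w^h=p_t\cdot e^h_t+p_{t+1}\cdot e^h_{t+1}$, additively separable logarithmic utility yields the constant-expenditure-share demand $x^h_{ti}(p_t,p_{t+1})=\lambda_{ti}w^h/p_{ti}$ and $x^h_{(t+1)i}(p_t,p_{t+1})=\mu_{(t+1)i}w^h/p_{(t+1)i}$, where the normalization $\|\lambda_t\|+\|\mu_{t+1}\|=1$ makes the shares sum to one. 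Hence $p_t\cdot x^h_t=\|\lambda_t\|\,w^h$, and substituting into Definition \ref{defRealSavings} and using $\|\mu_{t+1}\|=1-\|\lambda_t\|$ gives the key closed form
\[
s^h(p_t,p_{t+1})=\|\mu_{t+1}\|\,q_t\cdot e^h_t-\frac{\|\lambda_t\|}{r_t}\,q_{t+1}\cdot e^h_{t+1},
\]
where $q_t=p_t/\|p_t\|$, $q_{t+1}=p_{t+1}/\|p_{t+1}\|$ and $r_t=\|p_t\|/\|p_{t+1}\|$. Since $s^h$ is homogeneous of degree zero, this reparametrization in $(q_t,q_{t+1},r_t)$ is exact and needs no price normalization.

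Next I would average over the generation and bound. As the formula is linear in the endowment, the per-capita real savings is obtained by replacing $e^h$ with the per-capita endowment $e^t:=H_t^{-1}\sum_{h\in G_t}e^h$. Because $q_t$ and $q_{t+1}$ satisfy $\|q_t\|=\|q_{t+1}\|=1$ and have nonnegative entries, and endowments are nonnegative, I would bound $q_t\cdot e^t_t\geq\min_{1\leq i\leq L_t}e^t_{ti}$ from below and $q_{t+1}\cdot e^t_{t+1}\leq\|e^t_{t+1}\|_\infty$ from above, giving
\[
\sum_{h\in G_t}\frac{s^h(p_t,p_{t+1})}{H_t}\geq\|\mu_{t+1}\|\min_{1\leq i\leq L_t}e^t_{ti}-\frac{\|\lambda_t\|}{r_t}\|e^t_{t+1}\|_\infty.
\]

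The proof is then finished by contraposition. Writing $D_t=\|\mu_{t+1}\|\min_i e^t_{ti}-\|\lambda_t\|\|e^t_{t+1}\|_\infty\alpha_t^{-1}$ and $D=\inf_{t\geq0}D_t>0$ (the hypothesis), I assume $r_t>\alpha_t/(1+\varepsilon)$, so that $1/r_t<(1+\varepsilon)/\alpha_t$, and insert this into the lower bound to obtain $\sum_h s^h/H_t>D_t-\varepsilon\|\lambda_t\|\|e^t_{t+1}\|_\infty\alpha_t^{-1}\geq D-\varepsilon e_{\max}\alpha_{\min}^{-1}$, the last step using $\|\lambda_t\|\leq1$ together with the uniform bounds $\|e^t_{t+1}\|_\infty\leq e_{\max}$ and $\alpha_t\geq\alpha_{\min}$ from Assumption \ref{assBoundsPopAndEndownments}. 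Choosing $\varepsilon=D\alpha_{\min}/(2e_{\max})$ and $\delta=D/2$ makes the right-hand side equal to $\delta>0$, so $r_t>\alpha_t/(1+\varepsilon)$ forces per-capita real savings strictly above $\delta$ — precisely the contrapositive of the implication in Definition \ref{defProneSavings}, uniformly in $t$ and in $p\in\mathcal{H}$.

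I expect the only genuinely delicate point to be the demand computation and the bookkeeping that converts $\|\mu_{t+1}\|=1-\|\lambda_t\|$ into the clean savings formula; everything afterwards is elementary estimation against the uniform bounds of Assumption \ref{assBoundsPopAndEndownments}. It is worth remarking that this statement is the $\rho_t\to0$ specialization of Proposition \ref{propSufficientConditionProneSavings}, since $\eta_t=1/(1-\rho_t)\to1$ sends $\gamma_t\to\|\lambda_t\|/(\|\lambda_t\|+\|\mu_{t+1}\|)=\|\lambda_t\|$ and hence $1-\gamma_t\to\|\mu_{t+1}\|$, turning (\ref{eqPropProneSavingsSuff}) into the hypothesis above; but because $\rho_t=0$ is not admitted in the Proposition, I prefer the self-contained direct verification over a continuity-in-$\rho_t$ limiting argument.
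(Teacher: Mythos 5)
Your proposal is correct, and it is worth noting how it differs from the paper's route. The paper does not prove the corollary standalone: it presents it as the $\rho_{t}\to 0$ specialization of Proposition \ref{propSufficientConditionProneSavings}, whose proof bounds the \emph{price-dependent} CES expenditure share $\gamma(p_{t},p_{t+1})$ by the constant $\gamma_{t}$, and that bound requires the equilibrium price restrictions $(p_{t}/p_{t1},p_{t+1}/p_{t1})\in\mathcal{K}^{0}_{t}\times\mathcal{K}^{0}_{t+1}$ coming from $p\in\mathcal{H}$ and Assumption \ref{assBoundsPrices} (hence the appearance of $\sigma$ in $\gamma_{t}$). Your direct verification exploits that for log-linear utility the share is \emph{exactly} $\Vert\lambda_{t}\Vert$ at all prices, so the chain
\begin{eqnarray*}
\sum_{h\in G_{t}}\frac{s^{h}(p_{t},p_{t+1})}{H_{t}}\geq\Vert\mu_{t+1}\Vert\min_{1\leq i\leq L_{t}}e^{t}_{ti}-\frac{\Vert\lambda_{t}\Vert}{r_{t}}\Vert e^{t}_{t+1}\Vert_{\infty}
\end{eqnarray*}
holds for every strictly positive price system, not just on $\mathcal{H}$; this is in fact stronger than what Definition \ref{defProneSavings} asks (it verifies the implication on a set $\Omega\supseteq\mathcal{H}$, in the spirit of the paper's own remark after Corollary \ref{corSuffConditionProneSavings}), it dispenses with $\sigma$ entirely, and it produces explicit constants $\varepsilon=D\alpha_{\min}/(2e_{\max})$ and $\delta=D/2$ where the paper only asserts their existence. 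Your approach also closes a small formal gap the paper glosses over: since $\rho_{t}\in(0,1)$ in the Proposition, the case $\rho_{t}=0$ is not literally covered, and a continuity-in-$\rho_{t}$ argument would need justification that demands converge appropriately; your self-contained computation avoids this altogether. All individual steps check out: the Cobb--Douglas demand and the identity $p_{t}\cdot x^{h}_{t}=\Vert\lambda_{t}\Vert w^{h}$ are standard, the bounds $q_{t}\cdot e^{t}_{t}\geq\min_{i}e^{t}_{ti}$ and $q_{t+1}\cdot e^{t}_{t+1}\leq\Vert e^{t}_{t+1}\Vert_{\infty}$ use that $\Vert\cdot\Vert$ is the $\ell_{1}$ norm so $q_{t},q_{t+1}$ are probability vectors, the uniform bounds $\Vert e^{t}_{t+1}\Vert_{\infty}\leq e_{\max}$ and $\alpha_{t}\geq\alpha_{\min}$ follow from Assumption \ref{assBoundsPopAndEndownments}, and the contrapositive you establish ($r_{t}>\alpha_{t}/(1+\varepsilon)$ implies savings strictly above $\delta$) is exactly the implication required by Definition \ref{defProneSavings}, uniformly in $t$.
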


I now state the final assumption made in this paper.  

\begin{assumption}\label{assProneSavingsEconomy}
The economy $\mathcal{E}$ is prone to savings.
\end{assumption}

The next result provides an essential relation between real savings per capita and the subset of Pareto optimal equilibria $\mathcal{H}^{PO}\subseteq\mathcal{H}$ in prone-to-savings economies.

\begin{lemma}\label{lemmaSecondParetoOptEquilibria}
Under Assumptions \ref{assBoundsPopAndEndownments} and \ref{assBoundsPrices}--\ref{assProneSavingsEconomy}, if $p\in\mathcal{H}$ and $\sum_{h\in G_{t}}s^{h}(p_{t},p_{t+1})/H_{t}\leq \delta$, for some $t\geq0$,\footnote{To be fully clear, $\delta>0$ is given by Assumption \ref{assProneSavingsEconomy} and Definition \ref{defProneSavings}.} then $p\notin\mathcal{H}^{PO}$.
\end{lemma}

Lemma \ref{lemmaSecondParetoOptEquilibria} states that, for a prone-to-savings economy $\mathcal{E}$, if, for any generation, the real savings per capita falls below the threshold value $\delta>0$, then the respective equilibrium cannot be Pareto optimal. Due to the propensity to savings of $\mathcal{E}$, Lemma \ref{lemmaSecondParetoOptEquilibria} simply reveals that a Pareto optimal equilibrium must be aligned with this nature in all periods. 

The next result gathers Lemma \ref{lemmaFirstParetoOptEquilibria} and Lemma \ref{lemmaSecondParetoOptEquilibria} to provide, through the dynamic of real savings per capita, a complete characterization of the subset of Pareto optimal equilibria in prone-to-savings economies.
\begin{proposition}\label{propNecAndSuffConditionParetoOpt}
Let $p\in\mathcal{H}$. Under Assumptions \ref{assBoundsPopAndEndownments} and \ref{assBoundsPrices}--\ref{assProneSavingsEconomy}, $p\notin\mathcal{H}^{PO}$ if, and only if, 
\begin{eqnarray*}
    \lim_{t\rightarrow\infty}\sum_{h\in G_{t}}\frac{s^{h}(p_{t},p_{t+1})}{H_{t}}=0.
\end{eqnarray*}
\end{proposition}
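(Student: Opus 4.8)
The plan is to obtain this characterization directly as the conjunction of Lemma \ref{lemmaFirstParetoOptEquilibria} and Lemma \ref{lemmaSecondParetoOptEquilibria}, which between them already supply the two implications. Both of those lemmas are stated under hypotheses contained in Assumptions \ref{assCassCriterion}--\ref{assProneSavingsEconomy}, so the standing assumptions of the proposition match exactly what is needed, and no new analytic machinery is required here.

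For the necessity direction I would simply invoke Lemma \ref{lemmaFirstParetoOptEquilibria}: assuming $p\in\mathcal{H}/\mathcal{H}^{PO}$, that lemma already delivers $\lim_{t\rightarrow\infty}\sum_{h\in G_{t}}s^{h}(p_{t},p_{t+1})/H_{t}=0$, which is precisely the stated limit condition. Note that this half uses only Assumption \ref{assCassCriterion} and does not require the economy to be prone to savings.

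For the sufficiency direction, suppose the limit of the average real savings is zero, and let $\delta>0$ be the threshold furnished by Definition \ref{defProneSavings} together with Assumption \ref{assProneSavingsEconomy}. Applying the definition of the limit with tolerance $\delta$, there exists $T\geq0$ such that $\left|\sum_{h\in G_{t}}s^{h}(p_{t},p_{t+1})/H_{t}\right|<\delta$ for all $t\geq T$; in particular, fixing any such $t$ gives $\sum_{h\in G_{t}}s^{h}(p_{t},p_{t+1})/H_{t}\leq\delta$. Lemma \ref{lemmaSecondParetoOptEquilibria} then immediately yields $p\in\mathcal{H}/\mathcal{H}^{PO}$, completing the equivalence.

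Since the substantive content is packaged inside the two lemmas---the \textcite{Cass_1972} criterion argument underlying Lemma \ref{lemmaFirstParetoOptEquilibria} and the prone-to-savings contradiction underlying Lemma \ref{lemmaSecondParetoOptEquilibria}---the only step demanding any care at the level of the proposition itself is the elementary reduction from ``the limit is zero'' to ``the savings average falls below $\delta$ for some period,'' which is immediate once one observes that a sequence converging to $0$ eventually lies strictly below the positive threshold $\delta$. I therefore expect no genuine obstacle in proving the proposition as stated; the real work has already been discharged upstream in establishing the two lemmas.
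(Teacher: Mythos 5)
Your proposal is correct and is essentially identical to the paper's own proof: necessity is exactly Lemma \ref{lemmaFirstParetoOptEquilibria}, and sufficiency follows by noting that a sequence converging to zero eventually falls below the threshold $\delta>0$ of Definition \ref{defProneSavings}, at which point Lemma \ref{lemmaSecondParetoOptEquilibria} applies. Your added remark that the necessity half needs only Assumption \ref{assCassCriterion} is accurate and consistent with the lemmas' stated hypotheses.
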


Proposition \ref{propNecAndSuffConditionParetoOpt} states that nonvanishing real savings per capita completely characterize efficient equilibria in prone-to-savings economies. It reveals that the Cass criterion, which is seen as a statement on the dynamics of prices (e.g., that prices do not grow too fast), is, in prone-to-savings economies, equivalent to nonvanishing real savings per capita (which can be seen as a sort of transversality condition).

Furthermore, by inspecting the proof of Lemma \ref{lemmaSecondParetoOptEquilibria} at the \hyperref[appn]{Appendix}, one can see that by defining $\mathcal{S}_{j}\subseteq\mathcal{H}$, $j\geq0$, as
\begin{eqnarray}\label{eqSj}
    \mathcal{S}_{j}=\biggr\{p\in\mathcal{H} \mid \sum_{h\in G_{j}}\frac{s^{h}(p_{j},p_{j+1})}{H_{j}}\leq \delta\biggr\},
\end{eqnarray}
we have $\mathcal{S}_{j}\subseteq\mathcal{S}_{j+1}$, $j\geq0$. Theorem \ref{theoEqSetCompact} and the continuity of real savings per capita of generation $G_{j}$ imply that $\mathcal{S}_{j}$ is compact, $j\geq0$. Therefore, $\{\mathcal{S}_{j}\}_{j\geq1}$ is an increasing sequence of compact sets, and Proposition \ref{propNecAndSuffConditionParetoOpt} implies that $\bigcup_{j\geq0}\mathcal{S}_{j}=\mathcal{H}/\mathcal{H}^{PO}$. 

I proceed with the following existence result.

\begin{theorem}\label{theoExistenceOfParetoOptJME}
Under Assumptions \ref{assBoundsPopAndEndownments}--\ref{assProneSavingsEconomy}, the subset of Pareto optimal equilibria $\mathcal{H}^{PO}\subseteq\mathcal{H}$ is non-empty and compact, with 
\begin{eqnarray}\label{eqMonetary}
    \sum_{h\in G_{0}}\frac{s^{h}(p_{0},p_{1})}{H_{0}}>\delta,
\end{eqnarray}
for $p\in\mathcal{H}^{PO}$.
\end{theorem}

Theorem \ref{theoExistenceOfParetoOptJME} provides sufficient conditions for the existence of a Pareto optimal equilibrium in overlapping generations economies and, under such conditions, the subset of Pareto optimal equilibria $\mathcal{H}^{PO}$ is compact. Also, (\ref{eqMonetary}) implies that all these optimal equilibria are \textit{monetary} due to the following reasoning.

Let $p\in\mathcal{H}^{PO}$ and $\mathcal{E}^{3}$ be the economy\footnote{See, in the beginning of this Section, the discussion on the three types of overlapping generations economies $\mathcal{E}^{1}$, $\mathcal{E}^{2}$, and $\mathcal{E}^{3}$ with one-period-lived households in $t=1$, the ``inception of the economy.''} formed by all generations $G_{t}$, $t\geq1$, and a generation $G^{3}_{0}$ of one-period-lived households. For each $h\in G_{0}$, there is a household $h^{\prime}\in {G}^{3}_{0}$ with a utility function $u^{h^{\prime}}(c_{1})=u^{h}(x^{h}_{0}(p_{0},p_{1}),c_{1})$, an endowment $e^{h^{\prime}}_{1}=e^{h}_{1}\in\mathbb{R}^{L_{1}}_{+}$ and inside or outside money holdings or debt given by $m^{h^{\prime}}=\Vert p_{0}\Vert s^{h}(p_{0},p_{1})\in\mathbb{R}$. Notice that $p\in\mathcal{H}^{PO}$ and
\begin{eqnarray*}
    x^{h}_{1}(p_{0},p_{1})=x^{h^{\prime}}_{1}(1,p_{1})=\textrm{argmax}_{c_{1}\in\mathbb{R}^{L_{1}}_{+}}& u^{h^{\prime}} (c_{1})\\
    \textrm{ s.t. }& p_{1}\cdot (c_{1}-e^{h^{\prime}}_{1})\leq m^{h^{\prime}},
\end{eqnarray*}
imply that $(p_{m},p_{1},\ldots)=(1,p_{1},\ldots)$ is an equilibrium of $\mathcal{E}^{3}$. Let $m^{h^{\prime}}_{+}=\max\,\{0,m^{h^{\prime}}\}$, $h^{\prime}\in G^{3}_{0}$. If $m^{h^{\prime}}\geq0$, $h^{\prime}\in G^{3}_{0}$, then this is a classical monetary equilibrium, since $p_{m}=1$ and the total stock of fiat money is strictly positive due to (\ref{eqMonetary}), since
\begin{eqnarray*}
    0<\Vert p_{0}\Vert H_{0} \delta<\sum_{h\in{G}_{0}}\Vert p_{0}\Vert s^{h}(p_{0},p_{1})=\sum_{h^{\prime}\in G^{3}_{0}}m^{h^{\prime}}=\sum_{h^{\prime}\in G^{3}_{0}}m^{h^{\prime}}_{+}.
\end{eqnarray*}
Notice that the last equality states that every increase in income from households in generation $G^{3}_{0}$ at $t=1$ is due to fiat money holdings.

If $m^{h^{\prime}}<0$, for some $h^{\prime}\in G^{3}_{0}$, then we have
\begin{eqnarray*}
    0<\sum_{h^{\prime}\in G^{3}_{0}}m^{h^{\prime}}<\sum_{h^{\prime}\in G^{3}_{0}}m^{h^{\prime}}_{+}.
\end{eqnarray*}
In this case, we also have a monetary equilibrium (i.e., $p_{m}=1$) with a strictly positive stock of fiat money due to the first inequality. 

The second inequality, however, states that increases in income from households in generation $G^{3}_{0}$ are due either to fiat money holdings (which sum up to $\sum_{h^{\prime}\in G^{3}_{0}}m^{h^{\prime}}>0$) or to inside money holdings (which sum up to $\sum_{h^{\prime}\in G^{3}_{0}}m^{h^{\prime}}_{+}-m^{h^{\prime}}>0$). Therefore, unlike the classical monetary equilibrium setup, households from generation $G^{3}_{0}$ have an amount of IOU debt outstanding at $t=1$ that sums up to $\sum_{h^{\prime}\in G^{3}_{0}}m^{h^{\prime}}_{+}-m^{h^{\prime}}>0$.  

It remains to demonstrate how, exactly, Theorem \ref{theoExistenceOfParetoOptJME} sets aside the nonexistence examples given by \textcite[Section V, pp. 57-70]{CassOkunoZilcha_1979}\footnote{See, for another nonexistence example, \textcite[pp. 217-219]{OkunoZilcha_1982}.}, since this result is not directly applicable to a classical economy fiat with money $\mathcal{E}^{2}$.

The first thing to notice is that the nonexistence examples in \textcite[Section V, pp. 57-70]{CassOkunoZilcha_1979} rely on one of four assumptions: ``completely inflexible tastes'' (i.e., Leontief utility function) or ``completely skewed endowments'' towards the older period of life  \parencite[Section V-A, p. 58]{CassOkunoZilcha_1979}; local satiation \parencite[Section V-B, p. 61]{CassOkunoZilcha_1979}; or non-convex preferences \parencite[Section V-C, p. 70]{CassOkunoZilcha_1979}. 

Therefore, the assumptions we have made about the overlapping generations economy $\mathcal{E}$ already set us apart from these four assumptions. In particular, notice that a prone-to-savings economy cannot have households with completely skewed endowments towards the older periods of life.

However, we still need to show that Theorem \ref{theoExistenceOfParetoOptJME} ensures the existence of an optimal monetary equilibrium for a classical economy with fiat money. Let $\mathcal{E}^{2}$ be such an economy, with household $h\in G^{2}_{0}$ characterized by a continuous, non-decreasing, semi-strictly quasi-concave utility function $u^{h}:\mathbb{R}^{L_{1}}_{+}\rightarrow\mathbb{R}$ without local maxima, a nonzero endowment $e^{h}_{1}\in\mathbb{R}^{L_{1}}_{+}$ and an amount of fiat money $m^{h}\geq0$, with $\sum_{h\in G^{2}_{0}}m^{h}=1$. Also, the aggregate endowment satisfies $\sum_{t\geq0}\sum_{h\in G^{2}_{t}}e^{h}\in\mathbb{R}^{\infty}_{++}$. Furthermore, let $\mathcal{E}^{2}_{\geq1}$ be the economy formed by all generations of $\mathcal{E}^{2}$ born in $t\geq1$. Finally, I recall that
\begin{eqnarray*}
    x^{h}_{1}(p_{m},p_{1})=\textrm{argmax}_{c\in\mathbb{R}^{L_{1}}_{+}}& u^{h} (c) \\
    \textrm{ s.t. }& p_{1}\cdot (c-e^{h}_{1})\leq p_{m}m^{h},
\end{eqnarray*}
for $(p_{m},p_{1})\in\mathbb{R}_{+}\times\mathbb{R}^{L_{1}}_{++}$, $h\in G^{2}_{0}$. 

\begin{theorem}\label{theoClassModelExistence}
Let $\mathcal{E}^{2}$ be a classical economy with fiat money. Suppose $\mathcal{E}^{2}$ satisfies Assumptions \ref{assBoundsPopAndEndownments}, \ref{assResourceRelated} and \ref{assCassCriterion}; $\mathcal{E}^{2}_{\geq1}$ satisfies Assumptions $\ref{assBoundsPrices}$ and \ref{assProneSavingsEconomy}; $\alpha_{\min}\leq H_{1}/(H_{0}+1)\leq\alpha_{\max}$; $m^{h}\leq e_{\max}$, $h\in G^{2}_{0}$; there is $h\in G_{0}^{2}$ with $\Vert e^{h}_{1}\Vert_{\infty}<e_{\max}$; and there is $\lambda>0$ such that\footnote{It is worth highlighting that (\ref{eqLowerBoundG20}) requires that, for prices $p_{1}\in\mathbb{R}^{L_{1}}_{++}$ stated in terms of the numeraire (i.e., $p_{m}=1$), if some commodity is too cheap, then the aggregate demand of the older generation becomes incompatible with Assumption \ref{assBoundsPopAndEndownments}.}
\begin{eqnarray}\label{eqLowerBoundG20}
    \min_{1\leq i\leq L_{1}}\, p_{1i}<\lambda\implies \biggr\Vert \sum_{h\in G^{2}_{0}}\frac{x^{h}_{1}(1,p_{1})}{H_{0}}\biggr\Vert_{\infty}>\beta e_{\max},
\end{eqnarray}
for $p_{1}\in\mathbb{R}^{L_{1}}_{++}$. Then, there exists a Pareto optimal monetary equilibrium of $\mathcal{E}^{2}$.
\end{theorem}

Theorem \ref{theoClassModelExistence} helps to fill a long-standing gap in the literature by providing sufficient conditions under which a classical non-stationary overlapping generations economy with fiat money possesses a Pareto optimal monetary equilibrium. In short, it states that if the economy is prone to savings, then money has value and leads the economy to an efficient allocation, thus furnishing a solution to the \textcite{Hahn_1965} problem.

Although Theorems \ref{theoExistenceOfParetoOptJME} and \ref{theoClassModelExistence} imply the existence of Pareto optimal monetary equilibria, they do not provide a direct way to calculate or, in some sense, approximate them. Since this is a fundamental milestone for properly deploying these overlapping generations models, the next section describes a backward calculation algorithm to find elements from $\mathcal{H}^{PO}$.

\section{The backward calculation algorithm}\label{sec4}
This section defines a backward calculation algorithm conceived to identify elements from the subset of Pareto optimal equilibria $\mathcal{H}^{PO}\subseteq\mathcal{H}$ of a given prone-to-savings economy $\mathcal{E}$. I start with the following definition.

\begin{definition}\label{defFiniteRep}
Let $\mathcal{E}$ and $\{\mathcal{E}_{k}\}_{k\geq1}$ satisfy Assumptions \ref{assBoundsPopAndEndownments}--\ref{assProneSavingsEconomy}. Then, $\{\mathcal{E}_{k}\}_{k\geq1}$ is \textit{finitely replicating} $\mathcal{E}$ if: (i) Assumption \ref{assBoundsPopAndEndownments} is satisfied for common values of $\alpha_{\min},\alpha_{\max}$ and $e_{\max}>0$; (ii) there exists a compact set $\mathcal{K}^{\prime}\subset\mathbb{R}^{\infty}$ such that $\mathcal{H}\subseteq\mathcal{K}^{\prime}$ and $\mathcal{H}_{k}\subseteq\mathcal{K}^{\prime}$, $k\geq1$; and (iii) all generations of $\mathcal{E}_{k}$ born until period $k\geq1$ are identical to those of $\mathcal{E}$. In this case, the economy $\mathcal{T}_{k}$ formed by all generations of $\mathcal{E}_{k}$ born after period $k\geq1$ is called a \textit{tail economy}.
\end{definition}

The notion of \textit{finite replication} is rather intuitive since it only requires the replicating economies $\{\mathcal{E}_{k}\}_{k\geq1}$ to coincide with the reference economy $\mathcal{E}$ over finite but arbitrarily large time horizons and that Assumptions \ref{assBoundsPopAndEndownments}--\ref{assProneSavingsEconomy} are met with some uniformity\footnote{Notice that Theorem \ref{theoEqSetCompact} implies that $\mathcal{H}$ and $\mathcal{H}_{k}$, $k\geq1$, are already contained in compact sets. Therefore, condition (ii) from Definition \ref{defFiniteRep} requires a \textit{unique} compact set that contains all of them. After some careful thought, one can notice that this is actually a constraint over the tail economies.}. Definition \ref{defFiniteRep} also implies that the sequence of tail economies $\{\mathcal{T}_{k}\}_{k\geq1}$ completely defines the finitely replicating sequence $\{\mathcal{E}_{k}\}_{k\geq1}$. 

As previously mentioned, Theorems \ref{theoExistenceOfParetoOptJME} and \ref{theoClassModelExistence} do not provide a way to actually find a Pareto optimal monetary equilibrium of $\mathcal{E}$. One possible path in this direction would be to calculate all elements of $\mathcal{H}$ and then select those that satisfy the \textcite{Cass_1972} criterion. However, in general, exactly solving an infinite number of equilibrium equations and then evaluating the convergence of an infinite series is impossible. Traditionally, one way to ``cut the Gordian knot'' \parencite[p. 470]{Samuelson_1958} is to work with stationary economies, looking for steady states and locally convergent solutions after linearizing the equilibrium equations around them (e.g., \textcite[pp. 222-223]{KehoeLevine_1990}).

A closer look at Assumption \ref{assCassCriterion} indicates an alternative path that can be applied to a non-stationary economy, a path that is related to truncations of the model (e.g., \textcite[p. 223]{KehoeLevine_1990}). First, observe that if $\sum^{+\infty}_{t=0}(H_{t}\Vert p_{t}\Vert)^{-1}=+\infty$, then $\sum^{+\infty}_{t=k}(H_{t}\Vert p_{t}\Vert)^{-1}=+\infty$, for all $k\geq1$. Therefore, to check if $p\in\mathcal{H}^{PO}$, we could simply ignore an arbitrarily large number of periods.

However, if it is easier to solve the equilibrium equations after a given period $k\geq1$, we could simply start by solving these equations in the tail of the economy to find $(p_{k},p_{k+1},\ldots)\in\mathbb{R}^{\infty}_{++}$ and then select the ones for which $\sum^{+\infty}_{t=k}(H_{t}\Vert p_{t}\Vert)^{-1}=+\infty$. With this set of selected tail prices, we could use the equilibrium equations to move backward and calculate all possible previous prices $(p_{0},\ldots,p_{k-1})$. Since Theorem \ref{theoExistenceOfParetoOptJME} implies $\mathcal{H}^{PO}\neq \emptyset$, the algorithm will reach its end for at least one selected price.

Clearly, depending on the overlapping generations economy, solving the equilibrium equations after period $k\geq1$ may be no less difficult than solving all equilibrium equations since, in any case, we are dealing with infinities. Nevertheless, it seems reasonable to guess that, under natural circumstances, what will happen in a few years from now is more relevant for today's economy than what will happen in a few decades, and what will happen in a few decades is more relevant than what will happen in a century from now. 

This reasoning leads us to guess that replacing generations that live in the far future may have a small effect on today's economic variables we are interested in and can make it easier to solve the infinite equilibrium equations remaining after the start of the ``far future.''

In terms of Definition \ref{defFiniteRep}, replacing future generations is equivalent to appropriately choosing tail economies $\{\mathcal{T}_{k}\}_{k\geq1}$ and building a finitely replicating sequence $\{\mathcal{E}_{k}\}_{k\geq1}$ for $\mathcal{E}$. These tail economies must be well behaved in the sense that we know exactly what their sets of Pareto optimal monetary equilibria $\{\mathcal{H}^{PO}_{\tau k}\}_{k\geq1}$ (the symbol $\tau$ stands for ``tail'') are, so that we do not need to calculate them. Actually, we intend to depart from these well-known sets of Pareto optimal monetary equilibria and use equilibrium equations to move backward.

In this setting, due to Theorem \ref{theoExistenceOfParetoOptJME}, $\mathcal{H}^{PO}_{k}\neq\emptyset$ and our backward calculation algorithm will reach its end after departing from at least one equilibrium of $\mathcal{H}^{PO}_{\tau k}$, $k\geq1$. However, even with the algorithm effectively finding $p_{k}\in\mathcal{H}^{PO}_{k}$, there is no guaranty that this result will be related to the set of equilibria $\mathcal{H}$ of the reference economy $\mathcal{E}$ or to its subset of efficient monetary equilibria $\mathcal{H}^{PO}\subseteq\mathcal{H}$. Our aim is to address these concerns and demonstrate the sense in which this algorithm is valid for finding elements of $\mathcal{H}^{PO}$. The next result is the first step in this direction.

\begin{lemma}\label{lemmaTailEquilibria}
Let the sequence $\{\mathcal{E}_{k}\}_{k\geq1}$ be finitely replicating $\mathcal{E}$, with $\{\mathcal{T}_{k}\}_{k\geq1}$ the sequence of tail economies and $\mathcal{H}_{\tau k}$ their respective sets of equilibria. Then, for all $k\geq1$, $(p_{0},p_{1},\ldots)\in\mathcal{H}_{k}$ implies $(p_{k+1}/p_{(k+1)1},p_{k+2}/p_{(k+1)1},\ldots)\in\mathcal{H}_{\tau k}$, and $(p_{0},p_{1},\ldots)\in\mathcal{H}^{PO}_{k}$ implies $(p_{k+1}/p_{(k+1)1},p_{k+2}/p_{(k+1)1},\ldots)\in\mathcal{H}^{PO}_{\tau k}$.
\end{lemma}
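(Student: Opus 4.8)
The plan is to verify both inclusions by direct inspection, matching each defining condition of an equilibrium of the tail economy $\mathcal{T}_k$ against conditions already known to hold for $p$ in $\mathcal{E}_k$. Three ingredients do all the work: the degree-zero homogeneity of each Walrasian demand $x^h$, the nested compact structure of Theorem \ref{theoEqSetCompact}, and the Cass characterization of Assumption \ref{assCassCriterion}. Throughout I write $q_t=p_{k+1+t}/p_{(k+1)1}$ for the renormalized tail price and fix the index dictionary: generation $t$ of $\mathcal{T}_k$ is the original generation $G_{k+1+t}$, so the tail population is $H^{\tau}_t=H_{k+1+t}$, and tail period $t$ corresponds to original period $k+1+t$.

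For the first inclusion I would check the three conditions defining $\mathcal{H}_{\tau(k)}$ (the analogue of Definition \ref{defSetEquilibria} for $\mathcal{T}_k$). The normalization $q_{01}=1$ is immediate from the scaling. For the boundary condition $(q_0,q_1)\in\mathcal{K}_0\times\mathcal{K}^0_1$ of the tail, I would invoke Theorem \ref{theoEqSetCompact} applied to $\mathcal{E}_k$: since $\mathcal{H}_k=\bigcap_{j\geq1}\mathcal{V}^{\infty}_j$, the point $p$ lies in the extended $(k+1)$-sighted set $\mathcal{V}^{\infty}_{k+1}$ of $\mathcal{E}_k$, which forces $(p_{k+1}/p_{(k+1)1},p_{k+2}/p_{(k+1)1})\in\mathcal{K}^0_{k+1}\times\mathcal{K}^0_{k+2}$; because $\mathcal{K}^0_{k+1}$ and $\mathcal{K}^0_{k+2}$ depend only on $L_{k+1},L_{k+2}$ and the common $\sigma$, they coincide with the tail economy's own $\mathcal{K}_0$ and $\mathcal{K}^0_1$. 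Finally, for market clearing I would use that $x^h$ is homogeneous of degree zero, so dividing all prices by $p_{(k+1)1}>0$ leaves every demand unchanged; substituting the index dictionary then identifies the tail excess demand $z^{\tau}_t(q_{t-1},q_t,q_{t+1})$ with the original $z_{k+1+t}(p_{k+t},p_{k+1+t},p_{k+2+t})$, and the latter vanishes for every $t\geq1$ (since $k+1+t\geq2\geq1$) because $p\in\mathcal{H}_k$ clears all original markets from period $1$ onward.

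For the second inclusion the tail sequence $q$ already belongs to $\mathcal{H}_{\tau(k)}$ by the first part (as $\mathcal{H}^{PO}_k\subseteq\mathcal{H}_k$), so only Pareto optimality remains, and here I would argue through the Cass criterion. Applying Assumption \ref{assCassCriterion} to $\mathcal{E}_k$, membership $p\in\mathcal{H}^{PO}_k$ is equivalent to $\sum_{s\geq0}(H_s\Vert p_s\Vert)^{-1}=+\infty$; deleting the finitely many positive terms $s=0,\dots,k$ preserves divergence, and rescaling each surviving term by the positive constant $p_{(k+1)1}$ gives $\sum_{t\geq0}(H^{\tau}_t\Vert q_t\Vert)^{-1}=p_{(k+1)1}\sum_{s\geq k+1}(H_s\Vert p_s\Vert)^{-1}=+\infty$. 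Applying the Cass criterion now to the tail economy $\mathcal{T}_k$ yields $q\in\mathcal{H}^{PO}_{\tau(k)}$.

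The routine manipulations — homogeneity and the arithmetic of dropping and rescaling series terms — are not the difficulty; the delicate point is legitimizing the use of Theorem \ref{theoEqSetCompact} and Assumption \ref{assCassCriterion} for the tail economy itself. I would justify this by observing that $\mathcal{T}_k$ is an overlapping generations economy whose generations are exactly the post-$k$ generations of $\mathcal{E}_k$, so it inherits Assumptions \ref{assBoundsPopAndEndownments}--\ref{assProneSavingsEconomy} with the same constants $\alpha_{\min},\alpha_{\max},e_{\max},\sigma,\varepsilon,\delta$ supplied by the uniform-satisfaction clause of Definition \ref{defFiniteRep}; in particular the Cass characterization is sustained by the same per-generation curvature and smoothness bounds that underlie Assumption \ref{assCassCriterion}. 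Consequently the compactness/nesting result and the Cass equivalence are available for $\mathcal{T}_k$ exactly as for $\mathcal{E}_k$, and the careful index bookkeeping above is all that is left to track.
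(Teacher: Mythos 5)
Your first inclusion is fine and is essentially the paper's own (unwritten) argument: degree-zero homogeneity turns the tail market-clearing equations into the already-satisfied equations of $\mathcal{E}_{k}$ at periods $k+2,k+3,\ldots$, and Theorem \ref{theoEqSetCompact} applied to $\mathcal{E}_{k}$ --- legitimate because Definition \ref{defFiniteRep} gives $\mathcal{E}_{k}$ Assumptions \ref{assBoundsPopAndEndownments}--\ref{assBoundsPrices} with the common constants --- yields $(q_{0},q_{1})\in\mathcal{K}^{0}_{k+1}\times\mathcal{K}^{0}_{k+2}$, which coincides with the tail's $\mathcal{K}_{0}\times\mathcal{K}^{0}_{1}$. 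The second inclusion, however, has a genuine gap at exactly the point you flagged as delicate. Your chain is: $p\in\mathcal{H}^{PO}_{k}$ implies (Cass for $\mathcal{E}_{k}$) divergence of the full series, hence divergence of the tail series, hence (Cass for $\mathcal{T}_{k}$) $q\in\mathcal{H}^{PO}_{\tau(k)}$. The final step needs Assumption \ref{assCassCriterion} to hold for the tail economy, and nothing in the paper supplies this: the uniformity clause of Definition \ref{defFiniteRep} covers the economies $\{\mathcal{E}_{k}\}_{k\geq1}$ and $\mathcal{E}$ themselves, not the tails. Unlike Assumptions \ref{assBoundsPopAndEndownments} and \ref{assBoundsPrices}, which are generation-by-generation conditions and therefore do pass from $\mathcal{E}_{k}$ to any sub-family of its generations, Assumption \ref{assCassCriterion} is a primitive, economy-wide equivalence between Pareto optimality and series divergence; the paper never derives it from curvature or smoothness bounds, so your appeal to ``the same per-generation curvature and smoothness bounds'' invokes hypotheses that do not exist in the paper's framework. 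Nor can the tail equivalence be deduced formally from the equivalence for $\mathcal{E}_{k}$: the tail's Pareto comparisons involve a smaller set of households and its equilibrium set is not a restriction of $\mathcal{H}_{k}$, and the direction you need --- divergence implies optimality --- is precisely the direction that requires a Cass-type theorem for the economy at hand.

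The repair is to abandon the Cass criterion and argue the second inclusion directly from Definition \ref{defParetoOptimalEqSet}, which is what the paper's one-line proof means. Suppose $q\in\mathcal{H}_{\tau(k)}\setminus\mathcal{H}^{PO}_{\tau(k)}$: there is a reallocation $\{y^{h}\}$ of the tail households' aggregate consumption that weakly improves every tail household and strictly improves at least one. By homogeneity, the allocation that $q$ induces in $\mathcal{T}_{k}$ is exactly the allocation that $p$ induces in $\mathcal{E}_{k}$ restricted to the generations born after period $k$. Extend $\{y^{h}\}$ to all households of $\mathcal{E}_{k}$ by assigning every household born in periods $0,\ldots,k$ its own equilibrium bundle $x^{h}(p_{t},p_{t+1})$. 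The extended family redistributes $\sum_{t\geq0}\sum_{h}x^{h}(p_{t},p_{t+1})$ commodity by commodity, weakly improves every household and strictly improves one, contradicting $p\in\mathcal{H}^{PO}_{k}$. This contrapositive needs nothing beyond the definitions and homogeneity, works for arbitrary tail economies --- which is essential, since Lemma \ref{lemmaTailEquilibria} is stated for arbitrary finitely replicating sequences and not only the special log-linear tails constructed in the proof of Theorem \ref{theoExistenceOfParetoOpt} --- and also makes transparent why the implication is one-directional.
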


Lemma \ref{lemmaTailEquilibria} formalizes the idea that, when looking for the set of equilibria $\mathcal{H}_{k}$ of the economy $\mathcal{E}_{k}$, one can start from the well-known set of equilibria $\mathcal{H}_{\tau k}$ of the tail economy $\mathcal{T}_{k}$, $k\geq1$, and, by backward calculation through equilibrium equations from time $k+1$ to $1$, find all elements from $\mathcal{H}_{k}$ up to normalization. Since Theorem \ref{theoEqSetCompact} implies $\mathcal{H}_{k}\neq\emptyset$, this algorithm leads to at least one equilibrium of $\mathcal{H}_{k}$.

Also, due to Theorem \ref{theoExistenceOfParetoOptJME}, this same algorithm remains valid when we deal with the subset of Pareto optimal monetary equilibria $\mathcal{H}^{PO}_{k}\subseteq \mathcal{H}_{k}$ of the economy $\mathcal{E}_{k}$ and the well-known subset of Pareto optimal monetary equilibria $\mathcal{H}^{PO}_{\tau k}\subseteq \mathcal{H}_{\tau k}$ of the tail economy $\mathcal{T}_{k}$, $k\geq1$.

The next lemma relates the sets of equilibria $\mathcal{H}_{k}$ of the economy $\mathcal{E}_{k}$, $k\geq1$, obtained through the backward calculation algorithm, and the set of equilibria $\mathcal{H}$ of the reference economy $\mathcal{E}$.

\begin{lemma}\label{lemmaFiniteReplEqSet}
If $\{\mathcal{E}_{k}\}_{k\geq1}$ is finitely replicating $\mathcal{E}$, then 
\begin{eqnarray*}
    \lim_{j\rightarrow\infty}\overline{\bigcup_{k\geq j}\mathcal{H}_{k}}\subseteq \mathcal{H},
\end{eqnarray*}
with $\lim_{j\rightarrow\infty}\overline{\bigcup_{k\geq j}\mathcal{H}_{k}}$ a non-empty compact set.
\end{lemma}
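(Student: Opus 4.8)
The plan is to read $\lim_{n\to\infty}\overline{\bigcup_{k\geq n}\mathcal{H}_{k}}$ as the decreasing intersection $L:=\bigcap_{n\geq1}A_{n}$, where $A_{n}:=\overline{\bigcup_{k\geq n}\mathcal{H}_{k}}$, exactly as the $\lim$ over the nested family $\{\mathcal{V}^{\infty}_{j}\}$ was interpreted as an intersection. Since $\{\mathcal{E}_{k}\}_{k\geq1}$ is finitely replicating $\mathcal{E}$, each $\mathcal{E}_{k}$ satisfies Assumptions \ref{assBoundsPopAndEndownments}--\ref{assBoundsPrices} with the common parameters $\alpha_{\min},\alpha_{\max},e_{\max},\sigma$; hence Theorem \ref{theoEqSetCompact}, applied to $\mathcal{E}_{k}$, yields that each $\mathcal{H}_{k}$ is non-empty and compact with $\mathcal{H}_{k}\subseteq\prod_{t\geq0}\mathcal{K}_{t}=:K$. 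The product $K$ is compact in the product topology, so every $A_{n}$ is a closed, hence compact, non-empty subset of $K$, and $A_{n+1}\subseteq A_{n}$ because $\bigcup_{k\geq n+1}\mathcal{H}_{k}\subseteq\bigcup_{k\geq n}\mathcal{H}_{k}$. Cantor's intersection theorem for nested non-empty compacta then gives $L\neq\emptyset$, settling the second assertion.

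For the inclusion $L\subseteq\mathcal{H}$, I would exploit that $\mathbb{R}^{\infty}$ with the product topology is metrizable (a countable product of metric spaces), so closures and continuity can be tested with sequences. Fix $p\in L$. For each $n$, since $p\in A_{n}=\overline{\bigcup_{k\geq n}\mathcal{H}_{k}}$, choose $q^{(n)}\in\mathcal{H}_{k_{n}}$ with $k_{n}\geq n$ and $d(q^{(n)},p)<1/n$ for a metric $d$ compatible with the topology. Then $k_{n}\to\infty$ and $q^{(n)}\to p$ coordinatewise. Because $p\in K$, each block $p_{t}$ lies in $\mathcal{K}_{t}\subset\mathbb{R}^{L_{t}}_{++}$, so all prices in $p$ are strictly positive and every excess-demand map $z_{t}$ of $\mathcal{E}$ is defined and continuous at the relevant subvector of $p$.

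It then remains to verify the conditions of Definition \ref{defSetEquilibria}. The normalization $(q^{(n)})_{01}=1$ and the boundary constraints $((q^{(n)})_{0},(q^{(n)})_{1})\in\mathcal{K}_{0}\times\mathcal{K}^{0}_{1}$ hold for every $n$; since $\mathcal{K}_{0}$ and $\mathcal{K}^{0}_{1}$ are closed and the coordinate projections are continuous, passing to the limit yields $p_{01}=1$ and $(p_{0},p_{1})\in\mathcal{K}_{0}\times\mathcal{K}^{0}_{1}$. For market clearing, fix $t\geq1$. The crucial use of finite replication is that, whenever $k_{n}\geq t$, the generations $G_{t-1}$ and $G_{t}$ of $\mathcal{E}_{k_{n}}$ coincide with those of $\mathcal{E}$, so the period-$t$ excess-demand map of $\mathcal{E}_{k_{n}}$ equals $z_{t}$; as $q^{(n)}\in\mathcal{H}_{k_{n}}$ and $k_{n}\geq n\to\infty$, for all $n$ large enough one has $z_{t}\big((q^{(n)})_{t-1},(q^{(n)})_{t},(q^{(n)})_{t+1}\big)=0$. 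Letting $n\to\infty$ and invoking continuity of $z_{t}$ at the strictly positive limit gives $z_{t}(p_{t-1},p_{t},p_{t+1})=0$. Since $t\geq1$ was arbitrary, $p\in\mathcal{H}$.

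The main obstacle — and the point where finite replication does the real work — is the period-by-period transfer of market clearing from the truncated economies to $\mathcal{E}$: one cannot pass to the limit in a single equation whose very form changes with $n$, so the argument must fix $t$ first and only then let $n\to\infty$, using $k_{n}\to\infty$ to guarantee that eventually the $t$-th equation of $\mathcal{E}_{k_{n}}$ is literally the $t$-th equation of $\mathcal{E}$. The accompanying technical care is ensuring the limit point stays in the open positive orthant where $z_{t}$ is defined and continuous; this is exactly what containment in the compact box $K=\prod_{t}\mathcal{K}_{t}$ provides.
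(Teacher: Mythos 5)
Your proof is correct, and its second half coincides with the paper's: Theorem \ref{theoEqSetCompact} applied to each $\mathcal{E}_{k}$ gives $\mathcal{H}_{k}$ non-empty and compact, and Cantor's Intersection Theorem on the nested compacta $A_{n}=\overline{\bigcup_{k\geq n}\mathcal{H}_{k}}$ yields $\lim_{n\rightarrow\infty}A_{n}\neq\emptyset$. For the inclusion $\lim_{n\rightarrow\infty}A_{n}\subseteq\mathcal{H}$, however, you take a genuinely different route. The paper never passes to sequences: it notes that the $j$-sighted equilibrium sets of $\mathcal{E}_{k}$ and $\mathcal{E}$ coincide on the replicated horizon, $\mathcal{V}^{\infty}_{kj}=\mathcal{V}^{\infty}_{j}$ for $1\leq j\leq k$, so that $\bigcup_{k\geq n}\mathcal{H}_{k}\subseteq\mathcal{V}^{\infty}_{n}$; taking closures (legitimate because $\mathcal{V}^{\infty}_{n}$ is compact by Lemma \ref{lemmaVjInfinite}) and intersecting, it concludes from the identity $\lim_{n\rightarrow\infty}\mathcal{V}^{\infty}_{n}=\mathcal{H}$ of Theorem \ref{theoEqSetCompact}. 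You instead use metrizability of the product topology to extract a diagonal sequence $q^{(n)}\in\mathcal{H}_{k_{n}}$ with $k_{n}\geq n$ converging to $p$, and verify Definition \ref{defSetEquilibria} directly, with finite replication guaranteeing that for each fixed $t$ the period-$t$ equation of $\mathcal{E}_{k_{n}}$ is literally $z_{t}$ once $k_{n}\geq t$, and continuity of $z_{t}$ at the strictly positive limit finishing the job. The paper's route buys brevity: the fix-$t$-then-let-$n\rightarrow\infty$ work and the positivity/continuity bookkeeping were already done once, inside Lemma \ref{lemmaVjInfinite} and Theorem \ref{theoEqSetCompact}, so the lemma reduces to set algebra. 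Your route buys transparency: it exhibits exactly where finite replication enters, rather than burying it in the identity $\mathcal{V}^{\infty}_{kj}=\mathcal{V}^{\infty}_{j}$, and it is the same diagonal technique the paper itself deploys later in the proof of Proposition \ref{propFiniteReplParetoOpt}, so it fits the paper's toolkit. One caveat applies equally to your containment $\mathcal{H}_{k}\subseteq\prod_{t\geq0}\mathcal{K}_{t}$ and to the paper's set identity: beyond block $k+1$ the tail economy $\mathcal{T}_{k}$ may have commodity dimensions different from those of $\mathcal{E}$, so the coordinate blocks do not literally align; since $\sigma$ is uniform and every block bound has the form $[\sigma^{t+1},\sigma^{-(t+1)}]$ with at least one commodity per period, a common compact box containing every $\mathcal{H}_{k}$ still exists, which is all either argument needs.
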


Lemma \ref{lemmaFiniteReplEqSet} is a result\footnote{By inspecting the proof of Lemma \ref{lemmaFiniteReplEqSet}, one may notice that Assumption \ref{assProneSavingsEconomy}, although mentioned in Definition \ref{defFiniteRep}, is not necessary for this result (i.e., the result is valid even if the overlapping economies are not prone to savings).} regarding the approximation of equilibria from $\mathcal{H}$ by equilibria from $\mathcal{H}_{k}$, $k\geq1$. It agrees with the idea that if $\{\mathcal{E}_{k}\}_{k\geq1}$ is finitely replicating $\mathcal{E}$, then the tail economies become less relevant each time, and solving an arbitrarily large number of identical equilibrium equations leads approximately to an equilibrium of $\mathcal{E}$. 

Also, the fact that $\lim_{j\rightarrow\infty}\overline{\bigcup_{k\geq j}\mathcal{H}_{k}}\neq\emptyset$ ensures that one will indeed find at least one equilibrium of $\mathcal{H}$ as the limit of a sequence $\{p^{k}\}_{k\geq1}$, with $p^{k}\in\mathcal{H}_{k}$, $k\geq1$.

The question that arises naturally is whether this limit procedure can also be applied to find elements of the subset of Pareto optimal monetary equilibria from the reference economy $\mathcal{H}^{PO}$ through approximations by elements of $\mathcal{H}^{PO}_{k}$, $k\geq1$. The next result answers affirmatively.

\begin{theorem}[Backward calculation algorithm]\label{theoAlgorithmFinal}{
If $\{\mathcal{E}_{k}\}_{k\geq1}$ is finitely replicating $\mathcal{E}$, then 
\begin{eqnarray*}
\lim_{j\rightarrow\infty}\overline{\bigcup_{k\geq j}\mathcal{H}^{PO}_{k}}\subseteq \mathcal{H}^{PO},
\end{eqnarray*}
with $\lim_{j\rightarrow\infty}\overline{\bigcup_{k\geq j}\mathcal{H}^{PO}_{k}}$ a non-empty compact set.}   
\end{theorem}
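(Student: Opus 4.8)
The plan is to split the statement into its two independent assertions and dispatch them with machinery already in hand. The inclusion $\lim_{n\rightarrow\infty}\overline{\bigcup_{k\geq n}\mathcal{H}^{PO}_{k}}\subseteq \mathcal{H}^{PO}$ is \emph{literally} the content of Proposition \ref{propFiniteReplParetoOpt}, which is stated for an arbitrary finitely replicating sequence; so nothing new is needed there and I would simply invoke it. All the genuine work therefore concerns showing that the limit set is non-empty, and the strategy is to produce an explicit cluster point of a sequence of tail-economy efficient equilibria.

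First I would observe that each replicating economy $\mathcal{E}_k$ is itself an overlapping generations economy satisfying Assumptions \ref{assBoundsPopAndEndownments}--\ref{assProneSavingsEconomy}, with the uniform parameters $\alpha_{\min},\alpha_{\max},e_{\max},\sigma,\varepsilon,\delta$ guaranteed by Definition \ref{defFiniteRep}. Consequently Theorem \ref{theoExistenceOfParetoOpt}, applied with $\mathcal{E}_k$ playing the role of the reference economy, yields $\mathcal{H}^{PO}_k\neq\emptyset$ for every $k\geq1$; note this is not circular, since Theorem \ref{theoExistenceOfParetoOpt} is a general existence result valid for any economy meeting the assumptions. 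Choosing a representative $p^{k}\in\mathcal{H}^{PO}_k\subseteq\mathcal{H}_k$ for each $k$, I would then extract a cluster point lying in $\lim_{n\rightarrow\infty}\overline{\bigcup_{k\geq n}\mathcal{H}^{PO}_{k}}$, modelling the argument on the proof of Lemma \ref{lemmaFiniteReplEqSet}. Because $\sigma$ is uniform along the sequence, Lemma \ref{lemmaVjInfinite} and Theorem \ref{theoEqSetCompact} (applied to $\mathcal{E}_k$) confine every $p^{k}$ to a product of the sets $\mathcal{K}_t$; fixing a period $t$, all $\mathcal{E}_k$ with $k\geq t$ coincide with $\mathcal{E}$ on the generations that determine the commodity space in period $t$, so the coordinates $p^{k}_t$ eventually sit in the fixed compact $\mathcal{K}_t\subset\mathbb{R}^{L_t}$. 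A diagonal extraction over $t$ (equivalently, sequential compactness of $\prod_{t\geq0}\mathcal{K}_t$ in the product topology) produces a subsequence $\{p^{k_j}\}$ converging coordinate-wise to some $p\in\prod_{t\geq0}\mathcal{K}_t$. Since $k_j\to\infty$, this $p$ is a cluster point of $\{p^{k}\}$, hence belongs to $\overline{\bigcup_{k\geq n}\mathcal{H}^{PO}_k}$ for every $n$ and therefore to the limit set, which is thus non-empty. Combining with Proposition \ref{propFiniteReplParetoOpt} then gives both the inclusion and, as a by-product, $p\in\mathcal{H}^{PO}$.

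The hard part will be the bookkeeping around the commodity dimensions of the tail economies: $\mathcal{T}_k$ need not share the period-$t$ commodity structure of $\mathcal{E}$ for $t>k$, so the family $\{\mathcal{H}^{PO}_k\}$ does not sit inside one fixed product space with common fibers. The resolution is precisely that the limit is taken as $k\to\infty$, so for any fixed coordinate the dimensions stabilise to those of $\mathcal{E}$; this is what legitimises the coordinate-wise extraction and the identification of the cluster point as an element of $\prod_{t\geq0}\mathcal{K}_t$. Once this is handled, an alternative and perhaps cleaner closing route is available: the sets $\overline{\bigcup_{k\geq n}\mathcal{H}^{PO}_k}$ form a nonincreasing family of non-empty compact sets, so Cantor's intersection theorem delivers $\bigcap_{n}\overline{\bigcup_{k\geq n}\mathcal{H}^{PO}_k}=\lim_{n\rightarrow\infty}\overline{\bigcup_{k\geq n}\mathcal{H}^{PO}_{k}}\neq\emptyset$ directly from $\mathcal{H}^{PO}_k\neq\emptyset$.
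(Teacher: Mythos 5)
Your proposal is correct and matches the paper's own proof: the paper likewise obtains the inclusion by citing Proposition \ref{propFiniteReplParetoOpt}, gets $\mathcal{H}^{PO}_{k}\neq\emptyset$ by applying Theorem \ref{theoExistenceOfParetoOpt} to each $\mathcal{E}_{k}$ (legitimate, as you note, since Definition \ref{defFiniteRep} makes Assumptions \ref{assBoundsPopAndEndownments}--\ref{assProneSavingsEconomy} hold uniformly), and concludes via Cantor's Intersection Theorem applied to the nonincreasing sequence of non-empty compact sets $\{\overline{\bigcup_{k\geq n}\mathcal{H}^{PO}_{k}}\}_{n\geq1}$ --- exactly your ``alternative closing route.'' Your explicit diagonal extraction of a cluster point (and your careful handling of the stabilizing commodity dimensions, which the paper glosses over) is just an unpacking of that same compactness argument, so there is no substantive difference in approach.
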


Theorem \ref{theoAlgorithmFinal} states that the backward calculation algorithm (i.e., starting from the well-known sets $\mathcal{H}^{PO}_{\tau k}$, $k\geq1$, moving backward through equilibrium equations to calculate $\mathcal{H}^{PO}_{k}$ and finding an accumulation point) leads to at least one element of $\mathcal{H}^{PO}$.

The distinctive features of this algorithm compared to the classical ones (e.g., \textcite[p. 238-242]{KehoeLevine_1990}) are: (i) the initial period is a sort of ``loose end'' (since there are no market clearing equations at $t=0$); (ii) the terminal (i.e., boundary) conditions are set-valued, given by the well-known sets $\{\mathcal{H}^{PO}_{\tau k}\}_{k\geq1}$; (iii) there is a clear backward direction driving the period-by-period solution of equilibrium equations that simplifies it; and (iv) the result is not just an equilibrium of $\mathcal{E}$, but an efficient (not necessarily unique) one. Furthermore, although the statement of Theorem \ref{theoAlgorithmFinal} does not aim at classical overlapping generations economies with fiat money $\mathcal{E}^{2}$, an analogous result holds due to Theorem \ref{theoClassModelExistence}. 

The following example illustrates the backward calculation algorithm.

\begin{example}\label{ex1}
The overlapping generations economy $\mathcal{E}$ is stationary. $L_{t}=1$, $t\geq0$, and households $h\in G_{t}$, $t\geq0$, have a common utility function $u^{h}(c_{t},c_{t+1})=u(c_{t},c_{t+1})=\sqrt{c_{t}}+\sqrt{3c_{t+1}/5}$ and a common endowment $e^{h}=e=(4,4/5)$. Assumptions \ref{assBoundsPopAndEndownments}--\ref{assCassCriterion} are trivially satisfied. Equilibrium equations are given by
\begin{eqnarray*}
\sum_{h\in G_{t-1}}(x^{h}_{t}(p_{t-1},p_{t})-e^{h}_{t})+\sum_{h\in G_{t}}(x^{h}_{t}(p_{t},p_{t+1})-e^{h}_{t})=0,
\end{eqnarray*}
for $t\geq1$. We normalize $p_{0}=1$ and let $r_{t}=p_{t}/p_{t+1}$, $t\geq0$. Excess demand of household $h\in G_{t}$, $t\geq0$, is given by  
\begin{eqnarray*}
(x^{h}_{t}(r_{t})-e^{h}_{t},x^{h}_{t+1}(r_{t})-e^{h}_{t+1})= (-\phi(r_{t}),r_{t}\phi(r_{t})),
\end{eqnarray*}
with $\phi:\mathbb{R}_{++}\rightarrow\mathbb{R}$ given by
\begin{eqnarray*}
    \phi(r)=\frac{12r^{2}-4}{3r^{2}+5r}=4\biggr(1-\frac{1+5r}{3r^{2}+5r}\biggr).
\end{eqnarray*}
Then, equilibrium equations can be conveniently written as
\begin{eqnarray*}
\phi(r_{t})=\sum_{h\in G_{t}}(e^{h}_{t}-x^{h}_{t}(r_{t}))=\sum_{h\in G_{t-1}}(x^{h}_{t}(r_{t-1})-e^{h}_{t})=r_{t-1}\phi(r_{t-1}),    
\end{eqnarray*}
for $t\geq1$. In order to apply Theorem \ref{theoAlgorithmFinal}, we need to first show that the economy $\mathcal{E}$ satisfies Assumption \ref{assProneSavingsEconomy}. Notice that for $h\in G_{t}$, $t\geq0$, $s^{h}(p_{t},p_{t+1})=\phi(r_{t})$ and $\phi(\cdot)$ is a strictly increasing continuous function, with $\lim_{r\rightarrow0^{+}}\phi(r)=-\infty$, $\lim_{r\rightarrow+\infty}\phi(r)=4$, and $\phi(\sqrt{3}/3)=0$. Therefore, there are $\varepsilon>0$ and $\delta>0$ such that $\phi((1+\varepsilon)^{-1})=\delta$. Then,
\begin{eqnarray*}
    \sum_{h\in G_{t}}\frac{s^{h}(p_{t},p_{t+1})}{L_{t}}=\phi(r_{t})\leq\delta=\phi\biggr(\frac{1}{1+\varepsilon}\biggr)\implies r_{t}\leq\frac{1}{1+\varepsilon}=\frac{\alpha_{t}}{1+\varepsilon},
\end{eqnarray*}
for $r_{t}\geq0$, $t\geq0$, and the economy is prone to savings. 

Through the results of \textcite{Gale_1973} and after some thought on the construction of a convenient finitely replicating sequence $\{\mathcal{E}_{k}\}_{k\geq0}$, notice that it is always possible, for any given $w>0$ with $\phi(w)>0$, to choose tail economies $\{\mathcal{T}_{k}\}_{k\geq1}$ for which the set of Pareto optimal monetary equilibria is uniquely defined through the real savings of generation $G^{\tau k}_{0}$ equating $\phi(w)$. With this sort of finitely replicating economies, the backward calculation algorithm from Theorem \ref{theoAlgorithmFinal} requires to solve
\begin{eqnarray*}
    \phi(r^{k}_{1})&=&r^{k}_{0}\phi(r^{k}_{0})\\
    \phi(r^{k}_{2})&=&r^{k}_{1}\phi(r^{k}_{1})\\
    &\vdots&\\
    \phi(w)&=&r^{k}_{k}\phi(r^{k}_{k}),
\end{eqnarray*}
in order to find the real return rates $(r^{k}_{0},\ldots,r^{k}_{k})\in\mathbb{R}^{k+1}_{++}$ that define $\mathcal{H}^{PO}_{k}$ on the first $k+1$ periods. Next, let $\psi:(-4/5,+\infty)\rightarrow \mathbb{R}_{++}$ be given by
\begin{eqnarray*}
    \psi(y)=\frac{3y+\sqrt{9y^{2}+240y+192}}{24},
\end{eqnarray*}
so that $\psi(r\phi(r))=r$, $r>0$. The solution of the equilibrium equations can be written as
\begin{eqnarray}\label{eq1Ex2}
    r^{k}_{t}=[\psi\circ\phi]^{k-t+1}(w),
\end{eqnarray}
for $0\leq t\leq k$. Since in overlapping generations economies with two-periods-living households and a single good per period an equilibrium is fully defined by any of its real return rates, our aim is to calculate $r_{0}>0$ that yields an element of $\mathcal{H}^{PO}$. Notice that (\ref{eq1Ex2}) implies
\begin{eqnarray*}
    p^{k}_{1}=\frac{p^{k}_{0}}{r^{k}_{0}}=\frac{1}{r^{k}_{0}}=\frac{1}{[\psi\circ\phi]^{k+1}(w)},
\end{eqnarray*}
with $p^{k}_{1}=\pi_{1}(\mathcal{H}^{PO}_{k})$, $k\geq1$, and $\pi_{1}(\cdot)$ the respective canonical projection. Theorem \ref{theoAlgorithmFinal} states that 
\begin{eqnarray*}
    \lim_{j\rightarrow\infty}\overline{\bigcup_{k\geq j}\mathcal{H}^{PO}_{k}}\subseteq \mathcal{H}^{PO},
\end{eqnarray*}
and, since $\overline{\bigcup_{k\geq j}\mathcal{H}^{PO}_{k}}\subseteq\mathcal{K}^{\prime}$, $j\geq1$, is compact, we have 
\begin{eqnarray*}
     \pi_{1}\biggr(\lim_{j\rightarrow\infty}\overline{\bigcup_{k\geq j}\mathcal{H}^{PO}_{k}}\biggr)=\lim_{j\rightarrow\infty}\pi_{1}\biggr(\overline{\bigcup_{k\geq j}\mathcal{H}^{PO}_{k}}\biggr)=\lim_{j\rightarrow\infty}\overline{\bigcup_{k\geq j}\biggr\{\frac{1}{[\psi\circ\phi]^{k+1}(w)}\biggr\}}\subseteq \pi_{1}(\mathcal{H}^{PO}).
\end{eqnarray*}
Next, notice that $\psi\circ\phi:\mathbb{R}_{++}\rightarrow \mathbb{R}_{++}$ is a strictly concave, strictly increasing function, with $\lim_{r\rightarrow0^{+}}\psi\circ\phi(r)=0$ and $\psi\circ\phi(1)=1$. Therefore, for all $w>0$, $\lim_{k\rightarrow\infty}[\psi\circ\phi]^{k+1}(w)=1$ and
\begin{eqnarray*}
    \overline{\bigcup_{k\geq j}\biggr\{\frac{1}{[\psi\circ\phi]^{k+1}(w)}\biggr\}}&=&\biggr\{\bigcup_{k\geq j}\biggr\{\frac{1}{[\psi\circ\phi]^{k+1}(w)}\biggr\}\biggr\}\cup \{1\}\implies\\
    \lim_{j\rightarrow\infty}\overline{\bigcup_{k\geq j}\biggr\{\frac{1}{[\psi\circ\phi]^{k+1}(w)}\biggr\}} &=& \lim_{j\rightarrow\infty}\biggr\{\bigcup_{k\geq j}\biggr\{\frac{1}{[\psi\circ\phi]^{k+1}(w)}\biggr\}\biggr\}\cup \{1\}=\{1\}.
\end{eqnarray*}

Finally, solving forward the equilibrium equations after $r_{0}=1$ yields $p=(1,1,\ldots)\in\mathcal{H}^{PO}$.
\end{example}

There are still three remarks to be made about Theorem \ref{theoAlgorithmFinal} and the backward calculation algorithm.

The first and, I believe, most \textit{fundamental} remark is that Theorem \ref{theoAlgorithmFinal} provides a justification for working with overlapping generations economies that possess well-behaved tails to model monetary policy, social security systems, etc. Let me be more specific. Even if the model maker does not know precisely how generations are going to evolve a decade or half a century from now, Theorem \ref{theoAlgorithmFinal} states that working with a well-behaved \textit{far-enough tail} (i.e., working with $\mathcal{E}_{k}$, $k>>1$) and calculating backward the equilibria departing from the set of Pareto optimal monetary equilibria of this tail leads, approximately, to efficient monetary equilibria of the ``best'' possible model economy $\mathcal{E}$.  

Clearly, the approximation holds in a topological sense and will vary according to the chosen far-enough tail. Nevertheless, Theorem \ref{theoAlgorithmFinal} provides a theoretical basis for what is actually done in many economic models (see, for example, two-stage dividend discount models or actuarial evaluations that, after a given year, simply assume a constant growth rate of the economy). 

Second, other than an algorithm for calculating Pareto optimal monetary equilibria, Theorem \ref{theoAlgorithmFinal} can be seen as a loosening of the perfect foresight hypothesis, since it states that today's economic agents can ``simplify'' the tail of the economy and not lead the current economic landscape to diverge substantially from perfect foresight efficient equilibria.

A related matter is that even if today's economic agents do not know exactly how the future will be and simplify the tail of the economy, Theorem \ref{theoAlgorithmFinal} and Proposition \ref{propNecAndSuffConditionParetoOpt} allow us to state that the key macroeconomic variable to be observed in order to monitor the optimality of the current economic landscape is real savings per capita. If real savings per capita start to plummet, we enter a dangerous zone.

The third and final remark is the following. Looking at the dimensions of the consumption spaces in each period, the tail economies $\{\mathcal{T}_{k}\}_{k\geq1}$ with the lowest possible dimensions are those in which $L^{\tau k}_{0}=L_{k+1}$ and $L^{\tau k}_{t}=1$, $t\geq1$. These low dimensional tail economies facilitate the backward calculation given by Theorem \ref{theoAlgorithmFinal}, but one might suspect that they are too meager and therefore leave aside values of Pareto optimal monetary equilibria that one would like to find. This intuition can be made rigorous in the following sense. 

A well-known result by \textcite{Debreu_1970}, following Sard's theorem, allows us to state that, generically, $\mathbf{J}Z_{j}(p)$, $j\geq1$, is surjective for all $p\in\mathcal{Z}_{j}$ and, therefore, $\mathcal{Z}_{j}$ is a smooth manifold without border of dimension $L_{0}+L_{j+1}-1$ \parencite{GuilleminPollack_1974} with $\mathcal{V}_{j}\subseteq\mathcal{Z}_{j}$.

Theorem \ref{theoEqSetCompact} implies that if $p\in\mathcal{H}^{PO}$ then $\pi_{t+1}(p)\in\pi_{t+1}(\mathcal{V}_{j})$, for all $j\geq t$, with $\pi_{t+1}(p)=(p_{0},\ldots,p_{t+1})$. Therefore, all prices until period $t+1$ of the Pareto optimal monetary equilibria of $\mathcal{E}$ are contained in $\pi_{t+1}(\mathcal{V}_{j})\subseteq\pi_{t+1}(\mathcal{Z}_{j})$, $j\geq t$. In other words, all finite components $(p_{0},\ldots,p_{t+1})$ of $p\in\mathcal{H}^{PO}$ are contained in the projection of manifolds of dimension $L_{0}+L_{j+1}-1$, $j\geq t$.

However, when applying the backward calculation algorithm from Theorem \ref{theoAlgorithmFinal}, we aim to approximate elements from $\mathcal{H}^{PO}$ by elements from $\mathcal{H}^{PO}_{k}$, $k\geq1$.  Using the same reasoning as before, all the finite components $(p^{k}_{0},\ldots,p^{k}_{t+1})$ of $p^{k}\in\mathcal{H}^{PO}_{k}$ are contained in the projection of manifolds of dimension $L^{k}_{0}+L^{k}_{j+1}-1=L_{0}+L^{k}_{j+1}-1$, $j\geq t$. 

If $L^{k}_{j}=1$, $j\geq k+2$, we see that all finite components $(p^{k}_{0},\ldots,p^{k}_{t+1})$ of $p^{k}\in\mathcal{H}^{PO}_{k}$ are contained in the projection of manifolds of dimension $L_{0}+L^{k}_{j+1}-1=L_{0}$, $j\geq t+1$. Since $L_{0}\leq L_{0}+L_{j+1}-1$, $j\geq t$, we are, if $\inf_{t\geq0} \{L_{t}\}>1$, ``losing'' dimensions relative to the reference economy $\mathcal{E}$. This remark reveals, in particular, that when choosing tail economies $\{\mathcal{T}_{k}\}_{k\geq1}$ in order to build finitely replicating sequences $\{\mathcal{E}_{k}\}_{k\geq1}$ of $\mathcal{E}$ and apply the algorithm of Theorem \ref{theoAlgorithmFinal}, one must choose dimensions such that $L^{\tau k}_{t}\geq \inf_{i\geq0} \{L_{i}\}$, $t\geq0$, $k\geq1$.

\section{Concluding remarks}\label{sec5}

From a theoretical perspective, this paper provides sufficient conditions for the existence of Pareto optimal monetary equilibria on consumption-loan, non-stationary overlapping generations economies. Fundamentally, the economies must be prone to savings, a condition that relates real savings per capita to the real return rate of the economy and can be linked to skewed aggregate endowment distributions toward younger periods of life, sufficiently high demographic growth rates, and meaningfully valued consumption in older periods of life. 

Due to its macroeconomic implications, a result from \hyperref[sec3]{Section 3} that deserves careful attention is Proposition \ref{propNecAndSuffConditionParetoOpt}, which shows that the dynamics of real savings per capita completely characterize efficient equilibria in prone-to-savings economies. When it comes to monetary policy, this indicates that it is important to monitor not only the price level (since inflation can make (\ref{eqEfficientEq}) converge and, therefore, lead to an inefficient allocation) but also real savings per capita.

From an applied perspective, this paper shows that an adequate way to calculate efficient monetary equilibria in non-stationary overlapping generations economies (and, in particular, transition paths) is by choosing convenient tail economies with well-known sets of equilibria and then using market clearing equations to move backward. This algorithm provides a clear step-by-step direction for solving equilibrium equations and approximating efficient monetary equilibria.

Finally, the specific economic problem that motivated the results of this paper should be clearly stated. Worldwide plummeting fertility rates are leading to a rapid demographic transition, putting high pressure on pay-as-you-go social security systems. The optimal design of these systems can be achieved through overlapping generations models and, specifically, through non-stationary consumption-loan models with heterogeneous households \parencite{Dognini_2026}. However, this design requires the calculation of at least one efficient equilibrium.

\appendix
\section*{Appendix}\label{appn}
All proofs from the results in \hyperref[sec3]{Section 3} and \hyperref[sec4]{Section 4} are given in this \hyperref[appn]{Appendix}.
\begin{proof}[Proof of Lemma \ref{lemmaVjInfinite}]
First, we must prove that $\mathcal{V}_{j}\subset\mathbb{R}^{\sum^{j+1}_{i=0}L_{i}}_{++}$, $j\geq1$, is compact. Let $(p_{0},\ldots,p_{j+1})\in\mathcal{V}_{j}$, $j\geq1$. By Definition \ref{defSetJSightedEquilibria}, $p_{01}=1$, $(p_{0},p_{1})\in\mathcal{B}_{0}(\sigma_{0})$ and $(p_{j}/p_{j1},p_{j+1}/p_{j1})\in\mathcal{B}_{j}(\sigma_{j})$. Market clearing equations at $1\leq t \leq j$ imply the following bound on average demand of generation $G_{t}$ when young,
\begin{eqnarray*}
\biggr\Vert \sum_{h\in G_{t}} \frac{x^{h}_{t}(p_{t},p_{t+1})}{H_{t}}\biggr\Vert_{\infty}&=&\biggr\Vert \sum_{h\in G_{t}} \frac{e^{h}_{t}}{H_{t}} + \frac{H_{t-1}}{H_{t}}\sum_{h\in G_{t-1}}\frac{e^{h}_{t}-x^{h}_{t}(p_{t-1},p_{t})}{H_{t-1}}\biggr\Vert_{\infty}\\
&\leq&\biggr\Vert \sum_{h\in G_{t}} \frac{e^{h}_{t}}{H_{t}} \biggr\Vert_{\infty}+\alpha^{-1}_{\min}\biggr\Vert\sum_{h\in G_{t-1}} \frac{e^{h}_{t}}{H_{t-1}}\biggr\Vert_{\infty}\\
&\leq& (1+\alpha^{-1}_{\min})e_{\max}\leq \beta e_{\max}. 
\end{eqnarray*}

Also, market clearing at $1\leq t \leq j$ implies the following bound on average demand of generation $G_{t-1}$ when old,
\begin{eqnarray*}
\biggr\Vert \sum_{h\in G_{t-1}} \frac{x^{h}_{t}(p_{t-1},p_{t})}{H_{t-1}}\biggr\Vert_{\infty}&=&\biggr\Vert \sum_{h\in G_{t-1}} \frac{e^{h}_{t}}{H_{t-1}} + \frac{H_{t}}{H_{t-1}}\sum_{h\in G_{t}}\frac{e^{h}_{t}-x^{h}_{t}(p_{t},p_{t+1})}{H_{t}}\biggr\Vert_{\infty}\\
&\leq&\biggr\Vert \sum_{h\in G_{t-1}} \frac{e^{h}_{t}}{H_{t-1}} \biggr\Vert_{\infty}+\alpha_{\max}\biggr\Vert\sum_{h\in G_{t}} \frac{e^{h}_{t}}{H_{t}}\biggr\Vert_{\infty}\\
&\leq& (1+\alpha_{\max})e_{\max}\leq \beta e_{\max}.
\end{eqnarray*}
We conclude that average demand of generation $G_{t}$, $1\leq t\leq j-1$, satisfies
\begin{eqnarray*}
   \biggr\Vert \sum_{h\in G_{t}} \frac{x^{h}(p_{t},p_{t+1})}{H_{t}}\biggr\Vert_{\infty}\leq \beta e_{\max},
\end{eqnarray*}
and Assumption \ref{assBoundsPrices} implies that $(p_{t}/p_{t1},p_{t+1}/p_{t1})\in\mathcal{B}_{t}(\sigma_{t})$, $1\leq t\leq j-1$. Therefore,  $(p_{t}/p_{t1},p_{t+1}/p_{t1})\in\mathcal{B}_{t}(\sigma_{t})$, for all $0\leq t\leq j$. Since $(p_{0},p_{1})\in\mathcal{B}_{0}(\sigma_{0})$, then
\begin{eqnarray*}
    (\sigma_{0},\ldots,\sigma_{0})\leq p_{0} \leq  (\sigma_{0}^{-1},\ldots,\sigma_{0}^{-1})\\
    (\sigma_{0},\ldots,\sigma_{0})\leq p_{1} \leq  (\sigma_{0}^{-1},\ldots,\sigma_{0}^{-1}),
\end{eqnarray*}
so that $p_{0},p_{1}\in\mathcal{K}_{0}=\mathcal{K}_{1}$. In particular, $\sigma_{0}\leq p_{11}\leq \sigma_{0}^{-1}$. Then, $(p_{1}/p_{11},p_{2}/p_{11})\in\mathcal{B}_{1}(\sigma_{1})$ implies
\begin{eqnarray*}
     (\sigma_{0}\sigma_{1},\ldots,\sigma_{0}\sigma_{1})\leq 
     (\sigma_{1} p_{11},\ldots,\sigma_{1} p_{11})\leq p_{2} \leq  \biggr(\frac{p_{11}}{\sigma_{1}},\ldots,\frac{p_{11}}{\sigma_{1}}\biggr)\leq \biggr(\frac{1}{\sigma_{0}\sigma_{1}},\ldots,\frac{1}{\sigma_{0}\sigma_{1}}\biggr).
\end{eqnarray*}
By induction, we can assert that 
\begin{eqnarray*}
    \biggr(\prod^{t-1}_{j=0}\sigma_{j},\ldots,\prod^{t-1}_{j=0}\sigma_{j}\biggr)\leq p_{t} \leq \biggr(\frac{1}{\prod^{t-1}_{j=0}\sigma_{j}},\ldots,\frac{1}{\prod^{t-1}_{j=0}\sigma_{j}}\biggr),
\end{eqnarray*}
for $1\leq t\leq j+1$. We conclude that $p_{t}\in\mathcal{K}_{t}$, $0\leq t\leq j+1$, and therefore $\mathcal{V}_{j}\subseteq\prod^{j+1}_{t=0}\mathcal{K}_{t}$.

Next, let $\{p^{n}\}_{n\geq1}$ be any sequence with $p^{n}\in\mathcal{V}_{j}$, $n\geq1$, and $\lim_{n\rightarrow\infty}p^{n}=p\in\mathbb{R}^{\sum^{j+1}_{i=0}L_{i}}$. In particular, $p_{01}=\lim_{n\rightarrow\infty} p^{n}_{01}=1$. Also, for $0\leq t\leq j+1$, $p^{n}_{t}\in\mathcal{K}_{t}$, $n\geq1$, which implies that $p_{t}=\lim_{n\rightarrow\infty}p^{n}_{t}\in\mathcal{K}_{t}\subset\mathbb{R}^{L_{t}}_{++}$, since $\mathcal{K}_{t}$ is closed. Then, continuity of Walrasian demand functions on strictly positive prices leads to 
\begin{eqnarray*}
z_{t}(p_{t-1},p_{t},p_{t+1})=\lim_{n\rightarrow\infty} z_{t}(p^{n}_{t-1},p^{n}_{t},p^{n}_{t+1})=0,
\end{eqnarray*}
for $1\leq t\leq j$. Since $\mathcal{B}_{0}(\sigma_{0})$ and $\mathcal{B}_{j}(\sigma_{j})$ are closed and $\lim_{n\rightarrow\infty}p^{n}_{j1}=p_{j1}>0$, we also have $(p_{0},p_{1})\in \mathcal{B}_{0}(\sigma_{0})$ and  $(p_{j}/p_{j1},p_{j+1}/p_{j1})\in \mathcal{B}_{j}(\sigma_{j})$. Therefore, $p\in\mathcal{V}_{j}$ and $\mathcal{V}_{j}$ is a compact set. 

The fact that $\mathcal{V}_{j}$ is non-empty is derived after Theorem 5 from \textcite[119]{ArrowHahn_1971}. By Assumption \ref{assResourceRelated}, every household $h\in\bigcup^{j}_{t=0}G_{t}$ in the ``insulated'' economy formed by all generations until time $j$ is indirectly resource related to every other. Therefore, the aforementioned theorem and Assumption \ref{assBoundsPrices} imply that this economy has an equilibrium $p^{\prime}=(p^{\prime}_{0},\ldots,p^{\prime}_{j+1})\in\mathbb{R}^{\sum^{j+1}_{i=0}L_{i}}_{++}$, with $p^{\prime}_{01}=1$ after a suitable normalization. Notice that the market clearing equation for this insulated economy in period $t=0$ is given by
\begin{eqnarray*}
    \sum_{h\in G_{0}}x^{h}_{0}(p^{\prime}_{0},p^{\prime}_{1})=\sum_{h\in G_{0}}e^{h}_{0},
\end{eqnarray*}
and in period $t=j+1$ is given by
\begin{eqnarray*}
    \sum_{h\in G_{j}}x^{h}_{j+1}(p^{\prime}_{j},p^{\prime}_{j+1})=\sum_{h\in G_{j}}e^{h}_{j+1}.
\end{eqnarray*}
Then, 
\begin{eqnarray*}
\biggr\Vert \sum_{h\in G_{0}} \frac{x^{h}_{0}(p^{\prime}_{0},p^{\prime}_{1})}{H_{0}}\biggr\Vert_{\infty}=\biggr\Vert \sum_{h\in G_{0}} \frac{e^{h}_{0}}{H_{0}}\biggr\Vert_{\infty} \leq e_{\max}< \beta e_{\max},
\end{eqnarray*}
and
\begin{eqnarray*} 
\biggr\Vert \sum_{h\in G_{j}} \frac{x^{h}_{j+1}(p^{\prime}_{j},p^{\prime}_{j+1})}{H_{j}}\biggr\Vert_{\infty}=\biggr\Vert \sum_{h\in G_{j}} \frac{e^{h}_{j+1}}{H_{j}}\biggr\Vert_{\infty}\leq e_{\max}<\beta e_{\max}.  
\end{eqnarray*}
Also, the market clearing equation in period $1\leq t\leq j$ is given by $z_{t}(p^{\prime}_{t-1},p^{\prime}_{t},p^{\prime}_{t+1})=0$, and allows us to limit average demand from generations $G_{0}$ in $t=1$ and $G_{j}$ in $t=j$ by $\beta e_{\max}$. Therefore, Assumption \ref{assBoundsPrices} allows us to conclude that $p^{\prime}\in\mathcal{V}_{j}$ and $\mathcal{V}_{j}$ is non-empty.

Finally, notice that if $p\in\mathcal{V}^{\infty}_{j+1}$, then $(p_{0},p_{1})\in\mathcal{B}_{0}(\sigma_{0})$, $p_{01}=1$ and $Z_{j+1}(p_{0},\ldots,p_{j+2})=0$. In particular, $Z_{j}(p_{0},\ldots,p_{j+1})=0$ and we can, once again, use Assumption \ref{assBoundsPrices} and the market clearing equations in periods $t=j$ and $t=j+1$ to conclude that $(p_{j}/p_{j1},p_{j+1}/p_{j1})\in \mathcal{B}_{j}(\sigma_{j})$. Then, $(p_{0},\ldots,p_{j+1})\in \mathcal{V}_{j}$ and, therefore, $p\in\mathcal{V}^{\infty}_{j}$. We conclude that $\mathcal{V}^{\infty}_{j+1}\subseteq\mathcal{V}^{\infty}_{j}$.
\end{proof}

\begin{proof}[Proof of Theorem~{\upshape\ref{theoEqSetCompact}}]
Lemma \ref{lemmaVjInfinite} implies that $\{\mathcal{V}^{\infty}_{j}\cap\mathcal{K}\}_{j\geq1}$ is a nested sequence of sets, with $\mathcal{K}=\prod^{\infty}_{t=0}\mathcal{K}_{t}$ and  
\begin{eqnarray*}
    \mathcal{V}^{\infty}_{j}\cap \mathcal{K}=\mathcal{V}_{j}\times\prod^{\infty}_{t\geq j+2}\mathcal{K}_{t},
\end{eqnarray*}
for $j\geq1$. Then, Lemma \ref{lemmaVjInfinite} and Tychonoff's Theorem imply $\mathcal{V}^{\infty}_{j}\cap \mathcal{K}$, $j\geq1$, is non-empty and compact. 

First, we prove that $\mathcal{H}\subseteq\lim_{j\rightarrow\infty}(\mathcal{V}^{\infty}_{j}\cap\mathcal{K})$. Let $p\in\mathcal{H}$. Definition \ref{defSetEquilibria} implies $p_{01}=1$ and $(p_{0},p_{1})\in\mathcal{B}_{0}(\sigma_{0})$. By the same reasoning as in the proof of Lemma \ref{lemmaVjInfinite}, we can use market clearing equations to limit the average demand of generation $G_{t}$, $t\geq1$, by $\beta e_{\max}$ and through Assumption \ref{assBoundsPrices} we have
$(p_{t}/p_{t1},p_{t+1}/p_{t1})\in\mathcal{B}_{t}(\sigma_{t})$, $t\geq1$. Also, by the same reasoning, we have $p_{t}\in\mathcal{K}_{t}$, $t\geq0$, thus yielding $p\in\mathcal{K}$.

Then, $(p_{0},\ldots,p_{j+1})\in\mathcal{V}_{j}$, $j\geq1$, so that $p\in\mathcal{V}_{j}^{\infty}$, $j\geq1$. We conclude that $p\in \lim_{j\rightarrow\infty}\mathcal{V}^{\infty}_{j}\cap\mathcal{K}$ and, therefore, $\mathcal{H}\subseteq\lim_{j\rightarrow\infty}(\mathcal{V}^{\infty}_{j}\cap\mathcal{K})$. 

Next, if $p\in\lim_{j\rightarrow\infty}(\mathcal{V}^{\infty}_{j}\cap\mathcal{K})$, then $p\in \mathcal{V}^{\infty}_{j}\cap\mathcal{K}$, $j\geq1$, since $\{\mathcal{V}^{\infty}_{j}\cap\mathcal{K}\}_{j\geq1}$ is a nested sequence of sets. Definition \ref{defSetJSightedEquilibria} implies $p_{01}=1$, $(p_{0},p_{1})\in\mathcal{B}_{0}(\sigma_{0})$ and $z_{t}(p_{t-1},p_{t},p_{t+1})=0$, $t\geq1$. We conclude that $p\in\mathcal{H}$ and, therefore, $\mathcal{H}=\lim_{j\rightarrow\infty}(\mathcal{V}^{\infty}_{j}\cap\mathcal{K})$. 

Since $\mathcal{H}=\lim_{j\rightarrow\infty}(\mathcal{V}^{\infty}_{j}\cap\mathcal{K})$, Cantor's Intersection Theorem implies that $\mathcal{H}$ is non-empty and compact. Finally, notice that
\begin{eqnarray*}
    \lim_{j\rightarrow\infty}\mathcal{V}_{j}^{\infty}=\lim_{j\rightarrow\infty}(\mathcal{V}^{\infty}_{j}\cap\mathcal{K})\subseteq\mathcal{K},
\end{eqnarray*}
and, therefore, $\mathcal{H}=\lim_{j\rightarrow\infty}\mathcal{V}^{\infty}_{j}\subseteq\mathcal{K}$.
\end{proof}

\begin{proof}[Proof of Lemma~{\upshape\ref{lemmaFirstParetoOptEquilibria}}]
Since $p\in\mathcal{H}$, $z_{t}(p_{t-1},p_{t},p_{t+1})=0$, $t\geq1$, implies that
\begin{eqnarray}\label{eq1LemmaFirst}
\sum_{h\in G_{t-1}}p_{t}\cdot(x^{h}_{t}(p_{t-1},p_{t})-e^{h}_{t})+\sum_{h\in G_{t}}p_{t}\cdot(x^{h}_{t}(p_{t},p_{t+1})-e^{h}_{t})=0.
\end{eqnarray}
Summing the budget constraints of all households $h\in G_{t}$, $t\geq0$, furnishes, through Walras' law, 
\begin{eqnarray}\label{eq2LemmaFirst}
\sum_{h\in G_{t}}p_{t}\cdot(x^{h}_{t}(p_{t},p_{t+1})-e^{h}_{t})+\sum_{h\in G_{t}}p_{t+1}\cdot(x^{h}_{t+1}(p_{t},p_{t+1})-e^{h}_{t+1})=0. 
\end{eqnarray}
Then, (\ref{eq1LemmaFirst}) and (\ref{eq2LemmaFirst}) imply that 
\begin{eqnarray*}
\sum_{h\in G_{0}}p_{0}\cdot(x^{h}_{0}(p_{0},p_{1})-e^{h}_{0})=\sum_{h\in G_{t}}p_{t}\cdot(x^{h}_{t}(p_{t},p_{t+1})-e^{h}_{t}),
\end{eqnarray*}
for $t\geq1$. Next, Definition \ref{defRealSavings} implies
\begin{eqnarray*}
    \sum_{h\in G_{t}}\frac{s(p_{t},p_{t+1})}{H_{t}}=\sum_{h\in G_{t}}\frac{p_{t}}{H_{t}\Vert p_{t}\Vert}\cdot (e^{h}_{t}-x^{h}_{t}(p_{t},p_{t+1}))=\frac{1}{H_{t}\Vert p_{t}\Vert}\sum_{h\in G_{0}}p_{0}\cdot(e^{h}_{0}-x^{h}_{0}(p_{0},p_{1})).
\end{eqnarray*}
for $t\geq0$. By Assumption \ref{assCassCriterion}, 
\begin{eqnarray*}
    \mathcal{H}^{PO}=\biggr\{(p_{0},p_{1},\ldots)\in\mathcal{H}\mid \sum^{+\infty}_{t=0}\frac{1}{H_{t}\Vert p_{t}\Vert}=+\infty\biggr\}.
\end{eqnarray*}
Since all terms in $\sum^{+\infty}_{t=0}(H_{t}\Vert p_{t}\Vert)^{-1}$ are positive, $p\notin\mathcal{H}^{PO}$ implies convergence of the series and, therefore, $\lim_{t\rightarrow\infty}(H_{t}\Vert p_{t}\Vert)^{-1}=0$. Finally,
\begin{eqnarray*}
 \lim_{t\rightarrow\infty}\sum_{h\in G_{t}}\frac{s(p_{t},p_{t+1})}{H_{t}}=\lim_{t\rightarrow\infty}\frac{1}{H_{t}\Vert p_{t}\Vert}\sum_{h\in G_{0}}p_{0}\cdot(e^{h}_{0}-x^{h}_{0}(p_{0},p_{1}))=0.
\end{eqnarray*}
\end{proof}

\begin{proof}[Proof of Proposition~{\upshape\ref{propSufficientConditionProneSavings}}]
Let $t\geq0$, $(p_{t},p_{t+1})\in\mathbb{R}^{L_{t}+L_{t+1}}_{++}$ and $(p_{t}/p_{t1},p_{t+1}/p_{t1})\in\mathcal{B}_{t}(\sigma_{t})$. Real savings of household $h\in G_{t}$, $t\geq0$, is given by
\begin{eqnarray*}
s^{h}(p_{t},p_{t+1})=\frac{p_{t}\cdot e^{h}_{t}-\gamma(p_{t},p_{t+1})(p_{t},p_{t+1})\cdot e^{h}}{\Vert p_{t} \Vert},
\end{eqnarray*}
with
\begin{eqnarray*}
\gamma(p_{t},p_{t+1}) = \frac{\sum^{L_{t}}_{i=1}\lambda^{\eta_{t}}_{ti}p^{1-\eta_{t}}_{ti}}{\sum^{L_{t}}_{i=1}\lambda^{\eta_{t}}_{ti}p^{1-\eta_{t}}_{ti} + \sum^{L_{t+1}}_{i=1}\tilde{\lambda}^{\eta_{t}}_{(t+1)i}p^{1-\eta_{t}}_{(t+1)i}},  
\end{eqnarray*}
and $\eta_{t}=1/(1-\rho_{t})>1$. Real savings per capita of generation $G_{t}$, $t\geq0$, is given by
\begin{eqnarray*}
  \sum_{h\in G_{t}}\frac{s^{h}(p_{t},p_{t+1})}{H_{t}}=  \frac{p_{t}\cdot e^{t}_{t}-\gamma(p_{t},p_{t+1}) (p_{t},p_{t+1})\cdot e^{t} }{\Vert p_{t} \Vert},
\end{eqnarray*}
with $e^{t}=(e^{t}_{t},e^{t}_{t+1})=\sum_{h\in G_{t}}e^{h}/H_{t}\in\mathbb{R}^{L_{t}+L_{t+1}}_{++}$. Notice that $(p_{t}/p_{t1},p_{t+1}/p_{t1})\in\mathcal{B}_{t}(\sigma_{t})$ implies 
\begin{eqnarray*}
    (\sigma_{t} p_{t1},\ldots,\sigma_{t} p_{t1})\leq (p_{t},p_{t+1})\leq (\sigma^{-1}_{t} p_{t1},\ldots,\sigma^{-1}_{t} p_{t1}).
\end{eqnarray*}
Since $\gamma(p_{t},p_{t+1})$ is strictly decreasing on its first $L_{t}$ arguments and strictly increasing on the last $L_{t+1}$, then
\begin{eqnarray*}
\gamma(p_{t},p_{t+1})&\leq& \frac{\sigma^{1-\eta_{t}}_{t}\sum^{L_{t}}_{i=1}\lambda^{\eta_{t}}_{ti}}{\sigma^{1-\eta_{t}}_{t}\sum^{L_{t}}_{i=1}\lambda^{\eta_{t}}_{ti}+ \sigma^{\eta_{t}-1}_{t}\sum^{L_{t+1}}_{i=1}\tilde{\lambda}^{\eta_{t}}_{(t+1)i}}\\
&=&\frac{\sum^{L_{t}}_{i=1}\lambda^{\eta_{t}}_{ti}}{\sum^{L_{t}}_{i=1}\lambda^{\eta_{t}}_{ti}+ \sigma^{2(\eta_{t}-1)}_{t}\sum^{L_{t+1}}_{i=1}\tilde{\lambda}^{\eta_{t}}_{(t+1)i}}=\gamma_{t}.  
\end{eqnarray*}

Next, since $\gamma_{t}\leq 1$, $\Vert e^{t}_{t+1}\Vert_{\infty}\leq e_{\max}$ and $\alpha^{-1}_{t}\leq\alpha^{-1}_{\min}$, then
\begin{eqnarray*}
\inf_{t\geq0}\{(1-\gamma_{t})\min_{1\leq i\leq L_{t}}e^{t}_{ti}-\gamma_{t}\Vert e^{t}_{t+1}\Vert_{\infty}\alpha^{-1}_{t}\}>0,
\end{eqnarray*}
implies the existence of $\varepsilon>0$ and $\delta>0$ such that
\begin{eqnarray*}
(1-\gamma_{t})\min_{1\leq i\leq L_{t}}e^{t}_{ti}-\gamma_{t}\Vert e^{t}_{t+1}\Vert_{\infty}(1+\varepsilon)\alpha^{-1}_{t}>\delta,
\end{eqnarray*}
for $t\geq0$. Finally, if $\Vert p_{t}\Vert/\Vert p_{t+1} \Vert>\alpha_{t}/(1+\varepsilon)$, $t\geq0$, then
\begin{eqnarray*}
 \sum_{h\in G_{t}}\frac{s^{h}(p_{t},p_{t+1})}{H_{t}}&=&  (1-\gamma(p_{t},p_{t+1}))\frac{p_{t}}{\Vert p_{t} \Vert}\cdot e^{t}_{t}-\gamma(p_{t},p_{t+1}) \frac{\Vert p_{t+1}\Vert}{\Vert p_{t}\Vert} \frac{p_{t+1}}{\Vert p_{t+1}\Vert}\cdot e^{t}_{t+1} \\
 &\geq&(1-\gamma_{t})\min_{1\leq i\leq L_{t}}e^{t}_{ti}-\gamma_{t}\Vert e^{t}_{t+1}\Vert_{\infty}(1+\varepsilon)\alpha_{t}^{-1} >\delta.
\end{eqnarray*}
The contrapositive yields the proposition statement.
\end{proof}

\begin{proof}[Proof of Lemma~{\upshape\ref{lemmaSecondParetoOptEquilibria}}]
Let $p\in\mathcal{H}$ and $t\geq0$ be such that $\sum_{h\in G_{t}}s(p_{t},p_{t+1})/H_{t}\leq \delta$. If $t=0$, then Definition \ref{defSetEquilibria} implies $p_{01}=1$ and $(p_{0},p_{1})\in\mathcal{B}_{0}(\sigma_{0})$. If $t>0$, market clearing equations and Assumption \ref{assBoundsPrices} imply $(p_{t}/p_{t1},p_{t+1}/p_{t1})\in\mathcal{B}_{t}(\sigma_{t})$. Then, by Assumption \ref{assProneSavingsEconomy}, we have
\begin{eqnarray}\label{eqLemmaSecondParetoOpt}
\frac{\Vert p_{t}\Vert}{\Vert p_{t+1}\Vert} \leq \frac{\alpha_{t}}{1+\varepsilon}.
\end{eqnarray}
Since $p\in\mathcal{H}$, we can write
\begin{eqnarray*}
    \sum_{h\in G_{t+1}}\frac{s^{h}(p_{t+1},p_{t+2})}{H_{t+1}}&=& \frac{1}{H_{t+1}}\sum_{h\in G_{t+1}}\frac{p_{t+1}}{\Vert p_{t+1}\Vert}\cdot (e^{h}_{t+1}-x^{h}_{t+1}(p_{t+1},p_{t+2}))\\
    &=&\frac{1}{H_{t+1}}\sum_{h\in G_{t}}\frac{p_{t+1}}{\Vert p_{t+1}\Vert}\cdot (x^{h}_{t+1}(p_{t},p_{t+1})-e^{h}_{t+1})\\
    &=&\frac{\Vert p_{t}\Vert}{H_{t+1}\Vert p_{t+1}\Vert}\sum_{h\in G_{t}}\frac{p_{t}}{\Vert p_{t}\Vert}\cdot (e^{h}_{t}-x^{h}_{t}(p_{t},p_{t+1}))\\
    &=&\frac{\Vert p_{t}\Vert}{\alpha_{t}\Vert p_{t+1}\Vert} \sum_{h\in G_{t}}\frac{s^{h}(p_{t},p_{t+1})}{H_{t}}\leq \frac{\delta}{1+\varepsilon}<\delta,
\end{eqnarray*}
where the first equality is derived from Definition \ref{defProneSavings}; the second from $z_{t+1}(p_{t},p_{t+1},p_{t+2})=0$; the third from Walras' law applied to all households from generation $G_{t}$; and the previous to last inequality from (\ref{eqLemmaSecondParetoOpt}). Therefore, $\sum_{h\in G_{t+1}}s(p_{t+1},p_{t+2})/H_{t+1}\leq \delta$ and, by induction, $\sum_{h\in G_{t+i}}s(p_{t+i},p_{t+i+1})/H_{t+i}\leq \delta$, for all $i\geq 0$. 

Then, by Assumptions \ref{assBoundsPrices} and \ref{assProneSavingsEconomy}, $\Vert p_{t+i+1}\Vert \geq \Vert p_{t+i}\Vert(1+\varepsilon)\alpha^{-1}_{t+i}$, $i\geq 0$, and
\begin{eqnarray*}
\Vert p_{t+i}\Vert \geq  \Vert p_{t}\Vert(1+\varepsilon)^{i}\prod^{i-1}_{j=0}\alpha^{-1}_{t+j}=\Vert p_{t}\Vert(1+\varepsilon)^{i}\frac{H_{t}}{H_{t+i}}\implies \frac{1}{H_{t+i}\Vert p_{t+i} \Vert}\leq \frac{1}{H_{t}\Vert p_{t} \Vert(1+\varepsilon)^{i}},
\end{eqnarray*}
for $i\geq 1$. Therefore, 
\begin{eqnarray*}
\sum^{+\infty}_{i=0}\frac{1}{H_{i}\Vert p_{i} \Vert}\leq\sum^{t-1}_{i=0}\frac{1}{H_{i}\Vert p_{i}\Vert}+\frac{1}{H_{t}\Vert p_{t} \Vert}\sum^{+\infty}_{i=0}\frac{1}{(1+\varepsilon)^{i}}<+\infty.
\end{eqnarray*}
By Assumption \ref{assCassCriterion}, we conclude that $p\notin\mathcal{H}^{PO}$.
\end{proof}

\begin{proof}[Proof of Proposition~{\upshape\ref{propNecAndSuffConditionParetoOpt}}]
Let $p\in\mathcal{H}$. If $p\notin\mathcal{H}^{PO}$, Lemma \ref{lemmaFirstParetoOptEquilibria} implies
\begin{eqnarray*}
\lim_{t\rightarrow\infty}\sum_{h\in G_{t}}\frac{s^{h}(p_{t},p_{t+1})}{H_{t}}=0.   
\end{eqnarray*} 
If $\lim_{t\rightarrow\infty}\sum_{h\in G_{t}}s(p_{t},p_{t+1})/H_{t}=0$, then there is $T\geq0$ such that 
\begin{eqnarray*}
\sum_{h\in G_{T}}\frac{s(p_{T},p_{T+1})}{H_{T}}\leq\delta,
\end{eqnarray*}
and, by Lemma \ref{lemmaSecondParetoOptEquilibria}, $p\notin\mathcal{H}^{PO}$.
\end{proof}

\begin{proof}[Proof of Theorem~{\upshape\ref{theoExistenceOfParetoOptJME}}]
Let $\mathcal{F}_{k}$, $k\geq0$, be the \textit{finite economy} formed by all generations $G_{t}$, $0\leq t\leq k$, from $\mathcal{E}$ and $H_{k+1}$ identical households, each with an endowment $e^{*}=(0,\ldots,0)\times (e_{\max},\ldots,e_{\max})\in\mathbb{R}^{L_{0}\times L_{k+1}}_{+}$ and utility function $u^{*}:\mathbb{R}^{L_{0}\times L_{k+1}}_{+}\rightarrow\mathbb{R}$, 
\begin{eqnarray*}
    u^{*}(c_{0},c_{k+1})=\sum^{L_{0}}_{i=1}\log c_{0i}.
\end{eqnarray*}
These ``star'' households can be seen as time travelers, since they hold an endowment in period $k+1$ but only consume in period $0$ (i.e., they trade with their parents' generation and with their $(k-1)$-great-grandparents' generation). Furthermore, these time travelers provide a way for ``chopping off a finite segment of the infinite model and then tying the two ends together to form a closed loop'' \parencite[p. 365]{CassYaari_1966}.

Notice that $u^{*}(\cdot)$ is continuous, non-decreasing, semi-strictly quasiconcave and without local maxima. Also, every time traveler is resource related to at least one household from generation $G_{k}$ and, therefore, Assumption \ref{assResourceRelated} implies that all households from $\mathcal{F}_{k}$ are indirectly resource related.

Then, Theorem 5 from \textcite[p. 119]{ArrowHahn_1971} and Assumption \ref{assBoundsPrices} imply the existence of an equilibrium $p^{k}=(p^{k}_{0},\ldots,p^{k}_{k+1})\in\mathbb{R}^{\sum^{k+1}_{i=0}L_{i}}_{++}$ for $\mathcal{F}_{k}$, $k\geq0$, with $p^{k}_{01}=1$ after a suitable normalization. Notice that the market clearing equation in period $t=0$ and Assumption \ref{assBoundsPopAndEndownments} allow us to write
\begin{eqnarray}
\biggr\Vert \sum_{h\in G_{0}} \frac{x^{h}_{0}(p^{k}_{0},p^{k}_{1})}{H_{0}}\biggr\Vert_{\infty}&\leq& \biggr\Vert \sum_{h\in G_{0}} \frac{x^{h}_{0}(p^{k}_{0},p^{k}_{1})}{H_{0}}+ \frac{H_{k+1} x^{*}_{0}(p^{k}_{0},p^{k}_{k+1})}{H_{0}}\biggr\Vert_{\infty}\nonumber\\
&=&\biggr\Vert \sum_{h\in G_{0}} \frac{e^{h}_{0}}{H_{0}}\biggr\Vert_{\infty}\leq e_{\max} < \beta e_{\max}.\label{eqExist1}
\end{eqnarray}
The market clearing equation in period $t=k+1$ and Assumption \ref{assBoundsPopAndEndownments} also allow us to write
\begin{eqnarray}
\biggr\Vert \sum_{h\in G_{k}} \frac{x^{h}_{k+1}(p^{k}_{k},p^{k}_{k+1})}{H_{k}}\biggr\Vert_{\infty}&=& \biggr\Vert \sum_{h\in G_{k}} \frac{x^{h}_{k+1}(p^{k}_{k},p^{k}_{k+1})}{H_{k}}+ \frac{H_{k+1} x^{*}_{k+1}(p^{k}_{0},p^{k}_{k+1})}{H_{k}}\biggr\Vert_{\infty}\nonumber\\
&=&\biggr\Vert \sum_{h\in G_{k}} \frac{e^{h}_{k+1}}{H_{k}}+\frac{H_{k+1}}{H_{k}}(e_{\max},\ldots,e_{\max})\biggr\Vert_{\infty}\nonumber\\
&\leq& (1+\alpha_{\max}) e_{\max} \leq  \beta e_{\max},\label{eqExist2}
\end{eqnarray}
where the first equality is due to the fact that $x^{*}_{k+1}(p^{k}_{0},p^{k}_{k+1})=0$. Furthermore, market clearing in $t=1$ and $t=k$, (\ref{eqExist1}), (\ref{eqExist2}) and Assumption \ref{assBoundsPrices} imply $(p^{k}_{0},p^{k}_{1})\in \mathcal{B}_{0}(\sigma_{0})$ and $(p^{k}_{k}/p^{k}_{k1},p^{k}_{k+1}/p^{k}_{k1})\in \mathcal{B}_{k}(\sigma_{k})$.

Therefore, reasoning, once again, as in the proof of Lemma \ref{lemmaVjInfinite}, we conclude that $p^{k}_{t}\in\mathcal{K}_{t}$, for all $t\leq k+1$. Let $\tilde{p}^{\,k}\in \prod_{t\geq0}\mathcal{K}_{t}$ be an extension of $p^{k}$, $k\geq0$, to $\mathbb{R}^{\infty}$ (i.e., $\tilde{p}^{\,k}_{t}=p^{k}_{t}$, $0\leq t\leq k+1$). Since $\prod_{t\geq0}\mathcal{K}_{t}$ is compact, let $\{\tilde{p}^{\,k_{n}}\}_{n\geq1}$ be a convergent subsequence, with $p=\lim_{n\rightarrow\infty}\tilde{p}^{\,k_{n}}$.

Then, $p^{k}_{01}=1$ and $(p^{k}_{0},p^{k}_{1})\in\mathcal{B}_{0}(\sigma_{0})$, $k\geq0$, imply $p_{01}=1$ and $(p_{0},p_{1})\in\mathcal{B}_{0}(\sigma_{0})$. Also, for every $t\geq1$, there is $N\geq1$ such that $n\geq N$ implies $k_{n}\geq t$. The market clearing equation in period $t\geq1$ for the finite economy $\mathcal{F}_{k_{n}}$, $n\geq N$, implies 
\begin{eqnarray*}
    z_{t}(\tilde{p}^{\,k_{n}}_{t-1},\tilde{p}^{\,k_{n}}_{t},\tilde{p}^{\,k_{n}}_{t+1})=z_{t}(p^{k_{n}}_{t-1},p^{k_{n}}_{t},p^{k_{n}}_{t+1})=0,
\end{eqnarray*}
and, therefore, the continuity of $z_{t}(\cdot)$, $t\geq1$, over strictly positive prices implies
\begin{eqnarray*}
     z_{t}(p_{t-1},p_{t},p_{t+1})=\lim_{n\rightarrow\infty} z_{t}(\tilde{p}^{\,k_{n}}_{t-1},\tilde{p}^{\,k_{n}}_{t},\tilde{p}^{\,k_{n}}_{t+1})=0.
\end{eqnarray*}
We conclude that $p\in\mathcal{H}$. Suppose that $p\notin\mathcal{H}^{PO}$. Lemma \ref{lemmaFirstParetoOptEquilibria} implies 
\begin{eqnarray*}
    \lim_{t\rightarrow\infty}\sum_{h\in G_{t}}\frac{s^{h}(p_{t},p_{t+1})}{H_{t}}=0,
\end{eqnarray*}
and, therefore, there is $T\geq1$ such that
\begin{eqnarray*}
    \sum_{h\in G_{T}}\frac{s^{h}(p_{T},p_{T+1})}{H_{T}}<\frac{\delta}{2}.
\end{eqnarray*}
Convergence on the product topology implies that there is $m\geq 1$ such that $k_{m}\geq T$ and 
\begin{eqnarray}\label{eqExistSub}
    \sum_{h\in G_{T}}\frac{s^{h}(\tilde{p}^{\,k_{m}}_{T},\tilde{p}^{\,k_{m}}_{T+1})}{H_{T}}=\sum_{h\in G_{T}}\frac{s^{h}(p^{k_{m}}_{T},p^{k_{m}}_{T+1})}{H_{T}}\leq\delta.
\end{eqnarray}
If $k_{m}=T$, then (\ref{eqExistSub}) implies 
\begin{eqnarray}\label{eqExist3}
    \sum_{h\in G_{k_{m}}}\frac{s^{h}(p^{k_{m}}_{k_{m}},p^{k_{m}}_{k_{m}+1})}{H_{k_{m}}}\leq\delta.
\end{eqnarray}
If $k_{m}>T$, we can write
\begin{eqnarray*}
    \sum_{h\in G_{T+1}}\frac{s^{h}(p^{k_{m}}_{T+1},p^{k_{m}}_{T+2})}{H_{T+1}}&=& \frac{1}{H_{T+1}}\sum_{h\in G_{T+1}}\frac{p^{k_{m}}_{T+1}}{\Vert p^{k_{m}}_{T+1}\Vert}\cdot (e^{h}_{T+1}-x^{h}_{T+1}(p^{k_{m}}_{T+1},p^{k_{m}}_{T+2}))\\
    &=&\frac{1}{H_{T+1}}\sum_{h\in G_{T}}\frac{p^{k_{m}}_{T+1}}{\Vert p^{k_{m}}_{T+1}\Vert}\cdot (x^{h}_{T+1}(p^{k_{m}}_{T},p^{k_{m}}_{T+1})-e^{h}_{T+1})\\
    &=&\frac{\Vert p^{k_{m}}_{T}\Vert}{H_{T+1}\Vert p^{k_{m}}_{T+1}\Vert}\sum_{h\in G_{T}}\frac{p^{k_{m}}_{T}}{\Vert p^{k_{m}}_{T}\Vert}\cdot (e^{h}_{T}-x^{h}_{T}(p^{k_{m}}_{T},p^{k_{m}}_{T+1}))\\
    &=&\frac{\Vert p^{k_{m}}_{T}\Vert}{\alpha_{T}\Vert p^{k_{m}}_{T+1}\Vert} \sum_{h\in G_{T}}\frac{s^{h}(p^{k_{m}}_{T},p^{k_{m}}_{T+1})}{H_{T}}\leq \frac{\delta}{1+\varepsilon}<\delta,
\end{eqnarray*}
where the first equality is derived from Definition \ref{defRealSavings}; the second from $z_{T+1}(p_{T},p_{T+1},p_{T+2})=0$; the third from Walras' law applied to all households from generation $G_{T}$; and the penultimate inequality from Definition \ref{defProneSavings} and (\ref{eqExistSub}). Then, $\sum_{h\in G_{T+1}}s^{h}(p^{k_{m}}_{T+1},p^{k_{m}}_{T+2})/H_{T+1}\leq \delta$ and, by induction, we conclude that (\ref{eqExist3}) remains valid. 

Next, considering the finite economy $\mathcal{F}_{k_{m}}$, we can write
\begin{eqnarray*}
    \sum_{h\in G_{k_{m}}}\frac{s^{h}(p^{k_{m}}_{k_{m}},p^{k_{m}}_{k_{m}+1})}{H_{k_{m}}}&=& \frac{1}{H_{k_{m}}}\sum_{h\in G_{k_{m}}}\frac{p^{k_{m}}_{k_{m}}}{\Vert p^{k_{m}}_{k_{m}}\Vert}\cdot (e^{h}_{k_{m}}-x^{h}_{k_{m}}(p^{k_{m}}_{k_{m}},p^{k_{m}}_{k_{m}+1}))\\
    &=&\frac{1}{H_{k_{m}}\Vert p^{k_{m}}_{k_{m}}\Vert} \sum_{h\in G_{k_{m}}}p^{k_{m}}_{k_{m}+1}\cdot(x^{h}_{k_{m}+1}(p^{k_{m}}_{k_{m}},p^{k_{m}}_{k_{m}+1})-e^{h}_{k_{m}+1})\\
    &=&\frac{H_{k_{m}+1}}{H_{k_{m}}\Vert p^{k_{m}}_{k_{m}}\Vert}p^{k_{m}}_{k_{m}+1}\cdot (e_{\max},\ldots,e_{\max})\\
    &=&\frac{\Vert p^{k_{m}}_{k_{m}+1}\Vert}{\Vert p^{k_{m}}_{k_{m}}\Vert}\alpha_{k_{m}}e_{\max},
\end{eqnarray*}
where the first equality is derived from Definition \ref{defRealSavings}; the second from Walras' law applied to all households from generation $G_{k_{m}}$; and the third from market clearing at $t=k_{m}+1$ and the fact that $x^{*}_{k_{m}+1}(p^{k_{m}}_{k_{m}},p^{k_{m}}_{k_{m}+1})=0$. Then, (\ref{eqExist3}) implies
\begin{eqnarray*}
    \frac{\Vert p^{k_{m}}_{k_{m}}\Vert}{\alpha_{k_{m}}\Vert p^{k_{m}}_{k_{m}+1}\Vert}\geq \frac{e_{\max}}{\delta}\geq1,
\end{eqnarray*}
where the last inequality is due to the fact that we can, without loss of generality, assume $\delta\leq e_{\max}$. However, Definition \ref{defProneSavings} and (\ref{eqExist3}) imply
\begin{eqnarray*}
    \frac{\Vert p^{k_{m}}_{k_{m}}\Vert}{\alpha_{k_{m}}\Vert p^{k_{m}}_{k_{m}+1}\Vert}\leq \frac{1}{1+\varepsilon}<1,
\end{eqnarray*}
absurd. We conclude that $p\in\mathcal{H}^{PO}$ and, therefore, $\mathcal{H}^{PO}$ is non-empty.

Next, let $\{p^{n}\}_{n\geq1}$ be a sequence in $\mathcal{H}^{PO}$ with $\lim_{n\rightarrow\infty}p^{n}=p\in\mathbb{R}^{\infty}$. Since $\mathcal{H}^{PO}\subseteq\mathcal{H}$ and $\mathcal{H}$ is compact by Theorem \ref{theoEqSetCompact}, $p\in\mathcal{H}$. If $p\notin\mathcal{H}^{PO}$, Proposition \ref{propNecAndSuffConditionParetoOpt} implies that $\lim_{t\rightarrow\infty}\sum_{h\in G_{t}}s^{h}(p_{t},p_{t+1})/H_{t}=0$ and, therefore,  there is $T\geq0$ such that 
\begin{eqnarray*}
    \sum_{h\in G_{T}}\frac{s^{h}(p_{T},p_{T+1})}{H_{T}}<\delta/2.
\end{eqnarray*}
Convergence on the product topology implies that there is $N\geq1$ such that 
\begin{eqnarray*}
    \sum_{h\in G_{T}}\frac{s^{h}(p^{N}_{T},p^{N}_{T+1})}{H_{T}}\leq\delta.
\end{eqnarray*}
Then, Lemma \ref{lemmaSecondParetoOptEquilibria} applied to $p^{N}\in\mathcal{H}$ in period $T\geq0$ implies that $p^{N}\notin\mathcal{H}^{PO}$, absurd. Therefore, $p\in\mathcal{H}^{PO}$ and $\mathcal{H}^{PO}$ is closed. We conclude that $\mathcal{H}^{PO}$, as a closed subset of the compact set $\mathcal{H}$, is itself compact.

Lastly, let $\mathcal{S}_{0}\subset\mathcal{H}$ be given by (\ref{eqSj}). Then, $\bigcup_{j\geq0}\mathcal{S}_{j}=\mathcal{H}/\mathcal{H}^{PO}$ implies $\mathcal{H}^{PO}\bigcap \mathcal{S}_{0}=\emptyset$ and, therefore,
\begin{eqnarray*}
    \sum_{h\in G_{0}}\frac{s^{h}(p_{0},p_{1})}{H_{0}}>\delta,
\end{eqnarray*}
for $p\in\mathcal{H}^{PO}$.
\end{proof}

\begin{proof}[Proof of Theorem~{\upshape\ref{theoClassModelExistence}}]
Our goal is to apply Theorem \ref{theoExistenceOfParetoOptJME} and, in order to do so, it is necessary to ``extend'' generation $G^{2}_{0}$ of the economy $\mathcal{E}^{2}$ back to period $t=0$. Let $\zeta\geq1$ and $\mathcal{E}(\zeta)$ be the extended economy we envision, with a single perishable commodity in period $t=0$ (i.e., $L_{0}=1$). Let all generations $G_{t}(\zeta)$, $t\geq1$, from $\mathcal{E}(\zeta)$ coincide with those from $\mathcal{E}^{2}_{\geq1}$. For all households $h\in G^{2}_{0}$, we define $h^{\prime}\in G_{0}(\zeta)$ by a utility function $u^{h^{\prime}}:\mathbb{R}^{1+L_{1}}_{+}\rightarrow\mathbb{R}$, given by
\begin{eqnarray*}
    u^{h^{\prime}}(c_{0},c_{1})=u^{h}(c_{1}),
\end{eqnarray*}
and a nonzero endowment $e^{h^{\prime}}=(e^{h^{\prime}}_{0},e^{h^{\prime}}_{1})=(m^{h},e^{h}_{1})\in\mathbb{R}^{1+L_{1}}_{+}$. Clearly, for all $h^{\prime}\in G_{0}(\zeta)$, $u^{h^{\prime}}(\cdot)$ is continuous, non-decreasing, semi-strictly quasiconcave and without local maxima, and $\sum_{h^{\prime}\in G_{0}(\zeta)}e^{h^{\prime}}\in\mathbb{R}^{1+L_{1}}_{++}$. Utility maximization implies $x^{h^{\prime}}_{0}(p_{0},p_{1})=0$ and, therefore,
\begin{eqnarray*}
    s^{h^{\prime}}(p_{0},p_{1})=e^{h^{\prime}}_{0}=m^{h},
\end{eqnarray*}
for $(p_{0},p_{1})\in\mathbb{R}^{1+L_{1}}_{++}$ and $h^{\prime}\in G_{0}(\zeta)$. In particular, we have
\begin{eqnarray}\label{eqExisClassic1}
    \sum_{h^{\prime}\in G_{0}(\zeta)}s^{h^{\prime}}(p_{0},p_{1})=\sum_{h^{\prime}\in G_{0}(\zeta)}e^{h^{\prime}}_{0}=\sum_{h\in G_{0}^{2}}m^{h}=1.
\end{eqnarray}

Since $\mathcal{E}^{2}_{\geq1}$ satisfies Assumption \ref{assProneSavingsEconomy}, let $\varepsilon,\delta>0$ be given by Definition \ref{defProneSavings}. We can assume, without loss of generality, $\delta<\min\,\{1,(H_{0}+1)^{-1}\}$ and $\Vert e^{1}_{1} \Vert_{\infty}<e_{\max}$ (with this last inequality stemming from the assumptions of the theorem). 

Notice that $\sum_{h^{\prime}\in G_{0}(\zeta)} x^{h^{\prime}}_{0}(p_{0},p_{1})=0$ implies that, with these households only, we will not be able to fulfill Assumption \ref{assBoundsPrices}, since no matter how low $p_{0}=p_{m}\in\mathbb{R}_{++}$ becomes relative to $p_{1}\in\mathbb{R}^{L_{1}}_{++}$, the demand for money will never skyrocket.

Then, we must add a single household $h^{*}_{\zeta}$ to $G_{0}(\zeta)$ (so that the demographic growth rate in period $t=0$ now becomes $\alpha_{0}(\zeta)=H_{1}/(H_{0}+1)$) with a utility function $u^{h^{*}_{\zeta}}:\mathbb{R}^{1+L_{1}}_{+}\rightarrow\mathbb{R}$, given by
\begin{eqnarray*}
    u^{h^{*}_{\zeta}}(c_{0},c_{1})=f(\zeta)\log c_{0}+\sum^{L_{1}}_{i=1}\log c_{1i},
\end{eqnarray*}
with 
\begin{eqnarray*}
    f(\zeta)=\frac{\zeta L_{1}e_{\max}\alpha_{0}(\zeta)}{(e_{\max}-\Vert e^{1}_{1}\Vert_{\infty})(1+\varepsilon)}>0,
\end{eqnarray*}
and an endowment $e^{h^{*}_{\zeta}}=(e_{\max},(e_{\max}-\Vert e^{1}_{1} \Vert_{\infty})/\zeta,\ldots,(e_{\max}-\Vert e^{1}_{1} \Vert_{\infty})/\zeta)\in\mathbb{R}^{1+L_{1}}_{++}$. This star-household provides a strictly positive demand for money
\begin{eqnarray*}
    x^{h^{*}_{\zeta}}(p_{0},p_{1})=\frac{p_{0}e_{\max}+\Vert p_{1}\Vert(e_{\max}-\Vert e^{1}_{1}\Vert_{\infty})/\zeta}{f(\zeta)+L_{1}}\biggr(\frac{f(\zeta)}{p_{0}},\frac{1}{p_{11}},\ldots,\frac{1}{p_{1L_{1}}}\biggr),
\end{eqnarray*}
so that
\begin{eqnarray}\label{eqExisClassic2}
    s^{h^{*}_{\zeta}}(p_{0},p_{1})=e_{\max}-x^{h^{*}_{\zeta}}_{0}(p_{0},p_{1})=\frac{p_{0}L_{1}e_{\max}- f(\zeta)\Vert p_{1}\Vert(e_{\max}-\Vert e^{1}_{1}\Vert_{\infty})/\zeta}{p_{0}(f(\zeta)+L_{1})},
\end{eqnarray}
for $(p_{0},p_{1})\in\mathbb{R}^{1+L_{1}}_{++}$. Notice that
\begin{eqnarray}\label{eqExisClassic3}
    \frac{p_{0}L_{1}e_{\max}-f(\zeta)\Vert p_{1}\Vert(e_{\max}-\Vert e^{1}_{1}\Vert_{\infty})/\zeta}{p_{0}(f(\zeta)+L_{1})}+1\leq \delta \implies
    \frac{p_{0}}{\Vert p_{1}\Vert}\leq \frac{f(\zeta)(e_{\max}-\Vert e^{1}_{1}\Vert_{\infty})/\zeta}{L_{1}e_{\max}+(1-\delta)(f(\zeta)+L_{1})}.
\end{eqnarray}
Since
\begin{eqnarray*}
     \frac{f(\zeta)(e_{\max}-\Vert e^{1}_{1}\Vert_{\infty})/\zeta }{L_{1}e_{\max}+(1-\delta)(f(\zeta)+L_{1})}\leq \frac{f(\zeta)(e_{\max}-\Vert e^{1}_{1}\Vert_{\infty})}{\zeta L_{1}e_{\max}}=\frac{\alpha_{0}(\zeta)}{1+\varepsilon},
\end{eqnarray*}
we conclude, through (\ref{eqExisClassic1}), (\ref{eqExisClassic2}) and (\ref{eqExisClassic3}), that
\begin{eqnarray}\label{eqExisClassic4}
    s^{h^{*}_{\zeta}}(p_{0},p_{1})+\sum_{h^{\prime}\in G_{0}(\zeta)}s^{h^{\prime}}(p_{0},p_{1})\leq \delta\implies\frac{p_{0}}{\Vert p_{1}\Vert}\leq \frac{\alpha_{0}(\zeta)}{1+\varepsilon},
\end{eqnarray}
for $(p_{0},p_{1})\in\mathbb{R}^{1+L_{1}}_{++}$.

We proceed to show that $\mathcal{E(\zeta)}$ satisfies Assumptions \ref{assBoundsPopAndEndownments}--\ref{assBoundsPrices} and \ref{assProneSavingsEconomy}.

First, since all generations $G_{t}(\zeta)$, $t\geq1$, from the extended economy $\mathcal{E}(\zeta)$ coincide with those from $\mathcal{E}^{2}_{\geq1}$, and $\mathcal{E}^{2}_{\geq1}$ satisfies Assumption \ref{assBoundsPopAndEndownments}, we only need to check generation $G_{0}(\zeta)$. The demographic growth rate is
\begin{eqnarray*}
    \alpha_{0}(\zeta)=\frac{H_{1}}{H_{0}+1},
\end{eqnarray*}
since $G_{0}(\zeta)$ differs from $G^{2}_{0}$ only by the ``star-household.'' By assumption of the theorem, we have
\begin{eqnarray*}
    \alpha_{\min}\leq\frac{H_{1}}{H_{0}+1}\leq\alpha_{\max},
\end{eqnarray*}
and, therefore, $\alpha_{\min}\leq \alpha_{0}(\zeta)\leq \alpha_{\max}$. Since $m^{h}\leq e_{\max}$, $h\in G_{0}^{2}$, we have $\Vert e^{h^{\prime}}\Vert_{\infty}\leq e_{\max}$, $h^{\prime}\in G_{0}(\zeta)$. Also, $\zeta\geq1$ implies $\Vert e^{h^{*}_{\zeta}}\Vert_{\infty}\leq e_{\max}$. We conclude that $\mathcal{E}(\zeta)$ satisfies Assumption \ref{assBoundsPopAndEndownments}. Assumption \ref{assResourceRelated} is also satisfied by $\mathcal{E}(\zeta)$, since $h^{*}_{\zeta}$ is resource related to at least one $h^{\prime}\in G_{0}(\zeta)$ and households from $\mathcal{E}^{2}$ are indirectly resource related.

Let $\sigma_{t}\in(0,1)$, $t\geq1$, be given by Assumption \ref{assBoundsPrices} when considering the economy $\mathcal{E}^{2}_{\geq1}$, so that 
\begin{eqnarray}\label{eqExisClassic5}
\biggr(\frac{p_{t}}{p_{t1}},\frac{p_{t+1}}{p_{t1}}\biggr)\notin\mathcal{B}_{t}(\sigma_{t})\implies \biggr\Vert \sum_{h\in G_{t}(\zeta)}\frac{x^{h}(p_{t},p_{t+1})}{H_{t}}\biggr\Vert=\biggr\Vert \sum_{h\in G^{2}_{t}}\frac{x^{h}(p_{t},p_{t+1})}{H_{t}}\biggr\Vert>\beta e_{\max},
\end{eqnarray}
for $(p_{t},p_{t+1})\in\mathbb{R}^{L_{t}+L_{t+1}}_{++}$, $t\geq1$. 

Next, notice that there is $\sigma_{0}(\zeta)\in(0,1)$ such that
\begin{eqnarray*}
    \biggr(1,\frac{p_{1}}{p_{0}}\biggr)\notin\mathcal{B}_{0}(\sigma_{0}(\zeta))\implies \biggr\Vert \frac{x^{h^{*}_{\zeta}}(p_{0},p_{1})}{H_{0}+1}\biggr\Vert_{\infty}>\beta e_{\max},
\end{eqnarray*}
for $(p_{0},p_{1})\in\mathbb{R}^{1+L_{1}}_{++}$. Since
\begin{eqnarray*}
    \biggr\Vert x^{h^{*}_{\zeta}}(p_{0},p_{1})+ \sum_{h^{\prime}\in G_{0}(\zeta)}x^{h^{\prime}}(p_{0},p_{1})\biggr\Vert_{\infty}>\Vert x^{h^{*}_{\zeta}}(p_{0},p_{1})\Vert_{\infty},
\end{eqnarray*}
we have
\begin{eqnarray}\label{eqExisClassic6}
    \biggr(1,\frac{p_{1}}{p_{0}}\biggr)\notin\mathcal{B}_{0}(\sigma_{0}(\zeta))\implies \biggr\Vert \frac{x^{h^{*}_{\zeta}}(p_{0},p_{1})+ \sum_{h^{\prime}\in G_{0}(\zeta)}x^{h^{\prime}}(p_{0},p_{1})}{H_{0}+1}\biggr\Vert_{\infty}>\beta e_{\max},
\end{eqnarray}
for $(p_{0},p_{1})\in\mathbb{R}^{1+L_{1}}_{++}$. We conclude, by (\ref{eqExisClassic5}) and (\ref{eqExisClassic6}), that $\mathcal{E}(\zeta)$ satisfies Assumption \ref{assBoundsPrices}.

It remains to show that Assumption \ref{assProneSavingsEconomy} is satisfied. Since all generations $G_{t}(\zeta)$, $t\geq1$, from $\mathcal{E}(\zeta)$ coincide with those from $\mathcal{E}^{2}_{\geq1}$, and $\mathcal{E}^{2}_{\geq1}$ satisfies Assumption \ref{assProneSavingsEconomy}, we only need, once again, to check generation $G_{0}(\zeta)$. For $(1,p_{1}/p_{0})\in \mathcal{B}_{0}(\sigma_{0}(\zeta))$, $(p_{0},p_{1})\in\mathbb{R}^{1+L_{1}}_{++}$, the result follows directly from (\ref{eqExisClassic4}).

Lastly, notice that we do not need to verify whether Assumption \ref{assCassCriterion} is satisfied by $\mathcal{E}(\zeta)$. If it is not, then the subset $\mathcal{H}^{PO}(\zeta)$ must be read not as the ``subset of Pareto optimal equilibria,'' but as the ``subset of equilibria that satisfy the Cass criterion.'' 

This possible different interpretation, however, does not prevent us from applying Theorem \ref{theoExistenceOfParetoOptJME} to $\mathcal{E}(n)$, $n\geq1$, to find a sequence $\{p^{n}\}_{n\geq1}$ such that $p^{n}\in\mathcal{H}^{PO}(n)$. Then, the market clearing equation in period $t=1$ allows us to write
\begin{eqnarray*}
    \biggr\Vert \sum_{h\in G^{2}_{0}}\frac{x^{h}_{1}(1,p^{n}_{1})}{H_{0}}\biggr\Vert_{\infty} &=& \biggr\Vert \sum_{h^{\prime}\in G_{0}(n)}\frac{x^{h^{\prime}}_{1}(p^{n}_{0},p^{n}_{1})}{H_{0}}\biggr\Vert_{\infty}\\
    &\leq&\frac{1}{H_{0}}\biggr\Vert e^{h^{*}_{n}}_{1}+\sum_{h^{\prime}\in G_{0}(n)}e^{h^{\prime}}_{1}+\sum_{h\in G_{1}(n)}e^{h}_{1}\biggr\Vert_{\infty}\\
    &=&\frac{1}{H_{0}}\biggr\Vert e^{h^{*}_{n}}_{1}+\sum_{h\in G^{2}_{0}} e^{h}_{1}+\sum_{h\in G^{2}_{1}} e^{h}_{1}\biggr\Vert_{\infty}\\
    &\leq&\frac{1}{H_{0}}\biggr(\Vert e^{h^{*}_{n}}_{1}+e^{1}_{1}\Vert_{\infty}+(H_{0}-1)e_{\max}+H_{1}e_{\max}\biggr)\\
    &\leq&\frac{(H_{0}+H_{1})}{H_{0}}e_{\max}\\
    &\leq&\beta e_{\max}.
\end{eqnarray*}
By assumption of the theorem, this implies $p^{n}_{11}>\lambda>0$, $n\geq1$, and therefore 
\begin{eqnarray}\label{eqBoundPn11}
   0<\frac{1}{p^{n}_{11}}<\frac{1}{\lambda},
\end{eqnarray}
for $n\geq1$. Next, let $q^{n}=(q^{n}_{1},q^{n}_{2},\ldots)=(p^{n}_{1}/p^{n}_{11},p^{n}_{2}/p^{n}_{11},\ldots)\in\mathbb{R}^{\infty}_{++}$, $n\geq1$, so that $q^{n}_{11}=1$, $p^{n}=(1,p^{n}_{11}q^{n})$, and
\begin{eqnarray}\label{eqExisClassic7}
    \sum^{\infty}_{t=1}\frac{1}{H_{t}\Vert q^{n}_{t}\Vert}=p^{n}_{11}\sum^{\infty}_{t=1}\frac{1}{H_{t}\Vert p^{n}_{t}\Vert}=+\infty,
\end{eqnarray}
for $n\geq1$. Notice that the market clearing equation in period $t=1$ for the economy $\mathcal{E}(n)$ and the homogeneity of demand imply
\begin{eqnarray}
    \biggr\Vert \sum_{h\in G^{2}_{1}} \frac{x^{h}_{1}(q^{n}_{1},q^{n}_{2})}{H_{1}}\biggr\Vert_{\infty}&=&\biggr\Vert \sum_{h\in G_{1}(n)} \frac{x^{h}_{1}(q^{n}_{1},q^{n}_{2})}{H_{1}}\biggr\Vert_{\infty}\nonumber\\
    &\leq& \frac{1}{H_{1}}\biggr\Vert x^{h^{*}_{n}}_{1}(p^{n}_{0},p^{n}_{1})+\sum_{h^{\prime}\in G_{0}(n)} x^{h^{\prime}}_{1}(p^{n}_{0},p^{n}_{1})+\sum_{h\in G_{1}(n)} x^{h}_{1}(p^{n}_{1},p^{n}_{2})\biggr\Vert_{\infty}\nonumber\\
    &=&\frac{1}{H_{1}}\biggr\Vert e^{h^{*}_{n}}_{1}+\sum_{h\in G^{2}_{0}} e^{h}_{1}+\sum_{h\in G^{2}_{1}} e^{h}_{1}\biggr\Vert_{\infty}\nonumber\\
    &\leq&\frac{1}{H_{1}}\biggr(\Vert e^{h^{*}_{n}}_{1}+e^{1}_{1}\Vert_{\infty}+(H_{0}-1)e_{\max}+H_{1}e_{\max}\biggr)\nonumber\\
    &\leq&\frac{(H_{0}+H_{1})}{H_{1}}e_{\max}\nonumber\\
    &\leq&\beta e_{\max}. \label{eqFirstBoundG1}
\end{eqnarray}
Furthermore, the market clearing equation in period $t\geq2$ and the homogeneity of demand imply
\begin{eqnarray}
    \sum_{h\in G^{2}_{t-1}}x^{h}_{t}(q^{n}_{t-1},q^{n}_{t})+\sum_{h\in G^{2}_{t}}x^{h}_{t}(q^{n}_{t},q^{n}_{t+1})&=&\sum_{h\in G_{t-1}(n)}x^{h}_{t}(p^{n}_{t-1},p^{n}_{t})+\sum_{h\in G_{t}(n)}x^{h}_{t}(p^{n}_{t},p^{n}_{t+1})\nonumber\\
    &=&\sum_{h\in G_{t-1}(n)}e^{h}_{t}+\sum_{h\in G_{t}(n)}e^{h}_{t}\nonumber\\
    &=&\sum_{h\in G^{2}_{t-1}}e^{h}_{t}+\sum_{h\in G^{2}_{t}}e^{h}_{t}.\label{eqExisClassic8}
\end{eqnarray}
In particular, for $t=2$ we have
\begin{eqnarray}\label{eqSecondBoundG1}
    \biggr\Vert \sum_{h\in G^{2}_{1}}\frac{x^{h}_{2}(q^{n}_{1},q^{n}_{2})}{H_{1}}\biggr\Vert_{\infty}\leq \beta e_{\max}.
\end{eqnarray}
Therefore, (\ref{eqFirstBoundG1}) and (\ref{eqSecondBoundG1}) imply
\begin{eqnarray*}
    \biggr\Vert \sum_{h\in G^{2}_{1}}\frac{x^{h}(q^{n}_{1},q^{n}_{2})}{H_{1}}\biggr\Vert_{\infty}\leq \beta e_{\max},
\end{eqnarray*}
and, by Assumption \ref{assBoundsPrices}, we have 
\begin{eqnarray}\label{eqExisClassic9}
    (q^{n}_{1},q^{n}_{2})\in\mathcal{B}_{1}(\sigma_{1}),
\end{eqnarray}
for $n\geq1$.

Let $\mathcal{H}^{PO}_{\geq1}$ be the subset of Pareto optimal equilibria from $\mathcal{E}^{2}_{\geq1}$. Notice that (\ref{eqExisClassic7}), (\ref{eqExisClassic8}), and (\ref{eqExisClassic9}) imply $q^{n}\in\mathcal{H}^{PO}_{\geq1}$ (there is a one-period relabeling of time in this assertion since $\mathcal{E}^{2}_{\geq1}$ ``starts'' with generation $G^{2}_{1}$). Theorem \ref{theoExistenceOfParetoOptJME} states that $\mathcal{H}^{PO}_{\geq1}$ is compact. Furthermore, (\ref{eqBoundPn11}) implies that $\{1/p^{n}_{11}\}_{n\geq1}$ is bounded. Therefore, we can assume, passing to a subsequence if necessary, that
\begin{eqnarray}\label{eqLimQn}
\lim_{n\rightarrow\infty}q^{n}=q\in\mathcal{H}^{PO}_{\geq1}\subset{R}^{\infty}_{++},
\end{eqnarray}
and that
\begin{eqnarray}\label{eqLimInversePn11}
    \lim_{n\rightarrow\infty} \frac{1}{p^{n}_{11}}=\mu\geq0.
\end{eqnarray}
In particular, (\ref{eqLimQn}) implies
\begin{eqnarray}\label{eqLimQn2}
    \sum^{\infty}_{t=1}\frac{1}{H_{t}\Vert q_{t}\Vert}=+\infty.
\end{eqnarray}
Since $p^{n}\in\mathcal{H}^{PO}(n)$, $n\geq1$, the last part of Theorem \ref{theoExistenceOfParetoOptJME} implies
\begin{eqnarray*}
    s^{h^{*}_{n}}(p^{n}_{0},p^{n}_{1})+\sum_{h^{\prime}\in G_{0}(n)}s^{h^{\prime}}(p^{n}_{0},p^{n}_{1})=s^{h^{*}_{n}}(1,p^{n}_{11}q^{n}_{1})+1>(H_{0}+1)\delta,
\end{eqnarray*}
so that
\begin{eqnarray*}
    x^{h^{*}_{n}}_{0}(1,p^{n}_{11}q^{n}_{1})=\frac{e_{\max}+p^{n}_{11}\Vert q^{n}_{1}\Vert(e_{\max}-\Vert e^{1}_{1}\Vert_{\infty})/n}{1+L_{1}/f(n)}<e_{\max}+1-(H_{0}+1)\delta.
\end{eqnarray*}
Rearranging terms, we obtain
\begin{eqnarray}\label{eqBoundsPn11}
    0<\frac{p^{n}_{11}}{n}<\frac{(1+L_{1}/f(n))(e_{\max}+1-(H_{0}+1)\delta)-e_{\max}}{\Vert q^{n}_{1}\Vert(e_{\max}-\Vert e^{1}_{1}\Vert_{\infty})}.
\end{eqnarray}
Since $\lim_{n\rightarrow\infty}\Vert q^{n}_{1}\Vert=\Vert q_{1}\Vert>0$, $1-(H_{0}+1)\delta>0$ and
\begin{eqnarray*}
    \lim_{n\rightarrow\infty} f(n)=\lim_{n\rightarrow\infty}\frac{nL_{1}e_{\max}H_{1}}{(e_{\max}-\Vert e^{1}_{1}\Vert_{\infty})(1+\varepsilon)(H_{0}+1)}=+\infty,
\end{eqnarray*}
(\ref{eqBoundsPn11}) implies that $\{p^{n}_{11}/n\}_{n\geq1}$ is bounded. 
Then, (\ref{eqBoundPn11}) implies
\begin{eqnarray}
    \lim_{n\rightarrow\infty}x^{h^{*}_{n}}_{1}(p^{n}_{0},p^{n}_{1})&=&\lim_{n\rightarrow\infty}x^{h^{*}_{n}}_{1}(1,p^{n}_{11}q^{n}_{1})\nonumber\\
    &=&\lim_{n\rightarrow\infty}\frac{e_{\max}+p^{n}_{11}\Vert q^{n}_{1}\Vert(e_{\max}-\Vert e^{1}_{1}\Vert_{\infty})/n}{f(n)+L_{1}}\frac{1}{p^{n}_{11}}\biggr(\frac{1}{q^{n}_{11}},\ldots,\frac{1}{q^{n}_{1L_{1}}}\biggr)\nonumber\\
    &< &\frac{1}{\lambda}\lim_{n\rightarrow\infty}\frac{e_{\max}+p^{n}_{11}\Vert q^{n}_{1}\Vert(e_{\max}-\Vert e^{1}_{1}\Vert_{\infty})/n}{f(n)+L_{1}}\biggr(\frac{1}{q^{n}_{11}},\ldots,\frac{1}{q^{n}_{1L_{1}}}\biggr)\nonumber\\
    &=&0\label{eqHstarVanish}.
\end{eqnarray}

Next, since $q\in \mathcal{H}^{PO}_{\geq1}$, we have
\begin{eqnarray}\label{eqMarketClearingLimitq}
     \sum_{h\in G^{2}_{t-1}}x^{h}_{t}(q_{t-1},q_{t})+\sum_{h\in G^{2}_{t}}x^{h}_{t}(q_{t},q_{t+1})=\sum_{h\in G^{2}_{t-1}}e^{h}_{t}+\sum_{h\in G^{2}_{t}}e^{h}_{t},
\end{eqnarray}
for $t\geq2$. The market clearing equation for $\mathcal{E}(n)$ at $t=1$ is given by
\begin{eqnarray*}
    e^{h^{*}_{n}}_{1}+\sum_{h\in G^{2}_{0}} e^{h}_{1}+\sum_{h\in G^{2}_{1}} e^{h}_{1}
    &=&x^{h^{*}_{n}}_{1}(p^{n}_{0},p^{n}_{1})+\sum_{h^{\prime}\in G_{0}(n)} x^{h^{\prime}}_{1}(p^{n}_{0},p^{n}_{1})+\sum_{h\in G^{2}_{1}} x^{h}_{1}(p^{n}_{1},p^{n}_{2})\\
    &=&x^{h^{*}_{n}}_{1}(1,p^{n}_{1})+\sum_{h\in G^{2}_{0}} x^{h}_{1}(1/p^{n}_{11},q^{n}_{1})+\sum_{h\in G^{2}_{1}} x^{h}_{1}(q^{n}_{1},q^{n}_{2}),
\end{eqnarray*}
for $n\geq1$. Then, (\ref{eqHstarVanish}) and $\lim_{n\rightarrow+\infty}e^{h^{*}_{n}}_{1}=0$ imply
\begin{eqnarray}\label{eqMarketClearingLimitT1}
    \sum_{h\in G^{2}_{0}} x^{h}_{1}(\mu,q_{1})+\sum_{h\in G^{2}_{1}} x^{h}_{1}(q_{1},q_{2})=\sum_{h\in G^{2}_{0}} e^{h}_{1}+\sum_{h\in G^{2}_{1}} e^{h}_{1}.
\end{eqnarray}
Lemma \ref{lemmaSecondParetoOptEquilibria} and $p^{n}\in\mathcal{H}^{PO}(n)$, $n\geq1$, imply that real savings per capita of generation $G_{1}(n)$ from the economy $\mathcal{E}(n)$ has $\delta>0$ as lower bound. Therefore, 
\begin{eqnarray*}
    \delta&\leq& \sum_{h\in G_{1}(n)}\frac{s^{h}(p^{n}_{1},p^{n}_{2})}{H_{1}}\\
    &=&\frac{p^{n}_{1} }{H_{1}\Vert p^{n}_{1}\Vert}\cdot \sum_{h\in G^{2}_{1}}(e^{h}_{1}-x^{h}_{1}(p^{n}_{1},p^{n}_{2}))\\
    &=&\frac{q^{n}_{1} }{H_{1}\Vert q^{n}_{1}\Vert}\cdot\biggr(x^{h^{*}_{n}}_{1}(p^{n}_{0},p^{n}_{1})-e^{h^{*}_{n}}_{1}+\sum_{h\in G^{2}_{0}} (x^{h}_{1}(1/p^{n}_{11},q^{n}_{1})- e^{h}_{1})\biggr),
\end{eqnarray*}
where the last equality is due to the market clearing equation in period $t=1$. Letting $n\rightarrow\infty$, we obtain
\begin{eqnarray*}
    \sum_{h\in G^{2}_{0}} \frac{q_{1}\cdot(x^{h}_{1}(\mu,q_{1})- e^{h}_{1})}{{H_{1}\Vert q_{1}\Vert}}\geq\delta.
\end{eqnarray*}
Then, Walras' law implies
\begin{eqnarray*}
    \sum_{h\in G^{2}_{0}} \frac{q_{1}\cdot(x^{h}_{1}(\mu, q_{1})- e^{h}_{1})}{H_{1}\Vert q_{1}\Vert}=\sum_{h\in G^{2}_{0}} \frac{\mu m^{h}}{H_{1}\Vert q_{1}\Vert}=\frac{\mu}{H_{1}\Vert q_{1}\Vert}\geq\delta>0,
\end{eqnarray*}
and, therefore, $\mu>0$. Since $\mathcal{E}^{2}$ satisfies Assumption \ref{assCassCriterion}, (\ref{eqLimQn2}), (\ref{eqMarketClearingLimitq}) and (\ref{eqMarketClearingLimitT1}) allow us to conclude that $(\mu,q)\in\mathbb{R}^{\infty}_{++}$ is an optimal monetary equilibrium of $\mathcal{E}^{2}$.
\end{proof}

\begin{proof}[Proof of Lemma~{\upshape\ref{lemmaTailEquilibria}}]
Directly derived from Definitions \ref{defSetEquilibria}, \ref{defParetoOptimalEqSet} and \ref{defFiniteRep}.
\end{proof}

\begin{proof}[Proof of Lemma~{\upshape\ref{lemmaFiniteReplEqSet}}]
By Definition \ref{defFiniteRep}, all generations of $\mathcal{E}_{k}$ coincide with those from $\mathcal{E}$ until period $k\geq1$. Therefore, the set of $j$-sighted equilibria $\mathcal{V}_{j}$ of $\mathcal{E}$ coincides with the set of $j$-sighted equilibria $\mathcal{V}_{kj}$ of $\mathcal{E}_{k}$, for $1\leq j\leq k$. 

Theorem \ref{theoEqSetCompact} implies that $\mathcal{H}_{k}\subseteq \mathcal{V}^{\infty}_{kj}$, $j,k\geq1$. In particular, we have
\begin{eqnarray}\label{eqInclusion}
    \mathcal{H}_{k}\subseteq \mathcal{V}^{\infty}_{kj}=\mathcal{V}^{\infty}_{j},
\end{eqnarray}
for $k\geq j$. Then, notice that $\{\overline{\bigcup_{k\geq j}\mathcal{H}_{k}}\}_{j\geq1}$ and $\{\mathcal{V}^{\infty}_{j}\}_{j\geq1}$ are nonincreasing sequences of sets with well-defined limits and (\ref{eqInclusion}) allows us to write
\begin{eqnarray*}
\bigcup_{k\geq j}\mathcal{H}_{k}\subseteq \mathcal{V}^{\infty}_{j}\implies \overline{\bigcup_{k\geq j}\mathcal{H}_{k}}\subseteq \overline{\mathcal{V}^{\infty}_{j}}=\mathcal{V}^{\infty}_{j}\implies\lim_{j\rightarrow\infty}\overline{\bigcup_{k\geq j}\mathcal{H}_{k}}\subseteq \lim_{j\rightarrow\infty}\mathcal{V}^{\infty}_{j}=\mathcal{H},
\end{eqnarray*}
where $\overline{\mathcal{V}^{\infty}_{j}}=\mathcal{V}^{\infty}_{j}$, $j\geq1$, due to Lemma \ref{lemmaVjInfinite} and $\lim_{j\rightarrow\infty}\mathcal{V}^{\infty}_{j}=\mathcal{H}$ due to Theorem \ref{theoEqSetCompact}. Theorem \ref{theoEqSetCompact} also implies $\mathcal{H}_{k}\neq\emptyset$, $k\geq1$. From Definition \ref{defFiniteRep}, there is $\mathcal{K}^{\prime}\subset\mathbb{R}^{\infty}$ such that $\mathcal{H}_{k}\subseteq\mathcal{K}^{\prime}$, $k\geq1$, and we can write
\begin{eqnarray*}
\bigcup_{k\geq j}\mathcal{H}_{k}\subseteq \mathcal{K}^{\prime}\implies \overline{\bigcup_{k\geq j}\mathcal{H}_{k}}\subseteq \overline{\mathcal{K}^{\prime}}=\mathcal{K}^{\prime},
\end{eqnarray*}
for $j\geq1$. As a closed subset of a compact set, $\overline{\bigcup_{k\geq j}\mathcal{H}_{k}}$, $j\geq1$, is itself compact. Therefore, $\{\overline{\bigcup_{k\geq j}\mathcal{H}_{k}}\}_{j\geq1}$ is a nonincreasing sequence of non-empty compact sets. By Cantor's Intersection Theorem, we have
\begin{eqnarray*}
\lim_{j\rightarrow\infty}\overline{\bigcup_{k\geq j}\mathcal{H}_{k}}\neq\emptyset.    
\end{eqnarray*}
Furthermore, as the intersection of compact sets, $\lim_{j\rightarrow\infty}\overline{\bigcup_{k\geq j}\mathcal{H}_{k}}$ is itself compact. We conclude that $\lim_{j\rightarrow\infty}\overline{\bigcup_{k\geq j}\mathcal{H}_{k}}$ is a non-empty compact set.
\end{proof}

\begin{proof}[Proof of Theorem~{\upshape\ref{theoAlgorithmFinal}}]
Since $\mathcal{H}^{PO}_{k}\subseteq\mathcal{H}_{k}$, $k\geq1$, then
\begin{eqnarray*}
 \bigcup_{k\geq j}\mathcal{H}^{PO}_{k}\subseteq \bigcup_{k\geq j}\mathcal{H}_{k} &\implies& \overline{\bigcup_{k\geq j}\mathcal{H}^{PO}_{k}}\subseteq \overline{\bigcup_{k\geq j}\mathcal{H}_{k}}\\
&\implies&\lim_{j\rightarrow\infty}\overline{\bigcup_{k\geq j}\mathcal{H}^{PO}_{k}}\subseteq \lim_{j\rightarrow\infty}\overline{\bigcup_{k\geq j}\mathcal{H}_{k}}\subseteq\mathcal{H},
\end{eqnarray*}
where the last inclusion is due to Lemma \ref{lemmaFiniteReplEqSet}. 

Next, let $p\in \lim_{j\rightarrow\infty}\overline{\bigcup_{k\geq j}\mathcal{H}^{PO}_{k}}\subseteq\mathcal{H}$. By definition of the nonincreasing set sequence $\{\overline{\bigcup_{k\geq j}\mathcal{H}^{PO}_{k}}\}_{j\geq1}$, we have, for all $j\geq1$, sequences $\{p^{jn}\}_{n\geq1}$ such that $p^{jn}\in \bigcup_{k\geq j}\mathcal{H}^{PO}_{k}$, $n\geq 1$, and $\lim_{n\rightarrow\infty}p^{jn}=p$. 

Let $j_{1}=1$ and notice that the convergence $\lim_{n\rightarrow\infty}p^{j_{1}n}=p$ in the product topology implies that there is $n_{1}\geq1$ such that $\Vert p^{j_{1}n_{1}}_{0}-p_{0}\Vert \leq 1$. Since $p^{j_{1}n_{1}}\in \bigcup_{k\geq j_{1}}\mathcal{H}^{PO}_{k}$, there is $k_{1}\geq j_{1}$ such that $p^{j_{1}n_{1}}\in \mathcal{H}^{PO}_{k_{1}}$. Then, for $j_{2}=k_{1}+1$, convergence $\lim_{n\rightarrow\infty}p^{j_{2}n}=p$ on the product topology implies the existence of $n_{2}\geq1$ such that
\begin{eqnarray*}
    \Vert p^{j_{2}n_{2}}_{0}-p_{0}\Vert &\leq& 1/2\\
    \Vert p^{j_{2}n_{2}}_{1}-p_{1}\Vert &\leq& 1/2.
\end{eqnarray*}

Since $p^{j_{2}n_{2}}\in \bigcup_{k\geq j_{2}}\mathcal{H}^{PO}_{k}$, there is $k_{2}\geq j_{2}>k_{1}$ such that $p^{j_{2}n_{2}}\in \mathcal{H}^{PO}_{k_{2}}$. Proceed with this algorithm in order to define a sequence $\{p^{m}\}_{m\geq1}$, with $p^{m}=p^{j_{m}n_{m}}$, $m\geq1$, and an increasing sequence $\{k_{m}\}_{m\geq1}$, with $\lim_{m\rightarrow\infty}k_{m}=+\infty$, such that $p^{m}\in \mathcal{H}^{PO}_{k_{m}}$ and $\Vert p^{m}_{t}-p_{t}\Vert \leq 1/m$, for $t<m$. This last inequality implies that $\lim_{m\rightarrow\infty}p^{m}_{t}=p_{t}$, $t\geq0$, and, therefore, $\lim_{m\rightarrow\infty}p^{m}=p$.

Suppose $p\notin\mathcal{H}^{PO}$. Proposition \ref{propNecAndSuffConditionParetoOpt} implies that $\lim_{t\rightarrow\infty}\sum_{h\in G_{t}}s^{h}(p_{t},p_{t+1})/H_{t}=0$. Therefore, there is $T\geq0$ such that $\sum_{h\in G_{T}}s^{h}(p_{T},p_{T+1})/H_{T}<\delta/2$. Since $\{p^{m}\}_{m\geq1}$ converges towards $p$, there is $M_{1}\geq1$ such that 
\begin{eqnarray*}
    \sum_{h\in G_{T}}\frac{s^{h}(p^{m}_{T},p^{m}_{T+1})}{H_{T}}\leq\delta,
\end{eqnarray*}
for $m\geq M_{1}$. Also, $\lim_{m\rightarrow\infty}k_{m}=+\infty$ implies that there is $M_{2}\geq1$ such that $k_{M_{2}}>T$. Let $M=\max\{M_{1},M_{2}\}$. Since the sequence $\{k_{m}\}_{m\geq1}$ is increasing, then $k_{M}\geq k_{M_{2}}>T$. 

Economy $\mathcal{E}_{k_{M}}$ has all generations until period $k_{M}>T$ identical to the ones from $\mathcal{E}$ and, in period $t=T$, particularly, the definition of $M_{1}$ implies the following inequality
\begin{eqnarray*}
    \sum_{h\in G_{T}}\frac{s^{h}(p^{M}_{T},p^{M}_{T+1})}{H_{T}}\leq\delta.
\end{eqnarray*}

Since $p^{M}\in \mathcal{H}_{k_{M}}$, Lemma \ref{lemmaSecondParetoOptEquilibria} implies that $p^{M}\notin \mathcal{H}^{PO}_{k_{M}}$, absurd. We conclude that $p\in\mathcal{H}^{PO}$ and, therefore, 
\begin{eqnarray*}
    \lim_{j\rightarrow\infty}\overline{\bigcup_{k\geq j}\mathcal{H}^{PO}_{k}}\subseteq \mathcal{H}^{PO}.
\end{eqnarray*}
Theorem \ref{theoExistenceOfParetoOptJME} implies $\mathcal{H}^{PO}_{k}\neq\emptyset$, $k\geq1$. From Definition \ref{defFiniteRep}, there is $\mathcal{K}^{\prime}\subset\mathbb{R}^{\infty}$ such that $\mathcal{H}_{k}\subseteq\mathcal{K}^{\prime}$, $k\geq1$, and we can write
\begin{eqnarray*}
\bigcup_{k\geq j}\mathcal{H}^{PO}_{k}\subseteq\bigcup_{k\geq j}\mathcal{H}_{k}\subseteq \mathcal{K}^{\prime}\implies \overline{\bigcup_{k\geq j}\mathcal{H}^{PO}_{k}}\subseteq \overline{\mathcal{K}^{\prime}}=\mathcal{K}^{\prime},
\end{eqnarray*}
for $j\geq1$. As a closed subset of a compact set, $\overline{\bigcup_{k\geq j}\mathcal{H}^{PO}_{k}}$, $j\geq1$, is itself compact. Therefore, $\{\overline{\bigcup_{k\geq j}\mathcal{H}^{PO}_{k}}\}_{j\geq1}$ is a nonincreasing sequence of non-empty compact sets. By Cantor's Intersection Theorem, we have 
\begin{eqnarray*}
    \lim_{j\rightarrow\infty}\overline{\bigcup_{k\geq j}\mathcal{H}^{PO}_{k}}\neq\emptyset.
\end{eqnarray*}
Furthermore, as the intersection of compact sets, $\lim_{j\rightarrow\infty}\overline{\bigcup_{k\geq j}\mathcal{H}^{PO}_{k}}$ is itself compact. We conclude that $\lim_{j\rightarrow\infty}\overline{\bigcup_{k\geq j}\mathcal{H}^{PO}_{k}}$ is a non-empty compact set.
\end{proof}

\printbibliography
\end{document}